 	\newcommand{\bs}[1]{\boldsymbol{#1}}
        \newcommand{\bSigma}{\bs{\Sigma}}
        \newcommand{\bi}{\mathbf{1}}
        \newcommand{\bS}{\hat{\bs{\Sigma}}}
        \newcommand{\bI}{\mathbf{I}}
        \newcommand{\bx}{\mathbf{Z}}
 	\newtheorem{thm}{\bf Theorem}
 	\newtheorem{theorem}[thm]{\bf Theorem}
 	\newtheorem{lemma}[thm]{\bf Lemma}
	\newtheorem{proposition}{\bf Proposition}
 	\newenvironment{proof}{\smallskip\noindent{\it Proof\ }\rm}
 	{\hspace*{\fill} $\Box$ \medskip}
 	\newenvironment{theorem*}[1]{\smallskip\noindent{\bf #1 \ }\rm}
 	{\medskip}
\begin{document}
 \title{Tests for the weights of the global minimum variance portfolio in a high-dimensional setting}
 \date{}
 \author[a]{{Taras Bodnar}}
 \author[b]{{Solomiia Dmytriv}}
 \author[c]{Nestor Parolya\thanks{Corresponding Author: Nestor Parolya. E-Mail: N.Parolya@tudelft.nl}}
 \author[b]{{Wolfgang Schmid}\thanks{\textcircled{c} 2019 IEEE.  Personal use of this material is permitted.  Permission from IEEE must be obtained for all other uses, in any current or future media, including reprinting/republishing this material for advertising or promotional purposes, creating new collective works, for resale or redistribution to servers or lists, or reuse of any copyrighted component of this work in other works.}}

 \affil[a]{{\footnotesize Department of Mathematics, Stockholm University, Stockholm, Sweden}}
 \affil[b]{{\footnotesize Department of Statistics, European University Viadrina, Frankfurt(Oder), Germany}}
 \affil[c]{{\footnotesize Delft Institute of Applied Mathematics, Delft University of Technology, The Netherlands}}

\markboth{IEEE Transactions of Signal Processing,~2019~forthcoming...}%
{Bodnar \MakeLowercase{\textit{et al.}}: Tests for the weights of the global minimum variance portfolio in a high-dimensional setting}
 
 \maketitle

 \begin{abstract} In this study, we construct two tests for the weights of the global minimum variance portfolio (GMVP) in a high-dimensional setting, namely, when the number of assets $p$ depends on the sample size $n$ such that $\frac{p}{n}\to c \in (0,1)$ as $n$ tends to infinity. In the case of a singular covariance matrix with rank equal to $q$ we assume that $q/n\to \tilde{c}\in(0, 1)$ as $n\to\infty$. The considered tests are based on the sample estimator and on the shrinkage estimator of the GMVP weights. We derive the asymptotic distributions of the test statistics under the null and alternative hypotheses. Moreover, we provide a simulation study where the power functions and the receiver operating characteristic curves of the proposed tests are compared with other existing approaches. We observe that the test based on the shrinkage estimator performs well  even for values of $c$ close to one.
\end{abstract}

\begin{IEEEkeywords}
Finance; Portfolio analysis; Global minimum variance portfolio; Statistical test; Shrinkage estimator; Random matrix theory; Singular covariance matrix.
\end{IEEEkeywords}

\section{Introduction} Financial markets have developed rapidly in recent years, and the amount of money invested in risky assets has substantially increased. Due to this, an investor must have knowledge of optimal portfolio proportions in order to receive a large expected return and, at the same time, to reduce the level of the risk associated with the investment decision.

Since \citet*{markowitz1952portfolio} presented his mean-variance analysis, many works about optimal portfolio selection have been published. However, investors are faced with some difficulties in the practical implementation of these investing theories since sampling error is present when unknown theoretical quantities are estimated.

 In classical asymptotic analysis, it is almost always assumed that the sample size increases while the size of the portfolio, namely the number of included assets $p$, remains constant (e.g., \citet*{korkie1981}, \citet*{okhrin2006distributional}). Nowadays, this case is often called standard asymptotics (see, \citet*{cam2000asymptotics}). Here, the traditional plug-in estimator of the optimal portfolio, the so-called sample estimator, is consistent and asymptotically normally distributed. However, in many applications, the number of assets in a portfolio is large in comparison to the sample size (i.e., the portfolio dimension $p$ and the sample size $n$ tend to infinity simultaneously) such that $\displaystyle \frac{p}{n}$ tends to the concentration ratio $c>0$. In this case, we are faced with so-called high-dimensional asymptotics or `Kolmogorov' asymptotics (see, \citet*{buhlmann2011statistics}, \citet*{bai2011estimating}, \citet*{cai2011analysis}, \citet*{bodnardetteparolya2019}). Whenever the dimension of the data is large, the classical limit theorems are no longer suitable because the traditional estimators result in a serious departure from the optimal estimators under high-dimensional asymptotics (\citet*{bai2010spectral}). These methods fail to provide consistent estimators of the unknown parameters of the asset returns, that are, the mean vector and the covariance matrix. Generally, the greater the concentration ratio $c$, the worse the sample estimators are. In these cases, new test statistics must be developed, and completely new asymptotic techniques must be applied for their derivations. Several studies deal with high-dimensional asymptotics in portfolio theory using results from random matrix theory (see, \citet*{FrahmRMT} and \citet*{LalouxRMT}). Recently, \citet*{bodnar2014estimation} presented a shrinkage-type estimator for the global minimum variance portfolio (GMVP) weights, and \citet*{bodnar2016okhrin} derived the optimal shrinkage estimator of the mean-variance portfolio.

 Testing the efficiency of a portfolio is a classical problem in finance. What looks good theoretically often suffers from the curse of uncertainty and dimensionality. Nevertheless, some approaches provide effective portfolio choice strategies including the GMVP, which by construction is a mixture of assets that minimizes the portfolio variance/volatility. The success of this strategy violates modern portfolio theory because it takes only the portfolio variance into account. But many empirical studies show that portfolios that focus on minimizing the volatility generate superior out-of-sample results (see, \citet{JagannathanMa2003, Clarke31, Ledoit110, Clarke10} among others). That is why it makes sense to provide a statistical test whether the current portfolio composition is different from the conventional GMVP taking into account both the uncertainty of the asset returns and the large dimensionality of the portfolio.

 The former literature focuses on the case of standard asymptotics or considers exact tests where both $p$ and $n$ are fixed. For example, \citet*{gibbons1989test} provided an exact $F$-test for the efficiency of a given portfolio, and \citet*{britten1999sampling} derived inference procedures on the efficient portfolio weights based on the application of linear regression. More recently, \citet*{bodnar2008test} presented a test for the general linear hypothesis of the portfolio weights in the case of elliptically contoured distributions.
 The contribution of this study is the derivation of statistical techniques for testing the efficiency of a portfolio under high-dimensional asymptotics. Two statistical tests are considered. Whereas the first approach is based on the asymptotic distribution of the test statistic suggested by \citet*{bodnar2008test} in a high-dimensional setting, the second test makes use of the shrinkage estimator of the GMVP weights and provides a powerful alternative to the existing methods. To the best of our knowledge, this analysis is the first time that the shrinkage approach has been applied to statistical test theory.

 It has to be mentioned that there is a direct link between the subject of the paper and classical methods in statistical signal processing. The equivalent of the GMVP portfolio in signal processing literature is the Capon or minimum variance spatial filter (see, \citet*{verdu1998} and \citet*{vantrees2002}). The estimation risk of the high-dimensional minimum variance beamformer has already been studied in \citet*{rubioetal2012} while its constrained versions were discussed in \citet*{LiStoicaWang2004}. The finite sample size effect on minimum variance filter was investigated by \citet*{MestreLaugunas2006}. An improved calibration of the precision matrix, i.e., the central object for constructing the GMVP portfolio, was discussed in \cite{Palomar2013}. For more literature on the applications of the random matrix theory to signal processing and portfolio optimization see, \cite{PalomarBook2016} and references therein.

The testing procedure we propose can be used not only for testing on the GMV portfolio but also for the inference on the shrinkage intensity, i.e., the level of shrinkage one needs to decrease the estimation risk of the GMVP. Our test is based on the shrinkage technique for GMVP weights and, thus, setting different shrinkage targets leads to different tests, which could be of independent interest for financial analysts. As an example, one could construct a test whether the GMVP portfolio is stochastically dominating a naive (equally weighted) portfolio, which has attracted much attention of financial scientists during the last decade (see, \citet{DeMiguel2009a, DeMiguel2009b}).

The paper is structured as follows. In Section II, we discuss the main results on distributional properties for optimal portfolio weights presented by \citet*{okhrin2006distributional}. In Section III.A the high-dimensional version of the test based on the test statistics given in \citet*{bodnar2008test} is proposed, while a new test based on the shrinkage estimator for the GMVP weights is derived in Section III.B. The asymptotic distributions of the test statistics under both the null hypothesis and the alternative hypothesis are obtained, and the corresponding power functions of both tests are presented. In Section III.C, new test procedures for the GMVP weights are proposed under a high-dimensional setting when the covariance matrix is singular. In Section IV, the power functions and the receiver operating characteristic curves of the proposed tests are compared with each other for different values of $c\in (0, 1)$. In our comparison study, a test of \citet*{glombeck} is considered as well. We conclude in Section V. All proofs are given in the Appendix.

\section{Estimation of Optimal Portfolio Weights}
We consider a financial market consisting of $p$ risky assets. Let $\textbf{X}_t$ denote the $p$-dimensional vector of the returns on risky assets at time $t$. Suppose that $E(\textbf{X}_t)=\boldsymbol{\mu}$ and $Cov(\textbf{X}_t)=\mathbf{\Sigma}$. The covariance matrix $\mathbf{\Sigma}$ is assumed to be positive definite.

Let us consider a single period investor who invests in the GMVP, one of the most commonly used portfolios (see, for example, \citet*{memmel2006estimating}, \citet*{frahm2010dominating}, \citet*{okhrin2006distributional}, \citet*{bodnar2008test}, \citet*{glombeck}, and others). This portfolio exhibits the smallest attainable portfolio variance $\mathbf{w}'\mathbf{\Sigma} \mathbf{w}$ under the constraint $\textbf{w}'\mathbf{1}=1$, where $\mathbf{1}=(1,\ldots,1)'$ denotes the $p$-dimensional vector of ones and $\mathbf{w}$ stands for the vector of portfolio weights. The weights of GMVP are given by
	\begin{equation} \label{GMVP weights}
\mathbf{w}_{GMVP}=\frac{\mathbf{\Sigma}^{-1}\mathbf{1}}{\mathbf{1}'\mathbf{\Sigma}^{-1}\mathbf{1}}.
\end{equation}

The global minimum variance portfolio is of fundamental interest in applications involving array signal processing. In the array processing literature it is the so-called minimum variance distortionless response (MVDR) spatial filter or beamformer defined as $\mathbf{w}_{MVDR}=\frac{\mathbf{\Sigma}^{-1}\textbf{s}}{\textbf{s}^H\mathbf{\Sigma}^{-1}\mathbf{s}}$ (see, e.g., \citet*{vantrees2002}, Chapter 6). The vector $\mathbf{s}\in\mathbbm{C}^p$ is the scalar signature vector associated with some waveform $s\in\mathbbm{C}$. Thus, the tests for the global minimum variance portfolio developed in this paper could directly be used for minimum variance beamformer just by a simple modification.

The practical implementation of the mean-variance framework in the spirit of \citet*{markowitz1952portfolio} relies on estimating the first two moments of the asset returns.
Because we do not know the true covariance matrix, it is usually replaced by its sample estimator, which is based on a sample of $n>p$ historical asset returns  $\textbf{X}_{1},\ldots, \textbf{X}_{n} $ given by
{\small
\begin{equation}
	\mathbf{\hat{\Sigma}}_{n}= \frac{1}{n-1}\sum_{j=1}^{n}\left(\textbf{X}_{j}-\mathbf{\bar{X}}_{n}\right)\left(\textbf{X}_{j}-\mathbf{\bar{X}}_{n}\right)'\, \textrm{with  } \, \mathbf{\bar{X}}_{n}=\frac{1}{n}\sum_{v=1}^{n}\textbf{X}_{v}.
	\end{equation}
}
	Replacing $\mathbf{\Sigma}$ in (\ref{GMVP weights}) by the sample estimator $\mathbf{\hat{\Sigma}}_{n}$, we obtain an estimator of the GMVP weights expressed as
	\begin{equation}
	\label{weights}
		\mathbf{\hat{w}}_{n}=\frac{\mathbf{\hat{\Sigma}}_{n}^{-1}\mathbf{1}}{\mathbf{1}'\mathbf{\hat{\Sigma}}_{n}^{-1}\mathbf{1}}.
	\end{equation}
Note that the estimator of the GMVP weights is exclusively a function of the estimator $\mathbf{\hat{\Sigma}}_{n} $ of the covariance matrix.
	
Assuming that the asset returns $\{\textbf{X}_{t}\}$ follow a stationary Gaussian process with mean $\boldsymbol{\mu}$ and covariance matrix $\mathbf{\Sigma}$, \citet*{okhrin2006distributional} proved that the vector of estimated optimal portfolio weights is asymptotically normal. Under the additional assumption of independence, they derived the exact distribution of $\mathbf{\hat{w}}_{n}$. \citet*{okhrin2006distributional} showed that the distribution of arbitrary $p-1$ components of $\mathbf{\hat{w}}_{n}$ is a $(p-1)$- dimensional $t$-distribution with $n-p+1$ degrees of freedom and
{
\begin{eqnarray*}\label{basic}
&&  E(\mathbf{\hat{w}}_{n})= \mathbf{w}_{GMVP},\\ && Cov(\mathbf{\hat{w}}_{n})=\mathbf{\Omega}= \frac{1}{n-p-1}\frac{\mathbf{Q}}{\mathbf{1}'\mathbf{\Sigma}^{-1}\mathbf{1}},\\
&&	\mathbf{Q}=\mathbf{\Sigma}^{-1}-\frac{\mathbf{\Sigma}^{-1}\mathbf{1}\mathbf{1}'\mathbf{\Sigma}^{-1}}{\mathbf{1}'\mathbf{\Sigma}^{-1}\mathbf{1}}\,.
\end{eqnarray*}
}
Consequently, if $\mathbf{\hat{w}}^{*}_{n}$ and $\mathbf{w}_{GMVP}^*$ are obtained by deleting the last element of $\mathbf{\hat{w}}_{n}$ and $\mathbf{w}_{GMVP}$ and if $\mathbf{\Omega}^{*}$ and $\mathbf{Q}^{*}$ consist of the first $(p-1)\times(p-1)$ elements of $\mathbf{\Omega}$ and $\mathbf{Q}$, then $\mathbf{\hat{w}}^*_{n}$ has a $(p-1)$-variate t-distribution with $n-p+1$ degrees of freedom and  parameters $\mathbf{w}^*_{GMVP}$ and $ \displaystyle \frac{1}{n-p+1}\frac{\mathbf{Q}^*}{\mathbf{1}'\mathbf{\Sigma}^{-1}\mathbf{1}}$. This distribution is denoted by $ \displaystyle \mathbf{\hat{w}}^*_{n}\sim t_{p-1}(n-p+1,\mathbf{w}^*_{GMVP} ,\frac{n-p-1}{n-p+1} \mathbf{\Omega}^*)$, since $ \displaystyle \frac{n-p-1}{n-p+1}    \mathbf{\Omega}^*=\frac{1}{n-p+1}\frac{\mathbf{Q}^*}{\mathbf{1}'\mathbf{\Sigma}^{-1}\mathbf{1}}$.

\section{Test Theory for the GMVP in High Dimensions}
	
At each time point, an investor is interested to know whether the portfolio he is holding coincides with the true GMVP or has to be reconstructed.  For that reason, we consider the following testing problem:
	 \begin{equation}\label{hypotheses}
     H_{0}:\mathbf{w}_{GMVP}=\mathbf{r}\qquad \textrm{against} \qquad H_{1}:\mathbf{w}_{GMVP}\not= \mathbf{r},
	 \end{equation}
where $\mathbf{r}$ with $\mathbf{r}' \mathbf{1}=1 $ is a known vector of, for example, the weights of the holding portfolio. Thus, this problem analyses whether the true GMVP weights are equal to some given values.

\citet*{bodnar2008test} analysed a general linear hypothesis for the GMVP portfolio weights and introduced an exact test assuming that the asset returns are independent and elliptically contoured distributed. Moreover, they derived the exact distribution of the test statistic under the null hypothesis and the alternative hypothesis.

The main focus of this study is high-dimensional portfolios. We want to consider the testing problem (\ref{hypotheses}) in a high-dimensional environment, that is, assuming that $\displaystyle \frac{p}{n}\to c \in (0,1)$ as $n \to \infty$. Note that, in this case, $H_0$ and $H_1$ depend on $n$ as well. Thus, it would be more precise to write $H_{0,n}: \textbf{w}_{GMVP,n}^* = \textbf{r}^*_n$ and $H_{1,n}: \textbf{w}_{GMVP,n}^* \neq \textbf{r}^*_n$. In the following, we will ignore this fact in order to simplify our notation. Moreover, it turns out that the sample covariance matrix is no longer a good estimator of the covariance matrix (see, \citet*{bai2010spectral, bai2011estimating, yao_zheng_bai_2015}). Indeed, the latter references reveal that if $p/n\to c\in (0, 1)$ and the covariance matrix is $\bSigma=\bI$ then the empirical spectral distribution of the eigenvalues of the sample covariance matrix $\mathbf{\hat{\Sigma}}_{n}$ is supported on $\left( (1-\sqrt{c})^2, (1+\sqrt{c})^2 \right)$. As a result, the larger $p/n$, the more  the eigenvalues spread out. It implies in terms of the $L_2$ norm that $\mathbf{\hat{\Sigma}}_{n}$ is not consistent.

For that reason, it is unclear how well the test of \citet*{bodnar2008test} behaves in that context. First, we study its behaviour under the high-dimensional asymptotics, and, after that, we propose an alternative test that makes use of the shrinkage estimator for the portfolio weights (cf. \citet*{bodnar2014estimation}).

In recent years, several studies have dealt with estimators of unknown portfolio parameters under high-dimensional asymptotics with applications to portfolio theory. \citet*{glombeck} formulated tests for the portfolio weights, variances of the excess returns, and Sharpe ratios of the GMVP for $c \in (0,1)$. \citet*{bodnar2014estimation} and \citet*{bodnar2016okhrin} derived the shrinkage estimators for the GMVP and for the mean-variance portfolio, respectively, under the Kolmogorov asymptotics for $c \in (0,\infty)$.

\subsection{A Test Based on the Mahalanobis Distance}

\citet*{bodnar2008test} proposed a test for a general linear hypothesis of the weights of the global minimum variance portfolio. Here, we are interested in the special case (\ref{hypotheses}). For this case, the test statistic is given by

\begin{equation}\label{Bodnar_Schmid}
T_{n}=\frac{n-p}{p-1}(\mathbf{1}'\hat{\mathbf{\Sigma}}^{-1}\mathbf{1})(\mathbf{\hat{w}}^{*}_{n}-\mathbf{r}^{*})'(\mathbf{\hat{Q}}_n^*)^{-1}(\mathbf{\hat{w}}^{*}_{n}-\mathbf{r}^{*}),
\end{equation}
 where $\hat{\mathbf{Q}}_n^*$ consists of the first $(p-1)\times(p-1)$ elements of $\hat{\mathbf{Q}}_n=\hat{\mathbf{\Sigma}}_n^{-1}-\displaystyle\hat{\mathbf{\Sigma}}_n^{-1}\mathbf{1}
 \mathbf{1}'\hat{\mathbf{\Sigma}}_n^{-1}/\mathbf{1}'\hat{\mathbf{\Sigma}}_n^{-1}\mathbf{1}$ and the number of assets $p$ in the portfolio is fixed. It was shown that $T_n$ has a central $F$-distribution with $p-1$ and $n-p$ degrees of freedom under the null hypothesis, i.e., $T_n\sim F_{p-1,n-p}$. Moreover, the density of $T_n$ under the alternative hypothesis $H_1$ is equal to
{\footnotesize
\begin{eqnarray}\label{lambda_exact}
  f_{T_n}(x) & = & f_{p-1,n-p}(x) \; (1 + \lambda)^{-(n-1)/2}  \nonumber\\
             &  \times &  _2F_1\left( \frac{n-1}{2}, \frac{n-1}{2}, \frac{p-1}{2}; \frac{(p-1)x}{n-p+(p-1)x} \frac{\lambda}{1+\lambda} \right),
\end{eqnarray}
}
where
\begin{equation}\label{lambda}
\lambda =  \mathbf{1}' \mathbf{\Sigma}^{-1} \mathbf{1} (\mathbf{w}^*_{GMVP} - \mathbf{r}^*)'
(\mathbf{Q}^*)^{-1} (\mathbf{w}^*_{GMVP} - \mathbf{r}^*)
\end{equation}
and $_2F_1$ stands for the hypergeometric function (see, \citet*{abramowitz+stegun}, chap. 15), that is,
\[ _2F_1(a,b,c; x) = \frac{\Gamma(c)}{\Gamma(a) \Gamma(b)} \sum_{i=0}^\infty \frac{\Gamma(a+i) \Gamma(b+i)}{\Gamma(c+i)} \frac{z^i}{i!} . \]

Thus, the exact power function of the test is given by
\begin{equation}\label{G}
  G(\lambda,p,n)=1-\int_{f_{1-\alpha;p-1,n-p}}^{\infty}f_{T_n}(x) \mbox{d}x\,,
\end{equation}
where $f_{1-\alpha;p-1,n-p}$ denotes the $(1-\alpha)$ quantile from the central $F$-distribution with $p-1$ and $n-p$ degrees of freedom. Note that this result is also valid for matrix-variate elliptically contoured distributions (see, \citet*{bodnar2008test}). On the other hand, several computational difficulties appear when the power function of the test is calculated for large values of $p$ and $n$, since doing so involves a hypergeometric function whose computation is very challenging for large values of  $p$ and $n$. In order to deal with this problem, we derive the asymptotic distribution of $T_n$ in a high-dimensional setting. This result is given in Theorem \ref{th1}. The proof is in the Appendix. Since $\lambda$ depends on $p$ (i.e., on $n$) through $\mathbf{\Sigma}$, we write $\lambda_n$ in the rest of the paper.

\begin{theorem}\label{th1}
	Let $p\equiv p(n)$ and $c_n = \frac{p}{n} \to c \in (0,1)$. Assume that $\{\mathbf{X}_t\}$ is a sequence of independent and normally distributed $p$-dimensional random vectors with mean $\boldsymbol{\mu}$ and covariance matrix $\mathbf{\Sigma}$, which is assumed to be positive definite.
	 Let
 	\begin{equation*}
		C^2_n= 	2+2\frac{\lambda_n^2}{c}+4\frac{\lambda_n}{c}+2\frac{c}{1-c}\left(1+\frac{\lambda_n}{c}\right)^2.
		\end{equation*}
	Then, it holds that
\begin{equation*}
\sqrt{p-1}\left(\frac{T_n-1-\lambda_n\frac{n-1}{p-1}}{C_n}\right) \stackrel{d}{\to} \mathcal{N}\left(0, 1\right)
\end{equation*}
for $p/n\to c\in (0,1)$ as $n\to \infty$. Under the null hypothesis, $\sqrt{p-1} \; ( T_n - 1 ) \stackrel{d}{\to} \mathcal{N}\left(0, {2}/{(1-c)}\right)$ for $p/n\to c\in (0,1)$ as $n\to \infty$.
\end{theorem}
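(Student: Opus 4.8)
The plan is to reduce the statement about $T_n$ to a statement about a single quadratic-form-type statistic whose asymptotics can be controlled via standard random matrix theory tools (the trace of powers of Wishart matrices and known CLTs for linear spectral statistics). First I would recall the distributional representation behind the exact density~\eqref{lambda_exact}: conditionally on the data, the statistic $\frac{p-1}{n-p}T_n$ is (up to the known scaling) distributed as a ratio involving a noncentral $F$ or, equivalently, $T_n$ admits the stochastic representation
\begin{equation*}
T_n \stackrel{d}{=} \frac{n-p}{p-1}\,\frac{\|\bs\xi + \bs\delta_n\|^2}{u/(n-p)}\,,
\end{equation*}
where $\bs\xi\sim\mathcal N(\mathbf 0,\bI_{p-1})$, $u\sim\chi^2_{n-p}$ is independent of $\bs\xi$, and $\|\bs\delta_n\|^2=\lambda_n$ is the exact noncentrality parameter from~\eqref{lambda}. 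This representation is exactly what makes the hypergeometric density~\eqref{lambda_exact} appear, and under the null $\bs\delta_n=\mathbf 0$ it collapses to the central $F_{p-1,n-p}$ already stated. I would extract this representation either directly from the \citet*{bodnar2008test} derivation or by re-deriving it from the $t$-distribution of $\mathbf{\hat w}^*_n$ recalled in Section~II.

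Next I would analyse numerator and denominator separately in the high-dimensional regime $c_n=p/n\to c\in(0,1)$. For the numerator, $\|\bs\xi+\bs\delta_n\|^2 = \chi^2_{p-1} + 2\bs\delta_n'\bs\xi + \lambda_n$ where $\chi^2_{p-1}\sim\|\bs\xi\|^2$; centering gives $\|\bs\xi+\bs\delta_n\|^2 - (p-1) - \lambda_n = (\chi^2_{p-1}-(p-1)) + 2\bs\delta_n'\bs\xi$, a sum of a centered chi-square (variance $2(p-1)$) and an independent Gaussian term with variance $4\lambda_n$, so after dividing by $\sqrt{p-1}$ it is asymptotically $\mathcal N(0, 2 + 4\lambda_n/c_n + \lambda_n^2\cdot(\text{correction}))$ — the exact bookkeeping of the $\lambda_n^2$ term requires being a little careful about whether $\lambda_n$ stays bounded or grows like $p$; the hypotheses implicitly keep $\lambda_n = O(1)$ type behaviour through the local alternative, and the $2\lambda_n^2/c$ term in $C_n^2$ is what the fourth-moment contribution of the cross term $\bs\delta_n'\bs\xi$ squared produces once one accounts for its interaction with the denominator fluctuation. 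For the denominator, $u/(n-p)\to 1$ a.s. with $\sqrt{n-p}(u/(n-p)-1)\stackrel{d}{\to}\mathcal N(0,2)$, and this is the source of the $\frac{c}{1-c}(1+\lambda_n/c)^2$ term: writing $T_n = \frac{n-p}{p-1}\cdot N \cdot (1 + (n-p)^{-1}\cdot(\text{denominator CLT term}))^{-1}$ and expanding, the mean term $1+\lambda_n\frac{n-1}{p-1}$ gets multiplied by a $(1+O_p(1/\sqrt{n}))$ factor whose fluctuation, scaled by $\sqrt{p-1}$, contributes variance $\frac{p-1}{n-p}\cdot(1+\lambda_n/c_n)^2\cdot 2 \to \frac{2c}{1-c}(1+\lambda_n/c)^2$. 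I would then combine the two (asymptotically independent) sources via a delta-method / Slutsky argument to assemble $C_n^2$ as the sum $2 + 2\lambda_n^2/c + 4\lambda_n/c + \frac{2c}{1-c}(1+\lambda_n/c)^2$.

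Finally, the null case is just the specialization $\lambda_n\equiv 0$: then the numerator contributes variance $2$, the denominator contributes $\frac{2c}{1-c}$, and the total is $2 + \frac{2c}{1-c} = \frac{2}{1-c}$, giving $\sqrt{p-1}(T_n-1)\stackrel{d}{\to}\mathcal N(0,2/(1-c))$. The main obstacle I anticipate is not any single limit theorem but the careful simultaneous expansion: $T_n$ is a ratio in which both the mean ($\sim\lambda_n n/p$) and the scale matter, and the cross term $\bs\delta_n'\bs\xi$ is correlated with nothing but still enters quadratically after the ratio expansion, so one must track its second moment and its product with the denominator fluctuation to order $1/\sqrt{p}$ without dropping the $2\lambda_n^2/c$ piece. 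A clean way to manage this is to condition on $u$, obtain a conditional CLT for the numerator with $u$-dependent centering and scaling, verify the conditional mean and variance converge (in probability, as functions of $u$) to the claimed expressions after plugging the denominator CLT, and invoke a conditional-CLT-to-unconditional-CLT lemma. I would also double-check the $\sqrt{p-1}$ versus $\sqrt{n}$ normalization is consistent, since $\sqrt{p-1}=\sqrt{c_n}\,\sqrt{n}(1+o(1))$ and the factors of $c$ in $C_n^2$ come precisely from this conversion.
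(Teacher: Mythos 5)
There is a genuine gap, and it sits at the very first step. Your stochastic representation takes the noncentrality of the numerator to be the deterministic quantity $\lambda_n$ from \eqref{lambda}, i.e.\ $T_n\stackrel{d}{=}\frac{n-p}{p-1}\|\boldsymbol{\xi}+\boldsymbol{\delta}_n\|^2/u$ with $\|\boldsymbol{\delta}_n\|^2=\lambda_n$ fixed. That is not the correct representation: because the quadratic form in $T_n$ is built with the \emph{estimated} matrix $\hat{\mathbf{Q}}^*_n$, the conditional distribution of the numerator given $\hat{\mathbf{Q}}^*_n$ is noncentral $\chi^2_{p-1}$ with the \emph{random} noncentrality $\lambda_n(\hat{\mathbf{Q}}^*_n)=(n-1)(\mathbf{1}'\mathbf{\Sigma}^{-1}\mathbf{1})(\mathbf{w}^*_{GMVP}-\mathbf{r}^*)'(\hat{\mathbf{Q}}^*_n)^{-1}(\mathbf{w}^*_{GMVP}-\mathbf{r}^*)$, and since $(n-1)(\hat{\mathbf{Q}}^*_n)^{-1}\sim W_{p-1}(n-1,(\mathbf{Q}^*)^{-1})$ this noncentrality is distributed as $\lambda_n\xi_3$ with $\xi_3\sim\chi^2_{n-1}$ (this is exactly the paper's Lemma \ref{lem0}, eq.\ \eqref{stoch_pres_Tn}). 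With your fixed-noncentrality version the centering of $T_n$ would be $1+\lambda_n/(p-1)$, not the $1+\lambda_n\frac{n-1}{p-1}$ appearing in the theorem, and the numerator fluctuation would contribute only $2+4\lambda_n/(p-1)\to 2$ to the variance at the $\sqrt{p-1}$ scale, so neither $4\lambda_n/c$ nor $2\lambda_n^2/c$ could ever appear; your suggestion that $2\lambda_n^2/c$ comes from ``the fourth moment of the cross term and its interaction with the denominator fluctuation'' does not work, as those contributions are $o(1)$ after the $\sqrt{p-1}$ normalization. In the correct representation the $2\lambda_n^2/c$ term is simply the variance of $\lambda_n\frac{n-1}{p-1}\sqrt{p-1}\bigl(\frac{\xi_3}{n-1}-1\bigr)$, and $4\lambda_n/c$ comes from the Gaussian cross term $2\sqrt{\lambda_n\xi_3/(p-1)}\,\omega_1$, whose conditional variance $4\lambda_n\xi_3/(p-1)\to 4\lambda_n/c$; the $\frac{2c}{1-c}(1+\lambda_n/c)^2$ piece from the $\chi^2_{n-p}$ denominator you do get right, and the null case ($\lambda_n=0$) of your argument is fine since the representation then genuinely reduces to a central $F$.

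A further sign you should have caught: the exact alternative density \eqref{lambda_exact} involves $(1+\lambda)^{-(n-1)/2}$ and a ${}_2F_1$ with two parameters $\frac{n-1}{2}$, whereas a genuine noncentral $F$ with deterministic noncentrality has a ${}_1F_1$ (Poisson-mixture) density; the ${}_2F_1$ is precisely the fingerprint of mixing a noncentral $\chi^2$ over a $\chi^2_{n-1}$-distributed noncentrality. Once the representation with random noncentrality $\lambda_n\xi_3$ is in place, the remainder of your plan (separate CLTs for the independent $\chi^2$ blocks and the Gaussian cross term, combined by Slutsky) coincides with the paper's proof of Theorem \ref{th1} and goes through without the conditional-CLT machinery you propose. (Also note a minor slip: your ratio should be $\|\boldsymbol{\xi}+\boldsymbol{\delta}_n\|^2/u$, not $\|\boldsymbol{\xi}+\boldsymbol{\delta}_n\|^2/(u/(n-p))$, given the prefactor $\frac{n-p}{p-1}$.)
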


The results of Theorem \ref{th1} lead to an asymptotic expression of the power function given by
\begin{eqnarray}
&&P\left(  \frac{\sqrt{p-1}\left(T_n-1\right)}{\sqrt{2 /(1-c)}} > z_{1-\alpha}\right) \nonumber \\
 & = &
1 - P\Bigg(  \frac{\sqrt{p-1}\left(T_n-1-\lambda_n\frac{n-1}{p-1}\right)}{C_n} \nonumber\\
&\le&  \frac{\sqrt{\frac{2}{(1-c)}} z_{1-\alpha} - \frac{\sqrt{p-1}\lambda_n(n-1)}{p-1}}{C_n} \Bigg) \nonumber \\
& \approx & 1 - \Phi\left(\frac{\sqrt{2/(1-c)} z_{1-\alpha} - \sqrt{p-1}\frac{\lambda_n}{c}}{C_n} \right) \label{as_power_Tn} ,
\end{eqnarray}
where $z_{1-\alpha}$ is the $(1-\alpha)$-quantile of the standard normal distribution.

In Figure \ref{fig:fig1}, we plot the power function \eqref{as_power_Tn} as a function of $\lambda_n$ for several values of $c$ and $n$ (solid line). In addition, the empirical power of the test is shown for the same values of $c$ and $n$ (dashed line) and is equal to the relative number of rejections of the null hypothesis obtained via a simulation study. It is remarkable that, following the proof of Lemma \ref{lem0}, the considered simulation study can be considerably simplified. Instead of generating a $p \times n$ random matrix of asset returns in each simulation run, we simulate four independent random variables from standard univariate distributions and then compute the statistic $T_n$ for the given value of $\lambda_n$ following the stochastic representation \eqref{stoch_pres_Tn} in the Appendix. Namely, the simulation study is performed in the following way:
\begin{itemize}
\item[(i)] Generate four independent random variables $\omega_1^{(b)}\sim \mathcal{N}(0,1)$, $\xi_2^{(b)}\sim \chi^2_{n-p}$, $\xi_3^{(b)}\sim \chi^2_{n-1}$, and $\xi_4^{(b)}\sim \chi^2_{p-2}$
\item[(ii)] For fixed $\lambda_n$, compute
\[T_n^{(b)} \stackrel{d}{=} \frac{n-p}{p-1} \frac{(\sqrt{\lambda_n \xi_3^{(b)}}+\omega_1^{(b)})^2+\xi_4^{(b)}}{\xi_2^{(b)}}\]
\item[(iii)] Repeat steps (i) and (ii) for $b=1,...,B$, where $B$ is the number of independent repetitions and calculate the empirical power by
\begin{equation}
\hat{P}=\frac{1}{B} \sum_{b=1}^{B} \mathds{1}_{(z_{1-\alpha},+\infty)}\left( \frac{\sqrt{p-1}\left(T_n^{(b)}-1\right)}{\sqrt{2 /(1-c)}}\right),
\end{equation}
where $\mathds{1}_{\mathcal{A}}(.)$ is the indicator function of the set $\mathcal{A}$.
\end{itemize}
In Figure \ref{fig:fig1}, we observe a good performance of the asymptotic approximation of the power function. This approximation works almost perfectly for both small and large values of $c$.
	\vspace{-0.2cm}
	
		\begin{figure}[th!]
		\centering
		\begin{tabular}{cc}
			\includegraphics[width=0.45\linewidth]{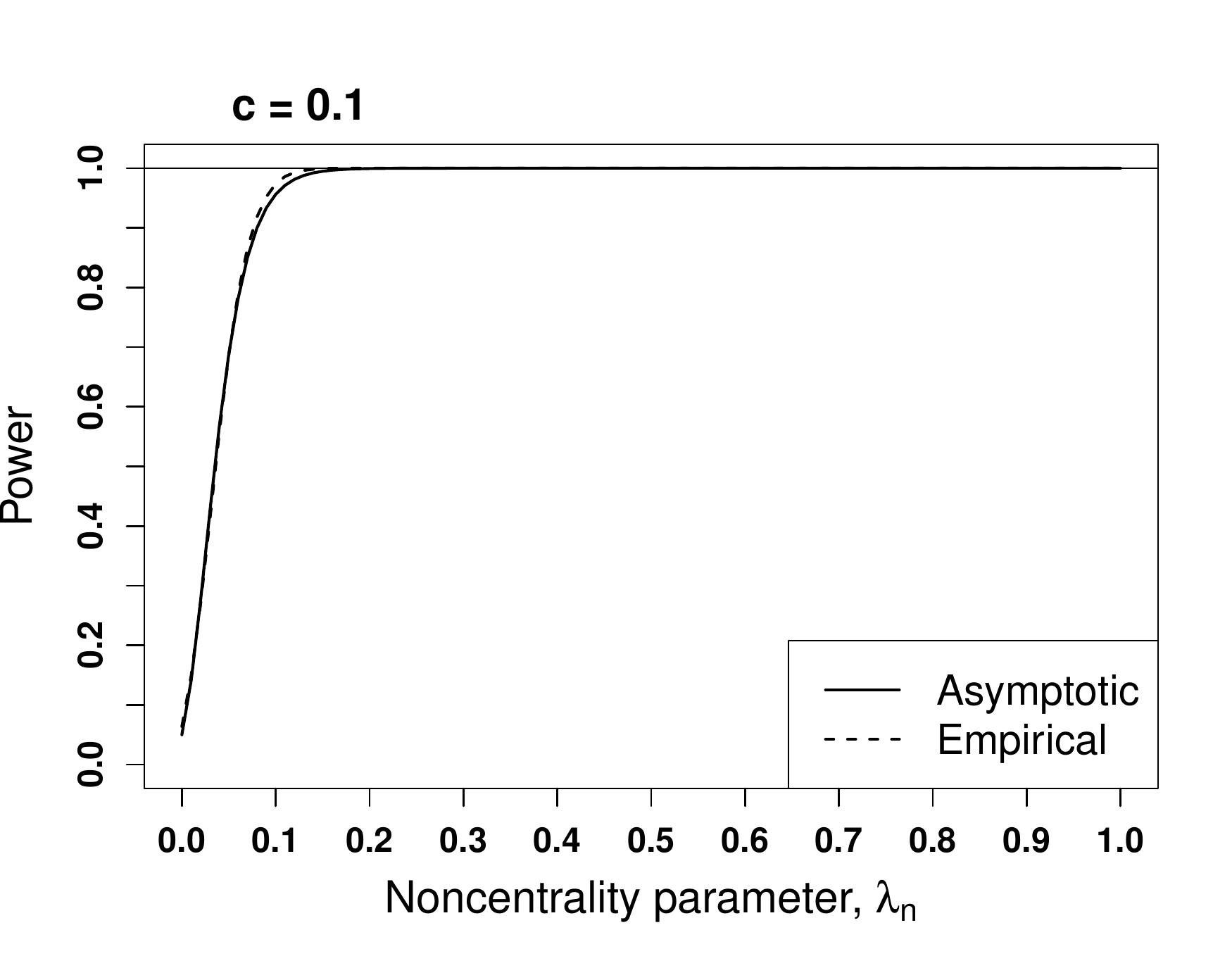}	\includegraphics[width=0.45\linewidth]{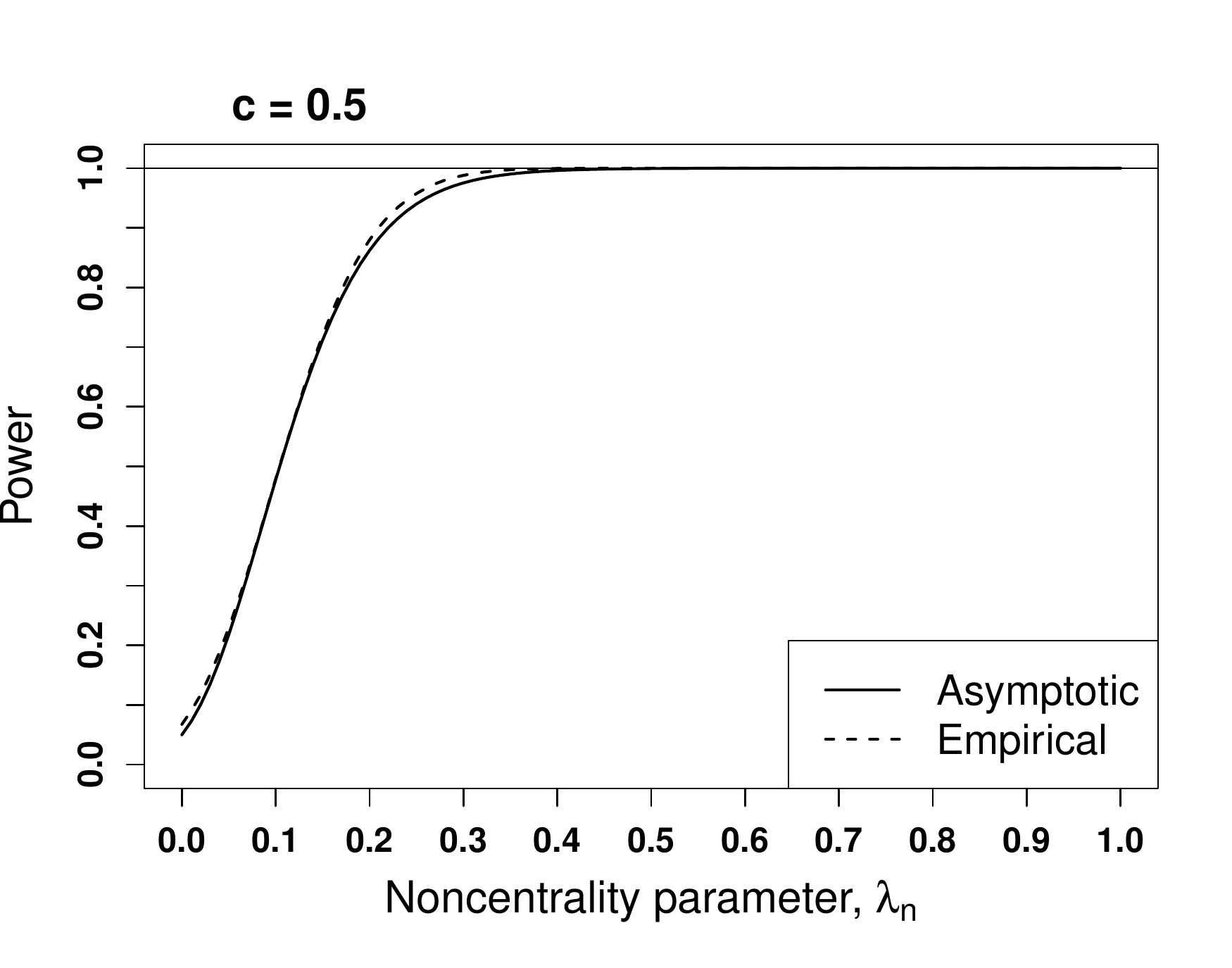}\\
		
			\includegraphics[width=0.45\linewidth]{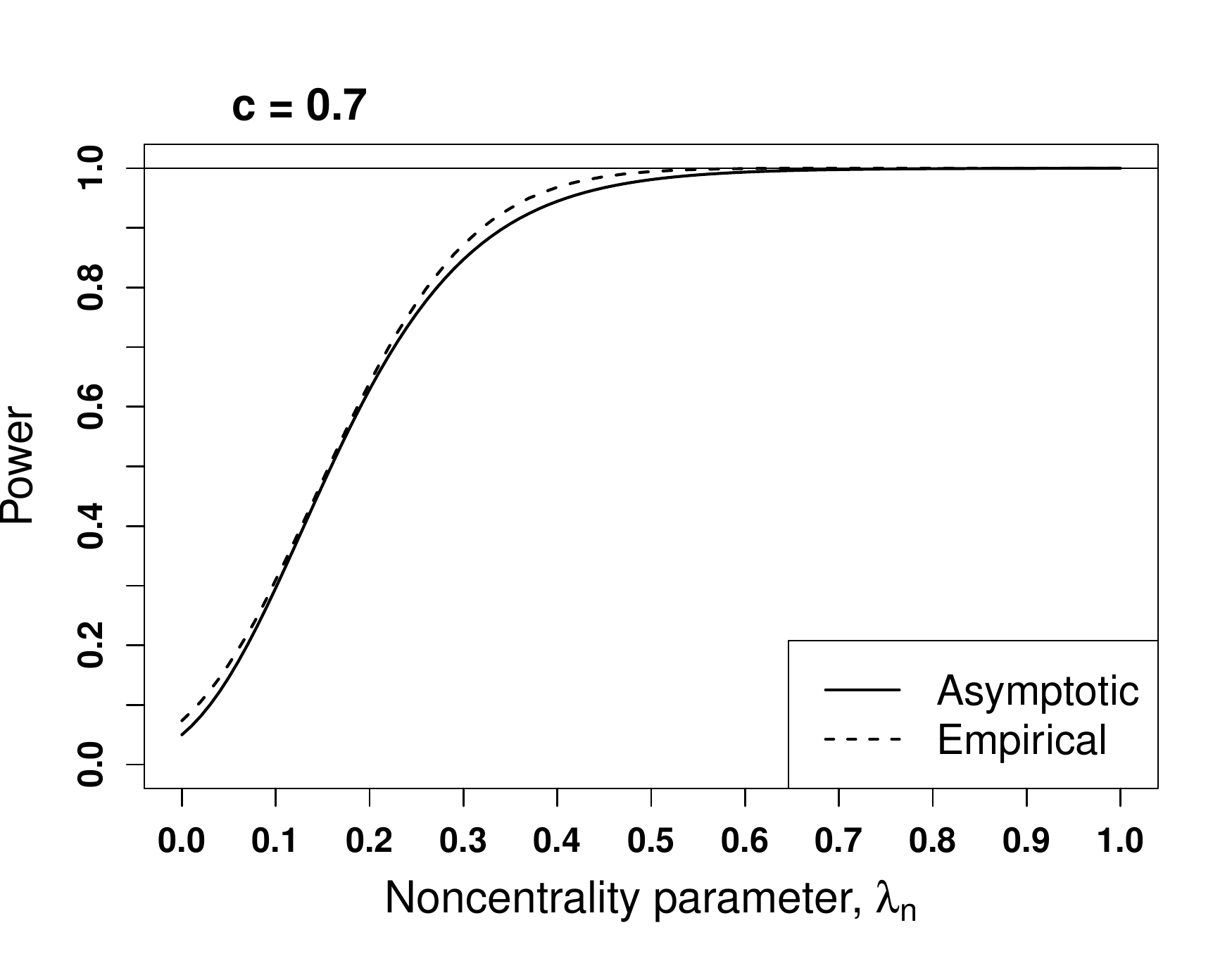}	\includegraphics[width=0.45\linewidth]{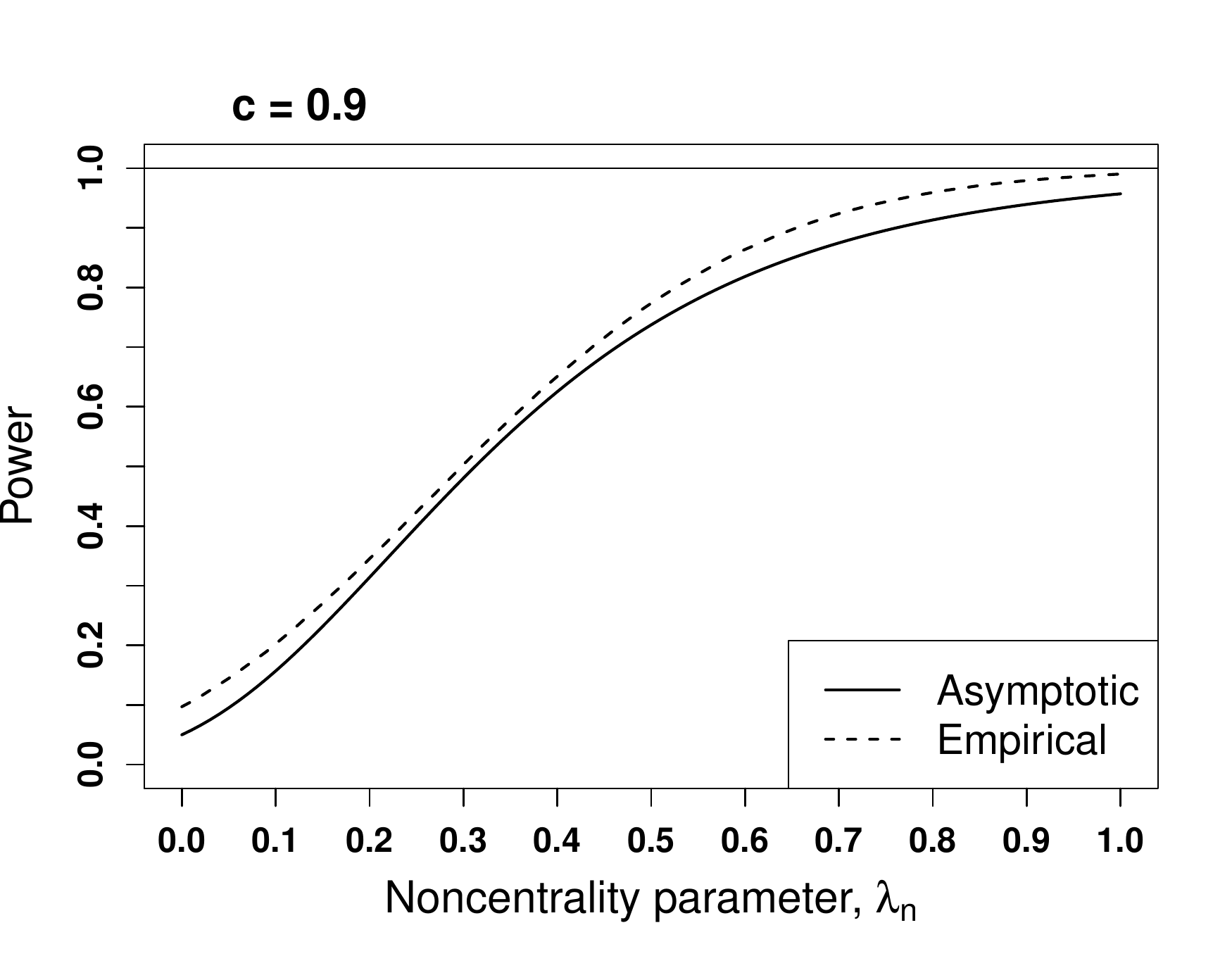}
			
		\end{tabular}
	\caption{Asymptotic power function (solid line) vs. empirical power function (dashed line) for the test problem in \eqref{hypotheses} as functions of $\lambda_n$ for various values of $c \in \{0.1, 0.5, 0.7, 0.9\}$ and $n=500$. The nominal significance level of the test (the probability of a type I error) is $\alpha=5\%$.}
\label{fig:fig1}	
\end{figure}	
	
\subsection{Test Based on a Shrinkage Estimator}

In most cases, the unknown parameters of the asset return distribution are replaced by their sample counterparts when an optimal portfolio is constructed. In recent years, however, other types of estimators, such as shrinkage estimators, have been discussed as well (see, \citet*{Okhrin2007} and \citet*{bodnar2014estimation}). The shrinkage methodology was introduced by \citet*{stein1956}. His results were extended by \citet*{efron1976} to the case in which the covariance matrix is unknown. The shrinkage methodology can be applied to the expected asset returns (e.g., \citet*{jorion1986bayes}) and the covariance matrix (\citet*{BodnarGuptaParolya2014, BodnarGuptaParolya2016}). Both of these applications appear
to be very successful in reducing damaging influences on the portfolio selection. A shrinkage estimator was applied directly to the portfolio weights by \citet*{golosnoy2007multivariate} and \citet*{okhrin2008estimation}. They showed that the shrinkage estimators of the portfolio weights lead to a decrease in the variance of the portfolio weights and to an increase in utility.

\citet*{bodnar2014estimation} proposed a new shrinkage estimator for the weights of the GMVP that turns out to provide better results in the high-dimensional case than the existing estimators do. This estimator is based on a convex combination of the sample estimator of the GMVP weights and an arbitrary constant vector expressed as

\begin{equation}
\mathbf{\hat{w}}_{n;GSE}=\alpha_n \frac{\mathbf{\hat{\Sigma}}_{n}^{-1}\mathbf{1}}{\mathbf{1}'\mathbf{\hat{\Sigma}}_{n}^{-1}\mathbf{1}}+(1-\alpha_n)\mathbf{b}_n \qquad \mbox{with} \qquad \mathbf{b}'_n\mathbf{1}=1.
\end{equation}
Here, the index GSE stands for `general shrinkage estimator'. It is assumed that $\mathbf{b}_n \in \mathbb{R}^p$ is a vector of constants such that $\mathbf{b}_n'\mathbf{\Sigma}\mathbf{b}_n$ is uniformly bounded. \citet*{bodnar2014estimation} proposed determining the optimal shrinkage intensity $\alpha_n$ for a given target portfolio $\mathbf{b}_n$ such that the out-of-sample risk is minimal, that is,
\begin{equation}
L=(\mathbf{\hat{w}}_{n;GSE}-\mathbf{w}_{GMVP})'\mathbf{\Sigma}(\mathbf{\hat{w}}_{n;GSE}-\mathbf{w}_{GMVP})
\end{equation}
is minimized with respect to $\alpha_n$. This result leads to

\begin{equation}
\displaystyle
\hat{\alpha}_n=\frac{\left(\mathbf{b}_n - \mathbf{\hat{w}}_n \right)'\mathbf{\Sigma}\,\mathbf{b}_n }{\left(\mathbf{b}_n - \mathbf{\hat{w}}_n \right)'\mathbf{\Sigma}\left(\mathbf{b}_n -  \mathbf{\hat{w}}_n \right)}.
\end{equation}

The authors showed that the optimal shrinkage intensity $\hat{\alpha}_n$ is almost surely asymptotically equivalent to a non-random quantity $\tilde{\alpha}_n \in \left[ 0,1 \right] $ when $\frac{p}{n}\to c \in (0,1)$ as $n\to \infty $, which is given by
\begin{equation}\label{talp}
\tilde{\alpha}_n =\frac{(1-c) R_{\mathbf{b}_n}}{c+(1-c) R_{\mathbf{b}_n}} ,
\end{equation}
where
\begin{equation}\label{Rbn}
 \displaystyle R_{\mathbf{b}_n}=\frac{\sigma_{\mathbf{b}_n}^2-\sigma^2_{n}}{\sigma^2_{n}} = \mathbf{1}' \mathbf{\Sigma}^{-1} \mathbf{1} \mathbf{b}_n' \mathbf{\Sigma} \mathbf{b}_n - 1
 \end{equation}
is the relative loss of the target portfolio $\mathbf{b}_n$, $\displaystyle \sigma^2_{\mathbf{b}_n}=\mathbf{b}'_n\mathbf{\Sigma}\mathbf{b}_n$ is the variance of the target portfolio, and $\displaystyle \sigma^2_{n}=1/\mathbf{1}'\mathbf{\Sigma}^{-1}\mathbf{1}$ is the variance of the GMVP.
This result provides an estimator of the optimal shrinkage intensity given by
\begin{equation} \label{estim.intens}
\hat{\tilde{\alpha}}_n=\frac{(1-\frac{p}{n}) \hat{R}_{\mathbf{b}_n}}{\frac{p}{n}+(1-\frac{p}{n}) \hat{R}_{\mathbf{b}_n}}, \mbox{   } \hat{R}_{\mathbf{b}_n}=(1-\frac{p}{n})\mathbf{b}_n'\mathbf{\hat{\Sigma}}_{n}\mathbf{b}_n \mathbf{1}'\mathbf{\hat{\Sigma}}_{n}^{-1}\mathbf{1}-1.
\end{equation}
Using the estimated shrinkage intensity $\hat{\tilde{\alpha}}_n$, the corresponding portfolio weights are given by
\begin{equation}
\mathbf{\hat{w}}_{n;ESI}=\hat{\tilde{\alpha}}_n \mathbf{\hat{w}}_{n} +(1-\hat{\tilde{\alpha}}_n)\mathbf{b}_n .
\end{equation}

\citet*{bodnar2014estimation}
proved that the ratio $\dfrac{\hat{\tilde{\alpha}}_n}{\tilde{\alpha}_n}\to1$ if $\frac{p}{n} \to c \in (0,1)$ as $n\to \infty$. In Theorem \ref{th2}, we show that the estimated intensity is asymptotically normally distributed. The proof of Theorem \ref{th2} is given in the Appendix. 

\vspace{0.3cm}

\begin{theorem}\label{th2}
	Let $p\equiv p(n)$ and $c_n = \frac{p}{n} \to c \in (0,1)$. Assume that $\{\mathbf{X}_t\}$
	is a sequence of independent and normally distributed $p$-dimensional random vectors with mean $\boldsymbol{\mu}$ and covariance matrix $\mathbf{\Sigma}$,
	which is assumed to be positive definite. Then

\begin{equation}\label{talp_asym}
\sqrt{n}\frac{ \hat{\tilde{\alpha}}_n -A_n}{B_n} \stackrel{d}{\to} \mathcal{N}(0,1) ~~\text{for $p/n\to c\in (0,1)$ as $n\to \infty$,}
\end{equation}
 where {\small
	 \begin{eqnarray*}
	 A_n & = & \frac{(1-c_n) R_{\mathbf{b}_n}}{c_n+(1-c_n) R_{\mathbf{b}_n}},  \\
         B_n^2 & = & 2 \frac{c_n^2 (1-c_n)(2-c_n) (R_{\mathbf{b}_n}+1)}{((c_n + R_{\mathbf{b}_n}(1 - c_n))^4}\left(R_{\mathbf{b}_n}+\frac{c_n}{2-c_n}\right).
	 \end{eqnarray*}
}
\end{theorem}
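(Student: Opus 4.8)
The plan is to write $\hat{\tilde\alpha}_n=g_n(\hat R_{\mathbf b_n})$ with $g_n(r)=\tfrac{(1-c_n)r}{c_n+(1-c_n)r}$, establish the $\sqrt n$-asymptotic normality of the scalar $\hat R_{\mathbf b_n}$ through an exact stochastic representation, and then transport it to $\hat{\tilde\alpha}_n$ by the delta method.

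\emph{Step 1 (stochastic representation).} The statistic $\hat R_{\mathbf b_n}$ is built from the two quadratic forms $\mathbf b_n'\hat{\mathbf{\Sigma}}_n\mathbf b_n$ and $\mathbf 1'\hat{\mathbf{\Sigma}}_n^{-1}\mathbf 1$, and $(n-1)\hat{\mathbf{\Sigma}}_n\sim W_p(n-1,\mathbf{\Sigma})$. I would first reduce to $\mathbf{\Sigma}=\mathbf I$ by rotational invariance; then the two forms depend only on the two-dimensional span of $\mathbf a=\mathbf{\Sigma}^{-1/2}\mathbf 1$ and $\mathbf d=\mathbf{\Sigma}^{1/2}\mathbf b_n$, whose geometry is pinned down by $\mathbf a'\mathbf d=\mathbf b_n'\mathbf 1=1$ and $\|\mathbf a\|^2\|\mathbf d\|^2=\mathbf 1'\mathbf{\Sigma}^{-1}\mathbf 1\,\mathbf b_n'\mathbf{\Sigma}\mathbf b_n=R_{\mathbf b_n}+1$. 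Partitioning the Wishart matrix along this plane and its orthogonal complement and using the standard facts that the $2\times2$ Schur complement $W_{AA\cdot B}\sim W_2(n-p+1,\mathbf I_2)$ is independent of $W_{AB}W_{BB}^{-1}W_{BA}\sim W_2(p-2,\mathbf I_2)$, together with $V_{11}=V_{11\cdot2}+V_{12}^2/V_{22}$ for a $2\times2$ Wishart, yields the exact representation
\[
\hat R_{\mathbf b_n}\stackrel{d}{=}(1-c_n)\,\frac{\xi_b+(\zeta+\sqrt{R_{\mathbf b_n}\,\xi_a})^2+(1+R_{\mathbf b_n})\,\eta}{\xi_b}-1 ,
\]
with independent $\xi_a\sim\chi^2_{n-p+1}$, $\xi_b\sim\chi^2_{n-p}$, $\eta\sim\chi^2_{p-2}$, $\zeta\sim\mathcal N(0,1)$ (for $p\ge4$) --- the analogue for $\hat R_{\mathbf b_n}$ of the stochastic representation \eqref{stoch_pres_Tn} used for $T_n$ in the Appendix.

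\emph{Step 2 (CLT for $\hat R_{\mathbf b_n}$).} Writing each chi-square as its mean plus a $\sqrt n$-scaled fluctuation, the triple $(\xi_a-(n-p+1),\,\xi_b-(n-p),\,\eta-(p-2))/\sqrt n$ converges to independent centred normals $(u_a,u_b,u_\eta)$ with variances $2(1-c),2(1-c),2c$; the cross term $2\zeta\sqrt{R_{\mathbf b_n}\xi_a}$ is $O_P(\sqrt n)$ and must be retained, whereas $\zeta^2$, the bias $E[1/\xi_b]-1/E[\xi_b]$, and the second-order Taylor remainders contribute only an $O(1/n)=o(1/\sqrt n)$ bias, so that centring at $R_{\mathbf b_n}$ rather than at the exact mean is legitimate. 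Collecting the linear terms gives
\[
\sqrt n\,(\hat R_{\mathbf b_n}-R_{\mathbf b_n})\stackrel{d}{\to}R_{\mathbf b_n}u_a+2\zeta\sqrt{R_{\mathbf b_n}(1-c)}+(1+R_{\mathbf b_n})u_\eta-\tfrac{R_{\mathbf b_n}+c}{1-c}\,u_b ,
\]
a centred Gaussian whose variance simplifies to $\sigma_R^2=\tfrac{2}{1-c}\big((2-c)R_{\mathbf b_n}^2+2R_{\mathbf b_n}+c\big)$.

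\emph{Step 3 (delta method).} Since $g_n'(r)=\tfrac{c_n(1-c_n)}{(c_n+(1-c_n)r)^2}$ is continuous, bounded and converges (together with $c_n$) as $n\to\infty$, the delta method gives $\sqrt n\,(\hat{\tilde\alpha}_n-A_n)\stackrel{d}{\to}\mathcal N\big(0,(g'(R_{\mathbf b_n}))^2\sigma_R^2\big)$; substituting $g_n'(R_{\mathbf b_n})$ and $\sigma_R^2$ and keeping the $n$-indexed constants rather than their limits reproduces precisely $B_n^2=2\,\tfrac{c_n^2(1-c_n)(2-c_n)(R_{\mathbf b_n}+1)}{(c_n+R_{\mathbf b_n}(1-c_n))^4}\big(R_{\mathbf b_n}+\tfrac{c_n}{2-c_n}\big)$, which stays bounded away from $0$ and $\infty$ because $R_{\mathbf b_n}=\mathbf 1'\mathbf{\Sigma}^{-1}\mathbf 1\,\mathbf b_n'\mathbf{\Sigma}\mathbf b_n-1$ is uniformly bounded and $c\in(0,1)$; dividing by $B_n$ gives the claim. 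I expect the only genuinely delicate point to be Step 1 --- capturing the joint law, and in particular the dependence, of $\mathbf b_n'\hat{\mathbf{\Sigma}}_n\mathbf b_n$ and $\mathbf 1'\hat{\mathbf{\Sigma}}_n^{-1}\mathbf 1$ through the Wishart partition; once the four-variable representation is in hand the expansions in Steps 2--3 are routine.
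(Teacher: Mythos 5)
Your proposal is correct in substance, and it reaches the theorem by a genuinely different route than the paper. The paper never builds an exact stochastic representation for $\hat{R}_{\mathbf{b}_n}$: its proof of Theorem \ref{th2} rests on Lemma \ref{lem1}, i.e.\ on the random-matrix CLT of \citet*{bai2011} for the pair $\bigl(\mathbf{x}'\mathbf{S}_n\mathbf{x},\,\mathbf{y}'\mathbf{S}_n^{-1}\mathbf{y}\bigr)$ with covariances computed from Marchenko--Pastur moments, followed by a hand-rolled expansion of $\hat{\tilde{\alpha}}_n$ (the $I_n+II_n$ decomposition) in place of a formal delta method. You instead extend the Lemma \ref{lem0}/representation-\eqref{stoch_pres_Tn} philosophy to $\hat{R}_{\mathbf{b}_n}$: partitioning the Wishart matrix along the plane of $\mathbf{a}=\mathbf{\Sigma}^{-1/2}\mathbf{1}$ and $\mathbf{d}=\mathbf{\Sigma}^{1/2}\mathbf{b}_n$, using $S_{AA\cdot B}\sim W_2(n-p+1,\mathbf{I}_2)$ independent of $S_{AB}S_{BB}^{-1}S_{BA}\sim W_2(p-2,\mathbf{I}_2)$, does give exactly your four-variable representation (I checked the reduction: with $d_1=1/\lVert\mathbf{a}\rVert$, $d_2^2/d_1^2=R_{\mathbf{b}_n}$, $\lVert\mathbf{a}\rVert^2\lVert\mathbf{d}\rVert^2=1+R_{\mathbf{b}_n}$ one obtains $(\xi_b+(\zeta+\sqrt{R_{\mathbf{b}_n}\xi_a})^2+(1+R_{\mathbf{b}_n})\eta)/\xi_b$ jointly for the two quadratic forms), and your limiting variance $\sigma_R^2=\tfrac{2}{1-c}\bigl((2-c)R_{\mathbf{b}_n}^2+2R_{\mathbf{b}_n}+c\bigr)$ coincides with the paper's $2\bigl(\tfrac{2-c}{1-c}-\tfrac{2}{\Delta_n}\bigr)\Delta_n^2$ for $\Delta_n=R_{\mathbf{b}_n}+1$, and $(g_n'(R_{\mathbf{b}_n}))^2\sigma_R^2$ reproduces $B_n^2$ since $(2-c)\bigl(R_{\mathbf{b}_n}+\tfrac{c}{2-c}\bigr)=(2-c)R_{\mathbf{b}_n}+c$. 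What your route buys is a fully self-contained, finite-sample argument using only classical Wishart theory (no external RMT CLT, which the paper needed precisely because Glombeck's Proposition 3 is flawed), plus a representation usable for exact simulation as done for $T_n$; what the paper's route buys is a structure (Lemma \ref{lem1}) that is recycled verbatim for the singular case in Theorem \ref{th4} after the Moore--Penrose reduction.

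Two small caveats you should tighten. First, since $c_n$ and $R_{\mathbf{b}_n}$ vary with $n$, Step 3 needs a triangular-array version of the delta method (or the paper's explicit expansion); your statement that $R_{\mathbf{b}_n}$ is uniformly bounded does not follow from the paper's assumption that $\mathbf{b}_n'\mathbf{\Sigma}\mathbf{b}_n$ is bounded alone --- it additionally requires $1/\mathbf{1}'\mathbf{\Sigma}^{-1}\mathbf{1}$ to be bounded away from zero (an assumption the paper makes explicit only in Proposition \ref{alpha+} for the singular case); the paper's own proof sidesteps this by only using $\Delta_n\ge 1$. Second, in your Step 2 the expansion should be carried out on the ratio (numerator and denominator jointly), as the $O_P(1)$ terms $\zeta^2$ and the deterministic $O(1/n)$ drifts must be shown to be $o_P(1)$ after the $\sqrt{n}$ scaling and division by $\xi_b/(n-p)$; this is routine but is the place where the argument must be written out.
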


Next, we introduce a test based on the estimated shrinkage intensity. The motivation is based on the following equivalences (see, \eqref{talp} and \eqref{Rbn}):
\begin{equation*}
\tilde{\alpha}_n = 0\, \iff \,R_{\mathbf{b}_n}=0\, \iff \,\sigma_{\mathbf{b}_n}^2=\sigma^2_{n}  \, .
\end{equation*}
This result means that $\tilde{\alpha}_n = 0$ if and only if the variance of the portfolio based on $\mathbf{b}_n$ is equal to the variance of the GMVP. This finding in turn means that
 $\displaystyle \mathbf{b}'_n\mathbf{\Sigma}\mathbf{b}_n=1/\mathbf{1}'\mathbf{\Sigma}^{-1}\mathbf{1}= \min\limits_{\substack{\bf{w}:\,\bf{w'1}}=1}\mathbf{w}'\mathbf{\Sigma} \mathbf{w} =  \mathbf{w}_{GMVP}'\mathbf{\Sigma} \,\mathbf{w}_{GMVP}$.
 Since the GMVP weights are uniquely determined, this result is valid if and only if $\mathbf{b}_n= \mathbf{w}_{GMVP}$. Choosing $\mathbf{b}_n= \mathbf{r}$, it holds that

\begin{equation*}
\mathbf{w}_{GMVP} = \mathbf{r}\, \iff \, \tilde{\alpha}_n = 0 .
\end{equation*}
Thus, it is possible to obtain a test for the structure of the GMVP using the shrinkage intensity with the hypothesis given by
\begin{equation}\label{hypotheses_shr}
H_{0}:\tilde{\alpha}_n=0\qquad \textrm{against} \qquad H_{1}:\tilde{\alpha}_n>0.
\end{equation}
Note that $\hat{\tilde{\alpha}} = \hat{\tilde{\alpha}}(\mathbf{b}_n)$. Let $S_n = \sqrt{n} \; \hat{\tilde{\alpha}}(\mathbf{b}_n = \mathbf{r})$. For testing (\ref{hypotheses_shr}), we use the test statistic $S_n$.

From Theorem \ref{th2} we get
\begin{equation*}
\frac{S_n-\sqrt{n}A_n}{B_n} \stackrel{d}{\to} \mathcal{N}(0,1)~~ \text{for $p/n\to c\in (0,1)$ as $n\to \infty$},
\end{equation*}
where $A_n$ and $B_n$ are given in the statement of \textit{Theorem \ref{th2}}. Moreover, under the null hypothesis, $R_{\mathbf{b}_n}=0$ and, thus, $S_n \stackrel{d}{\to} \mathcal{N}\left(0, 2(1-c)/c\right)$ for $p/n\to c\in (0,1)$ as $n\to \infty$. This result gives us a promising new approach for detecting deviations of the true portfolio weights from the given quantities. Using Theorem \ref{th2}, we are able to make a statement about the power function of this test. Since $A_n$ and $B_n$ depend on $\mathbf{b}_n$, we only have to replace this quantity with $\mathbf{r}$. It holds that
\begin{eqnarray}
&&P\left(  \frac{S_n}{\sqrt{2 \frac{1-c}{c}}} > z_{1-\alpha}\right) \nonumber\\
&=&1 - P\Bigg(  \frac{S_n - A_n(\mathbf{b}_n = \mathbf{r})}{B_n(\mathbf{b}_n = \mathbf{r})} \nonumber\\
&\le&  \frac{\sqrt{2 \frac{1-c}{c}} z_{1-\alpha} - A_n(\mathbf{b}_n = \mathbf{r})}{B_n(\mathbf{b}_n = \mathbf{r})} \Bigg) \nonumber\\
& \approx & 1 - \Phi\left(\frac{\sqrt{2 \frac{1-c}{c}} z_{1-\alpha} - A_n(\mathbf{b}_n = \mathbf{r})}{B_n(\mathbf{b}_n = \mathbf{r})} \right) \label{power} .
\end{eqnarray}
Note that the approximation given in (\ref{power}) is purely a function of
$R_{\mathbf{b}_n = \mathbf{r}}$. This property is a main difference from the test discussed in Section III.A, where the power function is a function of $\lambda_n$. These properties are very useful to analyse the performances of both tests and simplify the power analysis.

 \begin{figure}[!h]
	\centering
	\vspace{-0.5cm}
	 \begin{tabular}{c}
	 	\includegraphics[scale=0.48]{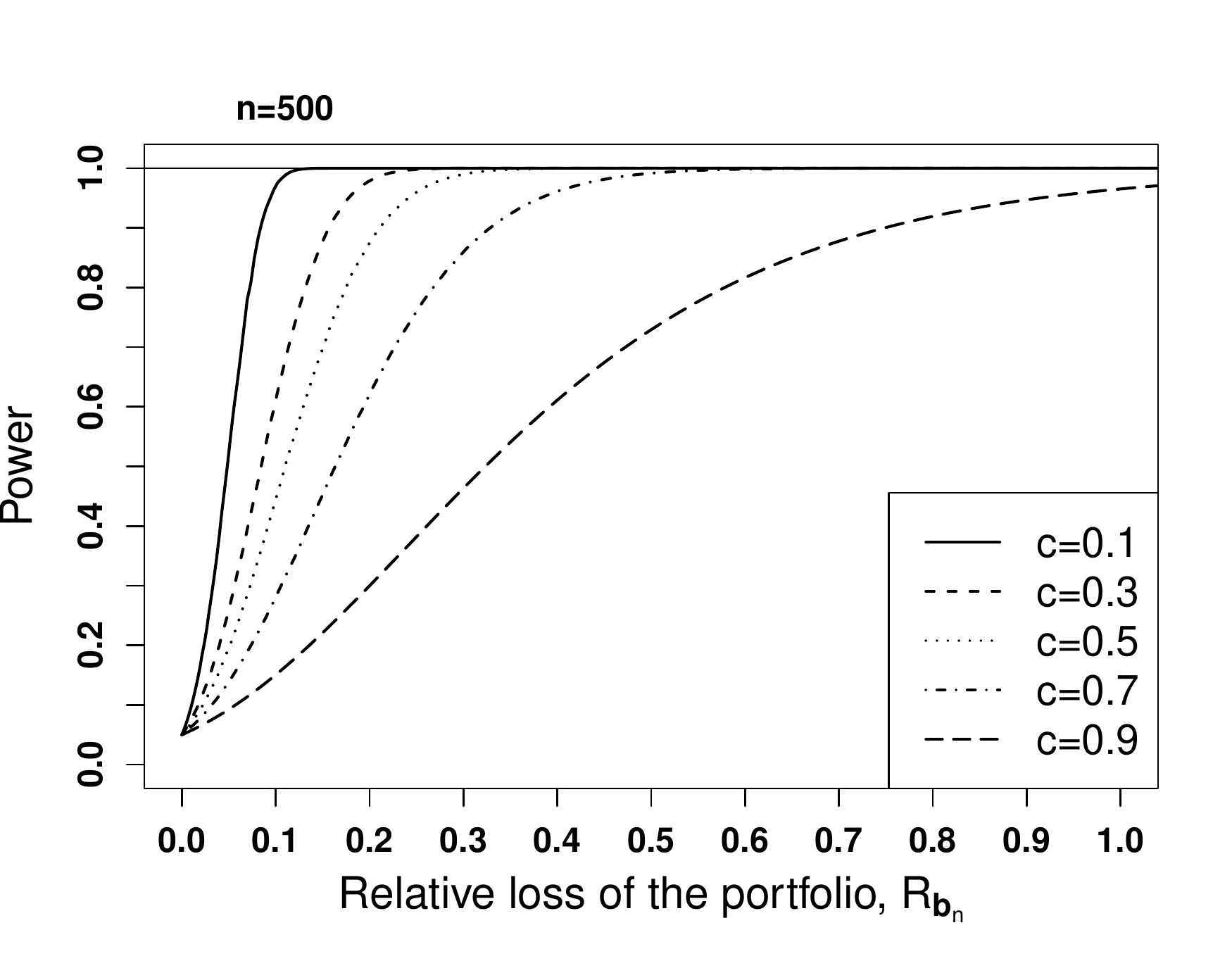}\\   	
	 	\includegraphics[scale=0.48]{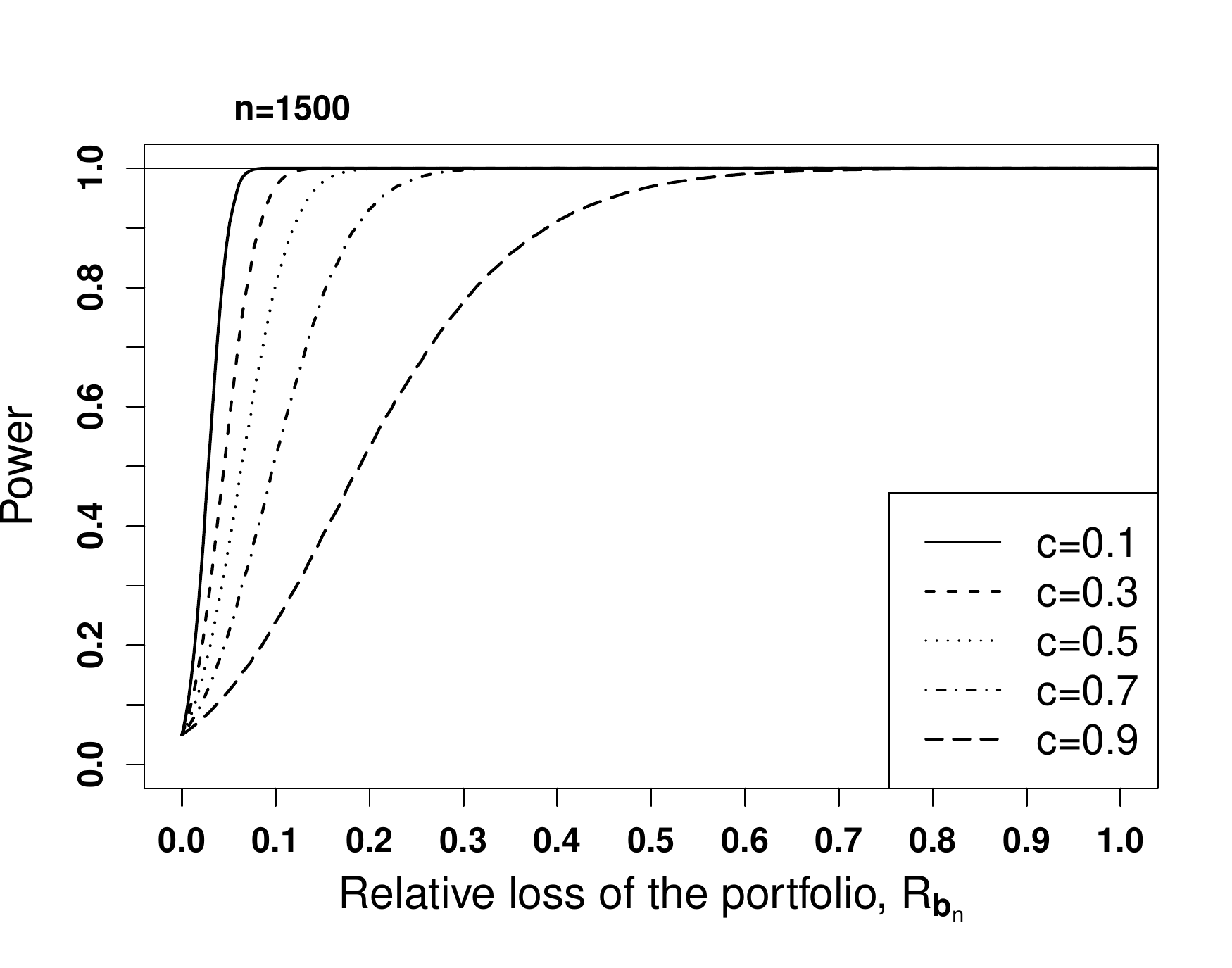}
	 \end{tabular}
	\caption{Asymptotic power function for the test problem in (\ref{hypotheses_shr}) as a function of $R_{\mathbf{b}_n}$ for various values of $c \in \{0.1, 0.3, 0.5,0.7, 0.9\}$. The number of observations is $n=\{500, 1500\}$. The nominal significance level of the test (the probability of a type I error) is $\alpha=5\%$.} \label{fig:fig2}
\end{figure}

In \textit{Figure \ref{fig:fig2}}, the power of the test is shown as a function of $R_{\mathbf{b}_n}$ and $n$. It can be seen that the test performs better for smaller values of $c$. With increasing values of $c$, the power of the test decreases.
	We determine the power function for two different sample numbers, $n=500$ and $n=1500$. As expected, the test shows a better performance for larger values of $n$, since $A_n(\mathbf{b}_n = \mathbf{r})$ increases, the numerator of the expression in the cumulative distribution function in (\ref{power}) becomes negative, and the whole expression tends to one.

\subsection{Case of a Singular Covariance Matrix $\mathbf{\Sigma}$}

We extend the results of Section III.A and Section III.B to the case of a singular covariance matrix with $\text{rank}(\mathbf{\Sigma})=q<p$. Here, we consider two types of singularity: (i) in the population covariance matrix $\mathbf{\Sigma}$ and (ii) in addition, in the sample covariance matrix $\hat{\mathbf{\Sigma}}_n$ by allowing the sample size $n$ to be smaller than the dimension $p$. Throughout this section, we refer to $q$ as the actual dimension of the data generating process and, consequently, derive the results under the high-dimensional asymptotic regime $q/n \to \tilde{c} \in (0,1)$ as $n \to \infty$.

In the case $q<p$, the sample covariance matrix $\hat{\mathbf{\Sigma}}_n$ is singular and its inverse does not exist. As a result, the Moore-Penrose inverse of $\hat{\mathbf{\Sigma}}_n$, which we denote by $\hat{\mathbf{\Sigma}}_n^{+}$, is used to estimate the weights of the GMVP expressed as
\begin{equation}\label{weights_MP}
		\mathbf{\hat{\tilde{w}}}_{n}=\frac{\mathbf{\hat{\Sigma}}_{n}^{+}\mathbf{1}}{\mathbf{1}'\mathbf{\hat{\Sigma}}_{n}^{+}\mathbf{1}}.
\end{equation}
In a similar way, the true GMVP weights are obtained and they are given by
\begin{equation*}
		\mathbf{\tilde{w}}_{GMVP}=\frac{\mathbf{\Sigma}^{+}\mathbf{1}}{\mathbf{1}'\mathbf{\Sigma}^{+}\mathbf{1}}.
\end{equation*}
The Moore-Penrose inverse of the covariance matrix has already been used in portfolio theory by \citet*{PappasKiriakopoulosKaimakamis2010,bodnar2016singular} among others, while \citet*{bodnar2016spectral} derived several distributional properties of the Moore-Penrose inverse of the sample covariance matrix.

Next, we consider linear combinations of both the true GMVP weights and their estimator given by
\begin{equation*}
		\mathbf{\tilde{w}}_{GMVP}^*=\frac{\mathbf{L}\mathbf{\Sigma}^{+}\mathbf{1}}{\mathbf{1}'\mathbf{\Sigma}^{+}\mathbf{1}}
\quad \text{and} \quad
		\mathbf{\hat{\tilde{w}}}_{n}^*=\frac{\mathbf{L}\mathbf{\hat{\Sigma}}_{n}^{+}\mathbf{1}}{\mathbf{1}'\mathbf{\hat{\Sigma}}_{n}^{+}\mathbf{1}},
\end{equation*}
where $\mathbf{L}$ is a $k \times p$ matrix of constants with $k\le q$ and $rank(L)=k$. In particular, if $\mathbf{L}=[\mathbf{I}_k \mathbf{O}_{k,p-k}]$ with the $k$-dimensional identity $\mathbf{I}_k$ and the $k\times (p-k)$ zero matrix $\mathbf{O}_{k,p-k}$, then $\mathbf{\tilde{w}}_{GMVP}^*$ is the vector of the first components of $\mathbf{\tilde{w}}_{GMVP}$.

In order to verify the structure of the GMVP, we first extend the test based on the Mahalanobis distance to the test problem given by
 \begin{equation}\label{hypotheses_sin}
     H_{0}:\mathbf{\tilde{w}}_{GMVP}^*=\mathbf{\tilde{r}}^*\qquad \textrm{against} \qquad H_{1}:\mathbf{\tilde{w}}_{GMVP}^*\not= \mathbf{\tilde{r}}^*
	 \end{equation}
for some $k$-dimensional vector $\mathbf{\tilde{r}}^*$ and the test statistic
\begin{equation}\label{tT_n}
\tilde{T}_{n}=\frac{n-q}{k}(\mathbf{1}'\hat{\mathbf{\Sigma}}^{+}_n\mathbf{1})(\mathbf{\hat{\tilde{w}}}^{*}_{n}-\mathbf{\tilde{r}}^{*})'
(\mathbf{\hat{\tilde{Q}}}_n^*)^{-1}(\mathbf{\hat{\tilde{w}}}^{*}_{n}-\mathbf{\tilde{r}}^{*}),
\end{equation}
where $\mathbf{\hat{\tilde{Q}}}_n^*=\mathbf{L}\hat{\mathbf{\Sigma}}_n^{+}\mathbf{L}'-\displaystyle\frac{\mathbf{L}\hat{\mathbf{\Sigma}}_n^{+}\mathbf{1}
 \mathbf{1}'\hat{\mathbf{\Sigma}}_n^{+}\mathbf{L}'}{\mathbf{1}'\hat{\mathbf{\Sigma}}_n^{+}\mathbf{1}}$.
This test statistic was considered in \citet*{bodnar2017test}, who derived its finite-sample distribution for both small portfolio dimension and sample size.

In Theorem \ref{th3}, we extend these results by deriving the asymptotic distribution of $\tilde{T}_{n}$ under the high-dimensional asymptotic regime with $q/n \to \tilde{c} \in (0,1)$ and $k/n \to \tilde{b} \in [0,1)$ as $n \to \infty$. To this end, we also note that only a part of the GMVP weights are tested in \eqref{hypotheses_sin}. In order to test the structure of the whole portfolio, we have to repeat the test \eqref{hypotheses_sin} for several subvectors of $\mathbf{\tilde{w}}_{GMVP}$ and to adjust the significance level of each test by applying, for example, the Bonferroni correction.

\begin{theorem}\label{th3}
Assume that $\{\mathbf{X}_t\}$ is a sequence of independent and singular normally distributed $p$-dimensional random vectors with mean $\boldsymbol{\mu}$ and singular covariance matrix $\mathbf{\Sigma}$ with $rank(\mathbf{\Sigma})=q$. Let $q\equiv q(n)$ and $\tilde{c}_n = \frac{q}{n} \to \tilde{c} \in (0,1)$ and let $k<q$ such that $\tilde{b}_n = \frac{k}{n} \to \tilde{b} \in (0,1)$. We define
 	\begin{eqnarray*}
\tilde{C}^2_n&=&2+2\frac{(1-\tilde{c}+\tilde{b})\tilde{\lambda}_n^2}{\tilde{b}}+4\frac{(1-\tilde{c}+\tilde{b})\tilde{\lambda}_n}{\tilde{b}}\\
&+&2\frac{\tilde{b}}{1-\tilde{c}}\left(1+\frac{(1-\tilde{c}+\tilde{b})\tilde{\lambda}_n}{\tilde{b}}\right)^2
		\end{eqnarray*}
with
\begin{equation*}
\tilde{\lambda}_n=\left(\mathbf{1}^\prime \mathbf{\Sigma}^{+}\mathbf{1}\right)
\left(\tilde{\mathbf{w}}^*_{GMVP}-\tilde{\mathbf{r}}^*\right)^\prime (\tilde{\mathbf{Q}}^*)^{-1} \left(\tilde{\mathbf{w}}^*_{GMVP}-\tilde{\mathbf{r}}^*\right).
\end{equation*}

Then, it holds that
\begin{equation*}
\sqrt{k}\left(\frac{\tilde{T}_n-1-\tilde{\lambda}_n\frac{n-q+k}{k}}{\tilde{C}_n}\right) \stackrel{d}{\to} \mathcal{N}\left(0, 1\right)
\end{equation*}
for $q/n\to \tilde{c}\in (0,1)$ and $k/n\to \tilde{b}\in (0,1)$ as $n\to \infty$. Under the null hypothesis, $\sqrt{q-1} \; ( \tilde{T}_n - 1 ) \stackrel{d}{\to} \mathcal{N}\left(0, {2(1-\tilde{c}+\tilde{b})}/{(1-\tilde{c})}\right)$ for $q/n\to \tilde{c}\in (0,1)$ and $k/n\to \tilde{b}\in (0,1)$ as $n\to \infty$.
\end{theorem}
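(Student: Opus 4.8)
The plan is to reduce the singular testing problem to the non-singular one already settled in Theorem \ref{th1} and then to re-run that asymptotic argument.

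\emph{Step 1 (rank reduction).} Since $\mathbf{\Sigma}$ has rank $q$, write $\mathbf{\Sigma}=\mathbf{H}\mathbf{H}'$ with $\mathbf{H}$ a $p\times q$ matrix of full column rank, so that $\mathbf{X}_j=\boldsymbol{\mu}+\mathbf{H}\mathbf{y}_j$ with $\mathbf{y}_1,\dots,\mathbf{y}_n$ i.i.d.\ $\mathcal{N}_q(\mathbf{0},\mathbf{I}_q)$. Let $\mathbf{W}_n$ be the $q\times q$ sample covariance matrix of the $\mathbf{y}_j$; since $\tilde c<1$ we have $n-1\ge q$ eventually, so $(n-1)\mathbf{W}_n$ is Wishart $\mathcal{W}_q(n-1,\mathbf{I}_q)$ and $\mathbf{W}_n$ is a.s.\ invertible. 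From $\hat{\mathbf{\Sigma}}_n^{+}=\mathbf{H}(\mathbf{H}'\mathbf{H})^{-1}\mathbf{W}_n^{-1}(\mathbf{H}'\mathbf{H})^{-1}\mathbf{H}'$ and the analogous identity for $\mathbf{\Sigma}^{+}$, every ingredient of $\tilde T_n$ can be rewritten through $\mathbf{W}_n^{-1}$, the fixed $q$-vector $\mathbf{v}=(\mathbf{H}'\mathbf{H})^{-1}\mathbf{H}'\mathbf{1}$ and the fixed $k\times q$ matrix of full row rank $\tilde{\mathbf{L}}=\mathbf{L}\mathbf{H}(\mathbf{H}'\mathbf{H})^{-1}$: one gets $\mathbf{1}'\hat{\mathbf{\Sigma}}_n^{+}\mathbf{1}=\mathbf{v}'\mathbf{W}_n^{-1}\mathbf{v}$, $\hat{\tilde{\mathbf{w}}}_n^{*}=\tilde{\mathbf{L}}\mathbf{W}_n^{-1}\mathbf{v}/(\mathbf{v}'\mathbf{W}_n^{-1}\mathbf{v})$ and $\hat{\tilde{\mathbf{Q}}}_n^{*}=\tilde{\mathbf{L}}\mathbf{W}_n^{-1}\tilde{\mathbf{L}}'-\tilde{\mathbf{L}}\mathbf{W}_n^{-1}\mathbf{v}\mathbf{v}'\mathbf{W}_n^{-1}\tilde{\mathbf{L}}'/(\mathbf{v}'\mathbf{W}_n^{-1}\mathbf{v})$. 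Thus $\tilde T_n$ is, up to relabelling, the Bodnar--Schmid statistic of \citet*{bodnar2008test, bodnar2017test} for $k$ linear restrictions on the GMVP of an $\mathcal{N}_q(\cdot,\mathbf{I}_q)$ population, with ambient dimension $q$ in place of $p$ and $k$ restrictions in place of $p-1$.

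\emph{Step 2 (finite-sample law / stochastic representation).} By Step 1, $\tilde T_n$ has the density \eqref{lambda_exact} under the substitutions $p\mapsto q$, $p-1\mapsto k$, $\lambda\mapsto\tilde\lambda_n$, the one bookkeeping point being that the non-centrality of the reduced problem evaluates to $(\mathbf{1}'\mathbf{\Sigma}^{+}\mathbf{1})(\tilde{\mathbf{w}}^*_{GMVP}-\tilde{\mathbf{r}}^*)'(\tilde{\mathbf{Q}}^*)^{-1}(\tilde{\mathbf{w}}^*_{GMVP}-\tilde{\mathbf{r}}^*)=\tilde\lambda_n$. Equivalently, in analogy with \eqref{stoch_pres_Tn},
\begin{equation*}
\tilde T_n \stackrel{d}{=} \frac{n-q}{k}\,\frac{\bigl(\sqrt{\tilde\lambda_n\,\zeta_3}+\eta_1\bigr)^2+\zeta_4}{\zeta_2},
\end{equation*}
with $\eta_1\sim\mathcal{N}(0,1)$, $\zeta_2\sim\chi^2_{n-q}$, $\zeta_3\sim\chi^2_{n-q+k}$, $\zeta_4\sim\chi^2_{k-1}$ mutually independent; the degrees of freedom are pinned down by matching the centering $1+\tilde\lambda_n(n-q+k)/k$, by the collapse to \eqref{stoch_pres_Tn} when $q=p$, $k=p-1$, and by the check that under $H_0$ ($\tilde\lambda_n=0$) this is $\tilde T_n\stackrel{d}{=}F_{k,n-q}$, paralleling $T_n\sim F_{p-1,n-p}$.

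\emph{Step 3 (CLT).} From the representation the theorem follows as in the proof of Theorem \ref{th1}: expand each $\chi^2$ variable as its mean plus $\sqrt{2\cdot(\text{d.f.})}$ times an asymptotically standard normal fluctuation, linearise $\tilde T_n$ about its deterministic part $1+\tilde\lambda_n(n-q+k)/k$, and collect the leading Gaussian terms. Using $\zeta_2/n\to 1-\tilde c$, $\zeta_3/n\to 1-\tilde c+\tilde b$, $\zeta_4/n\to\tilde b$, $(n-q)/k\to(1-\tilde c)/\tilde b$ and the independence of $\eta_1,\zeta_2,\zeta_3,\zeta_4$, the asymptotic variance of $\sqrt{k}\,\bigl(\tilde T_n-1-\tilde\lambda_n(n-q+k)/k\bigr)$ is seen to equal $\tilde C_n^{2}$, its four summands coming respectively from the fluctuation of $\zeta_3$, the cross term $\eta_1\sqrt{\zeta_3}$, the fluctuation of $\zeta_4$, and the fluctuation of $\zeta_2$ in the denominator, while the $\eta_1^{2}$ term is negligible of order $1/k$. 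Setting $\tilde\lambda_n=0$ yields $\sqrt{k}\,(\tilde T_n-1)\stackrel{d}{\to}\mathcal{N}\bigl(0,2(1-\tilde c+\tilde b)/(1-\tilde c)\bigr)$.

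The real work is in Steps 1--2: verifying that the Moore--Penrose substitution genuinely produces the non-singular Bodnar--Schmid statistic in dimension $q$, and in particular that its non-centrality parameter is exactly $\tilde\lambda_n$ and that the degrees of freedom of $\zeta_2,\zeta_3,\zeta_4$ are $n-q$, $n-q+k$, $k-1$ rather than shifted values. Once the stochastic representation is in hand, Step 3 is a routine moment expansion identical in structure to the one already carried out for Theorem \ref{th1}.
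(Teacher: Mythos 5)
Your Step 1 is sound and is in substance the paper's own device: writing $\mathbf{\Sigma}=\mathbf{H}\mathbf{H}'$ and using $\hat{\mathbf{\Sigma}}_n^{+}=\mathbf{H}(\mathbf{H}'\mathbf{H})^{-1}\mathbf{W}_n^{-1}(\mathbf{H}'\mathbf{H})^{-1}\mathbf{H}'$ is exactly the content of Lemma \ref{lem2} (there obtained through the SVD of $\mathbf{\Sigma}$ and the factorization of the Moore--Penrose inverse), your bookkeeping that the reduced non-centrality equals $\tilde\lambda_n$ is correct, and your Step 3 is the same $\chi^2$-expansion/Slutsky computation used for Theorem \ref{th1}, with the four variance contributions indeed summing to $\tilde C_n^2$. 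The genuine gap is in Step 2, which you yourself call ``the real work'' but do not carry out. First, the claim that $\tilde T_n$ has the density \eqref{lambda_exact} ``under the substitutions $p\mapsto q$, $p-1\mapsto k$'' is wrong if taken literally: \eqref{lambda_exact} is the special case of $k=p-1$ restrictions, and its parameters $\tfrac{n-1}{2}$ (and the exponent $-(n-1)/2$) encode the mixing variable $\chi^2_{n-1}=\chi^2_{n-p+(p-1)}$; for $k<q$ restrictions in the reduced $q$-dimensional model these must become $\tfrac{n-q+k}{2}$, which the naive substitution does not produce --- precisely the ``shifted values'' danger you flag. Second, your pinning down of the degrees of freedom is not a derivation: matching the centering $1+\tilde\lambda_n(n-q+k)/k$ uses the statement of Theorem \ref{th3} that is to be proved; agreement with \eqref{stoch_pres_Tn} at $q=p$, $k=p-1$ does not determine the general case; and the null law $F_{k,n-q}$ only identifies $\zeta_2\sim\chi^2_{n-q}$ and $\eta_1^2+\zeta_4\sim\chi^2_{k}$, since $\zeta_3$ disappears when $\tilde\lambda_n=0$, so it cannot fix the $n-q+k$ degrees of freedom of $\zeta_3$.

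The missing step is, however, directly available from your own reduction. With $\tilde{\mathbf{M}}'=(\tilde{\mathbf{L}}',\mathbf{v})$ of full row rank $k+1\le q$, Muirhead's Theorem 3.2.11 applied to the nonsingular Wishart matrix $(n-1)\mathbf{W}_n$ gives $(n-1)\bigl(\tilde{\mathbf{M}}\mathbf{W}_n^{-1}\tilde{\mathbf{M}}'\bigr)^{-1}\sim W_{k+1}\bigl(n-q+k,(\tilde{\mathbf{M}}\tilde{\mathbf{M}}')^{-1}\bigr)$, and rerunning the partitioned-Wishart argument of Lemma \ref{lem0} in this $(k+1)$-dimensional block (with $\mathbf{v}$ in place of $\mathbf{1}$, which changes nothing in that argument) yields $\zeta_2\sim\chi^2_{n-q}$ from the $(2,2)$-marginal, $(n-1)(\hat{\tilde{\mathbf{Q}}}^*_n)^{-1}\sim W_k\bigl(n-q+k,(\tilde{\mathbf{Q}}^*)^{-1}\bigr)$ and hence $\zeta_3\sim\chi^2_{n-q+k}$, and $\zeta_4\sim\chi^2_{k-1}$ from the noncentral $\chi^2_k$ decomposition. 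This is exactly what the paper's Lemmas \ref{lem2} and \ref{lem3} provide; once it is written out, your Step 3 finishes the proof, and setting $\tilde\lambda_n=0$ gives the null limit under the $\sqrt{k}$ normalization you use.
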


The results of Theorem \ref{th3} are very useful to derive the power function of the suggested test. Similarly to the case of a non-singular covariance matrix, it is given by
\begin{eqnarray*}
&&P\left(  \frac{\sqrt{q-1}\left(\tilde{T}_n-1\right)}{\sqrt{2 /(1-\tilde{c})}} > z_{1-\alpha}\right) \nonumber \\
 & = &
1 - P\Bigg(  \frac{\sqrt{q-1}\left(\tilde{T}_n-1-\tilde{\lambda}_n\frac{n-1}{q-1}\right)}{\tilde{C}_n} \nonumber\\
&\le&  \frac{\sqrt{\frac{2}{(1-\tilde{c})}} z_{1-\alpha} - \frac{\sqrt{q-1}\tilde{\lambda}_n(n-1)}{q-1}}{\tilde{C}_n} \Bigg) \nonumber \\
& \approx & 1 - \Phi\left(\frac{\sqrt{2/(1-\tilde{c})} z_{1-\alpha} - \sqrt{q-1}\frac{\tilde{\lambda}_n}{\tilde{\tilde{c}}}}{\tilde{C}_n} \right).
\end{eqnarray*}

Next, we present the test based on a shrinkage estimator for the singular covariance matrix $\mathbf{\Sigma}$. Similarly as in the case of a nonsingular covariance matrix we get the shrinkage intensity given by
\begin{equation}
\displaystyle
\hat{\alpha}^+_n=\frac{\left(\mathbf{b}_n - \mathbf{\hat{\tilde{w}}}_{n} \right)'\mathbf{\Sigma}\,\mathbf{b}_n }{\left(\mathbf{b}_n - \mathbf{\hat{\tilde{w}}}_{n}\right)'\mathbf{\Sigma}\left(\mathbf{b}_n -  \mathbf{\hat{\tilde{w}}}_{n} \right)}\,,
\end{equation}
where $\mathbf{\hat{\tilde{w}}}_{n}$ are given in \eqref{weights_MP}. The following proposition holds.
\begin{proposition}\label{alpha+} Assume that $\{\mathbf{X}_t\}$
	is a sequence of independent and singular normally distributed $p$-dimensional random vectors with mean $\boldsymbol{\mu}$ and singular covariance matrix $\mathbf{\Sigma}$ with a rank $q$ and assume $0< M_l \le 1/\bi'\bSigma^{+}\bi \le \mathbf{b}_n'\bSigma\mathbf{b}_n \le M_u <\infty $ for all $n$.
The optimal shrinkage intensity $\hat{\alpha}^+_n$ is almost surely asymptotically equivalent to a non-random quantity $\tilde{\alpha}^+_n \in \left[ 0,1 \right] $ when $q/n\to \tilde{c} \in (0,1)$ as $n\to \infty $, which is given by
\begin{equation}\label{talp+}
\tilde{\alpha}^+_n =\frac{(1-\tilde{c}) R^+_{\mathbf{b}_n}}{\tilde{c}+(1-\tilde{c}) R^+_{\mathbf{b}_n}} ,
\end{equation}
where
\begin{equation}\label{Rbn+}
 \displaystyle R^+_{\mathbf{b}_n}= \mathbf{1}' \mathbf{\Sigma}^{+} \mathbf{1} \mathbf{b}_n' \mathbf{\Sigma} \mathbf{b}_n - 1.
 \end{equation}
\end{proposition}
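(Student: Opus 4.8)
The plan is to adapt the derivation of \eqref{talp}--\eqref{Rbn} in \citet*{bodnar2014estimation} to the singular case, replacing the inverse $\hat{\mathbf{\Sigma}}_n^{-1}$ by the Moore--Penrose inverse $\hat{\mathbf{\Sigma}}_n^{+}$ and the classical almost sure limits of bilinear forms of $\hat{\mathbf{\Sigma}}_n^{-1}$ by their singular analogues from \citet*{bodnar2016spectral} (cf.\ also \citet*{bodnar2016singular}). First I would reduce $\hat{\alpha}^+_n$ to a ratio of a few scalar statistics. With $\hat{\tilde{\mathbf{w}}}_n=\hat{\mathbf{\Sigma}}_n^{+}\mathbf{1}/\hat{V}_n$ as in \eqref{weights_MP} and the abbreviations $\hat{V}_n=\mathbf{1}'\hat{\mathbf{\Sigma}}_n^{+}\mathbf{1}$, $\hat{U}_n=\mathbf{1}'\hat{\mathbf{\Sigma}}_n^{+}\mathbf{\Sigma}\mathbf{b}_n$, $\hat{W}_n=\mathbf{1}'\hat{\mathbf{\Sigma}}_n^{+}\mathbf{\Sigma}\hat{\mathbf{\Sigma}}_n^{+}\mathbf{1}$ and $\sigma^2_{\mathbf{b}_n}=\mathbf{b}_n'\mathbf{\Sigma}\mathbf{b}_n$, expanding the numerator $(\mathbf{b}_n-\hat{\tilde{\mathbf{w}}}_n)'\mathbf{\Sigma}\mathbf{b}_n$ and the denominator $(\mathbf{b}_n-\hat{\tilde{\mathbf{w}}}_n)'\mathbf{\Sigma}(\mathbf{b}_n-\hat{\tilde{\mathbf{w}}}_n)$ and then multiplying numerator and denominator by $\hat{V}_n^{2}$ gives
\begin{equation*}
\hat{\alpha}^+_n=\frac{\sigma^2_{\mathbf{b}_n}\hat{V}_n^{2}-\hat{U}_n\hat{V}_n}{\sigma^2_{\mathbf{b}_n}\hat{V}_n^{2}-2\hat{U}_n\hat{V}_n+\hat{W}_n}\,.
\end{equation*}

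Next I would substitute the almost sure limits of $\hat{V}_n$, $\hat{U}_n$ and $\hat{W}_n$ under $q/n\to\tilde{c}\in(0,1)$. By \citet*{bodnar2016spectral} one has, almost surely,
\begin{equation*}
\hat{V}_n-\frac{\mathbf{1}'\mathbf{\Sigma}^{+}\mathbf{1}}{1-\tilde{c}}\to 0,\qquad \hat{U}_n-\frac{\mathbf{1}'\mathbf{\Sigma}^{+}\mathbf{\Sigma}\mathbf{b}_n}{1-\tilde{c}}\to 0,\qquad \hat{W}_n-\frac{\mathbf{1}'\mathbf{\Sigma}^{+}\mathbf{1}}{(1-\tilde{c})^{3}}\to 0,
\end{equation*}
the exact analogues of $\mathbf{1}'\hat{\mathbf{\Sigma}}_n^{-1}\mathbf{1}\to(1-c)^{-1}\mathbf{1}'\mathbf{\Sigma}^{-1}\mathbf{1}$ and $\mathbf{1}'\hat{\mathbf{\Sigma}}_n^{-1}\mathbf{\Sigma}\hat{\mathbf{\Sigma}}_n^{-1}\mathbf{1}\to(1-c)^{-3}\mathbf{1}'\mathbf{\Sigma}^{-1}\mathbf{1}$ used in \citet*{bodnar2014estimation}; the boundedness hypotheses $0<M_l\le 1/\mathbf{1}'\mathbf{\Sigma}^{+}\mathbf{1}\le\mathbf{b}_n'\mathbf{\Sigma}\mathbf{b}_n\le M_u<\infty$ keep the quantities involved non-degenerate. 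Since $\mathbf{\Sigma}^{+}\mathbf{\Sigma}$ is the orthogonal projector onto $\mathcal{R}(\mathbf{\Sigma})$ and $\mathbf{1}\in\mathcal{R}(\mathbf{\Sigma})$ (a standing assumption of the singular set-up, cf.\ \citet*{bodnar2016singular}), the cross term collapses: $\mathbf{1}'\mathbf{\Sigma}^{+}\mathbf{\Sigma}\mathbf{b}_n=\mathbf{1}'\mathbf{b}_n=1$. Writing $V=\mathbf{1}'\mathbf{\Sigma}^{+}\mathbf{1}$ and $R^+_{\mathbf{b}_n}=V\sigma^2_{\mathbf{b}_n}-1$, the numerator is asymptotically equivalent to $VR^+_{\mathbf{b}_n}/(1-\tilde{c})^{2}$, and using the identity $\sigma^2_{\mathbf{b}_n}V-2+1/(1-\tilde{c})=R^+_{\mathbf{b}_n}+\tilde{c}/(1-\tilde{c})$ the denominator is asymptotically equivalent to $V\bigl((1-\tilde{c})R^+_{\mathbf{b}_n}+\tilde{c}\bigr)/(1-\tilde{c})^{3}$. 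Because $1/V$ equals the minimum of $\mathbf{w}'\mathbf{\Sigma}\mathbf{w}$ subject to $\mathbf{w}'\mathbf{1}=1$, one has $R^+_{\mathbf{b}_n}\ge 0$, so the denominator limit is bounded away from zero, $\tilde{\alpha}^+_n\in[0,1]$, and forming the ratio yields $\hat{\alpha}^+_n-\tilde{\alpha}^+_n\to 0$ almost surely with $\tilde{\alpha}^+_n$ as in \eqref{talp+}--\eqref{Rbn+}; for $q=p$ this reduces to \eqref{talp}--\eqref{Rbn}.

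I expect the main obstacle to be the supply of the three almost sure limits above, and in particular the convergence of $\hat{W}_n=\mathbf{1}'\hat{\mathbf{\Sigma}}_n^{+}\mathbf{\Sigma}\hat{\mathbf{\Sigma}}_n^{+}\mathbf{1}$ in the doubly singular regime. Unlike the invertible case, $\hat{\mathbf{\Sigma}}_n^{+}\mathbf{\Sigma}$ is not approximately the identity but the sample projector onto $\mathcal{R}(\hat{\mathbf{\Sigma}}_n)$, so one must work inside the $q$-dimensional range of $\mathbf{\Sigma}$, control the projection of $\mathbf{1}$ onto that range, and invoke the Marchenko--Pastur-type description of the spectrum of the effective $q \times q$ sample covariance matrix from \citet*{bodnar2016spectral}; this is precisely where the hypotheses $\mathrm{rank}(\mathbf{\Sigma})=q$, $q/n\to\tilde{c}\in(0,1)$, and the uniform bounds $M_l$, $M_u$ enter. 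Once these bilinear-form limits are in hand, everything else is the same algebraic bookkeeping as in the nonsingular proof of \citet*{bodnar2014estimation}, and in fact the whole computation is recovered from the nonsingular one upon the replacements $c\mapsto\tilde{c}$ and $\mathbf{\Sigma}^{-1}\mapsto\mathbf{\Sigma}^{+}$.
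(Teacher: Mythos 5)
Your proposal is correct and takes essentially the same route as the paper: there too, $\hat{\alpha}^+_n$ is reduced to the three bilinear forms $\mathbf{1}'\hat{\mathbf{\Sigma}}_n^{+}\mathbf{1}$, $\mathbf{1}'\hat{\mathbf{\Sigma}}_n^{+}\mathbf{\Sigma}\mathbf{b}_n$ and $\mathbf{1}'\hat{\mathbf{\Sigma}}_n^{+}\mathbf{\Sigma}\hat{\mathbf{\Sigma}}_n^{+}\mathbf{1}$, whose almost sure limits $(1-\tilde{c})^{-1}\mathbf{1}'\mathbf{\Sigma}^{+}\mathbf{1}$, $(1-\tilde{c})^{-1}$ and $(1-\tilde{c})^{-3}\mathbf{1}'\mathbf{\Sigma}^{+}\mathbf{1}$ are exactly the ones you state, and the same algebra then yields \eqref{talp+}. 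The only difference is that the paper supplies those limits itself, by passing to the nonsingular $q\times q$ Wishart matrix in the range of $\mathbf{\Sigma}$ and applying the resolvent-trace Lemma \ref{lemmaRM} at $z=0$ (with a $z$-derivative for the third form), which is precisely the step you defer to the cited literature; your explicit use of $\mathbf{1}\in\mathcal{R}(\mathbf{\Sigma})$ corresponds to the paper's tacit evaluation $\mathrm{tr}[\boldsymbol{\Theta}_{\zeta}]=1$.
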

Proposition \ref{alpha+} is complementary to \citet[Theorem 2.1]{bodnar2014estimation} and covers additionally the case of a nonsingular matrix $\mathbf{\Sigma}$. Going carefully through the proof of Proposition \ref{alpha+} we can easily deduce the consistent estimator of $\tilde{\alpha}_n^+$ given by
\begin{eqnarray}
  \label{bona_alpha+}
  \widehat{\tilde{\alpha}}^+_n =\frac{(1-q/n) \hat{R}^+_{\mathbf{b}_n}}{q/n+(1-q/n) \hat{R}^+_{\mathbf{b}_n}}
\end{eqnarray}
with
\begin{equation}\label{hatRbn+}
 \hat{R}^+_{\mathbf{b}_n}= (1-q/n)\mathbf{1}' \hat{\mathbf{\Sigma}}_n^{+} \mathbf{1} \mathbf{b}_n' \hat{\mathbf{\Sigma}}_n \mathbf{b}_n - 1.
 \end{equation}
Now we are ready to state the central limit theorem for $\widehat{\tilde{\alpha}}^+_n$, which is a straightforward consequence of the proofs of Proposition \ref{alpha+} and Theorem \ref{th2}.
\begin{thm}
\label{th4}
	Let $q\equiv q(n)$ and $\tilde{c}_n = \frac{q}{n} \to \tilde{c} \in (0,1)$. Assume that $\{\mathbf{X}_t\}$
	is a sequence of independent and singular normally distributed $p$-dimensional random vectors with mean $\boldsymbol{\mu}$ and singular covariance matrix $\mathbf{\Sigma}$ with a rank $q$. Then
\begin{equation}\label{talp_asym+}
\sqrt{n}\frac{ \hat{\tilde{\alpha}}^+_n -A^+_n}{B^+_n} \stackrel{d}{\to} \mathcal{N}(0,1) ~~\text{as $n\to \infty$,}
\end{equation}
 where{\small
	 \begin{eqnarray*}
	 A^+_n & = & \frac{(1-\tilde{c}_n) R^+_{\mathbf{b}_n}}{\tilde{c}_n+(1-\tilde{c}_n) R^+_{\mathbf{b}_n}},  \\
           B_n^{2\;+} & = & 2 \frac{\tilde{c}_n^2 (1-\tilde{c}_n)(2-\tilde{c}_n) (1+R^+_{\mathbf{b}_n})}{((\tilde{c}_n + R^+_{\mathbf{b}_n}(1 - \tilde{c}_n))^4}\left(R^+_{\mathbf{b}_n}+\frac{\tilde{c}_n}{2-\tilde{c}_n}\right).
	 \end{eqnarray*}}
\end{thm}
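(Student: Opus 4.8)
The plan is to reduce the singular, rank-$q$ problem to the nonsingular $q$-dimensional problem already solved in Proposition \ref{alpha+} and Theorem \ref{th2}. Write the spectral decomposition $\mathbf{\Sigma} = \mathbf{P}\mathbf{\Lambda}\mathbf{P}'$, where $\mathbf{P}$ is the $p\times q$ matrix of eigenvectors associated with the positive eigenvalues and $\mathbf{\Lambda}$ is the $q\times q$ diagonal matrix of those eigenvalues. Since $\{\mathbf{X}_t\}$ is singular normal, $\mathbf{X}_t-\boldsymbol{\mu} = \mathbf{P}\mathbf{\Lambda}^{1/2}\mathbf{y}_t$ with $\mathbf{y}_t$ i.i.d.\ $\mathcal{N}(\mathbf{0},\mathbf{I}_q)$; hence $\hat{\mathbf{\Sigma}}_n = \mathbf{P}\mathbf{\Lambda}^{1/2}\mathbf{S}_n\mathbf{\Lambda}^{1/2}\mathbf{P}'$, where $\mathbf{S}_n$ is the ordinary $q$-dimensional sample covariance matrix of $\mathbf{y}_1,\dots,\mathbf{y}_n$, which is positive definite a.s.\ because $q<n$, and therefore $\hat{\mathbf{\Sigma}}_n^{+} = \mathbf{P}\mathbf{\Lambda}^{-1/2}\mathbf{S}_n^{-1}\mathbf{\Lambda}^{-1/2}\mathbf{P}'$. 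Substituting, $\mathbf{1}'\hat{\mathbf{\Sigma}}_n^{+}\mathbf{1} = \mathbf{u}'\mathbf{S}_n^{-1}\mathbf{u}$ and $\mathbf{b}_n'\hat{\mathbf{\Sigma}}_n\mathbf{b}_n = \mathbf{v}'\mathbf{S}_n\mathbf{v}$ with $\mathbf{u} = \mathbf{\Lambda}^{-1/2}\mathbf{P}'\mathbf{1}$, $\mathbf{v} = \mathbf{\Lambda}^{1/2}\mathbf{P}'\mathbf{b}_n$, and, crucially, $\mathbf{u}'\mathbf{u} = \mathbf{1}'\mathbf{\Sigma}^{+}\mathbf{1}$, $\mathbf{v}'\mathbf{v} = \mathbf{b}_n'\mathbf{\Sigma}\mathbf{b}_n$. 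Consequently $\widehat{\tilde{\alpha}}_n^{+}$ equals $g(\hat R^{+}_{\mathbf{b}_n},\tilde c_n)$ with $g(R,t) = (1-t)R/(t+(1-t)R)$ and $\hat R^{+}_{\mathbf{b}_n} = (1-\tilde c_n)\,\mathbf{u}'\mathbf{S}_n^{-1}\mathbf{u}\;\mathbf{v}'\mathbf{S}_n\mathbf{v}-1$; that is, it has the same algebraic structure as $\widehat{\tilde{\alpha}}_n$ in \eqref{estim.intens}, with $p$, $c_n$ and $\mathbf{1}'\mathbf{\Sigma}^{-1}\mathbf{1}$ replaced by $q$, $\tilde c_n$ and $\mathbf{1}'\mathbf{\Sigma}^{+}\mathbf{1}$.

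With this reduction, I would simply rerun the proof of Proposition \ref{alpha+} (to confirm the a.s.\ limit $\tilde\alpha_n^{+}$ and the centering $A_n^{+}$) followed by the proof of Theorem \ref{th2}. The starting point is a stochastic representation of the pair $(\mathbf{u}'\mathbf{S}_n^{-1}\mathbf{u},\,\mathbf{v}'\mathbf{S}_n\mathbf{v})$: decomposing the Wishart matrix $(n-1)\mathbf{S}_n$ with respect to the direction $\mathbf{u}$, the associated Schur complement is a $\chi^2_{n-q}$ variable independent of the remaining blocks; $\mathbf{u}'\mathbf{S}_n^{-1}\mathbf{u}$ is then $(n-1)\|\mathbf{u}\|^2$ divided by this $\chi^2$, while $\mathbf{v}'\mathbf{S}_n\mathbf{v}$ is a linear combination of the same $\chi^2$ and a quadratic form in the independent blocks — exactly as in the nonsingular derivation with $q$ in place of $p$. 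The boundedness hypothesis $0<M_l\le 1/\mathbf{1}'\mathbf{\Sigma}^{+}\mathbf{1}\le\mathbf{b}_n'\mathbf{\Sigma}\mathbf{b}_n\le M_u<\infty$ of Proposition \ref{alpha+} is the analogue of the uniform boundedness of $\mathbf{b}_n'\mathbf{\Sigma}\mathbf{b}_n$ in Theorem \ref{th2} and keeps $R^{+}_{\mathbf{b}_n}$ bounded and bounded away from the singularity $\tilde c_n+(1-\tilde c_n)R^{+}_{\mathbf{b}_n}=0$, so that a delta-method argument is licit. Expanding $\hat R^{+}_{\mathbf{b}_n}$ around its mean gives $\sqrt n\,(\hat R^{+}_{\mathbf{b}_n}-R^{+}_{\mathbf{b}_n})\stackrel{d}{\to}\mathcal{N}(0,\sigma^2)$ with $\sigma^2$ the same function of $(\tilde c,R^{+}_{\mathbf{b}_n})$ as in the nonsingular case; applying the delta method to $\widehat{\tilde{\alpha}}_n^{+}=g(\hat R^{+}_{\mathbf{b}_n},\tilde c_n)$ with $g'_R(R,t)=t(1-t)/(t+(1-t)R)^2$ then reproduces $A_n^{+}$ and $B_n^{2\,+}$ as stated — indeed $B_n^{2\,+}$ is obtained from $B_n^2$ of Theorem \ref{th2} by the substitution $c_n\mapsto\tilde c_n$, $R_{\mathbf{b}_n}\mapsto R^{+}_{\mathbf{b}_n}$.

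The only genuinely new bookkeeping — and where I expect the effort to lie — is handling the dependence between $\mathbf{1}'\hat{\mathbf{\Sigma}}_n^{+}\mathbf{1}$ and $\mathbf{b}_n'\hat{\mathbf{\Sigma}}_n\mathbf{b}_n$: these are built from the same Wishart matrix, share the common Schur-complement $\chi^2$, and one must carry the conditional decomposition far enough to see that the leading-order cross term cancels and that the residual fluctuations split into the two independent contributions recorded by $B_n^{2\,+}$, while also checking the uniform integrability needed to pass from the stochastic representation to the central limit theorem. Since after the reduction above this is verbatim the argument already used for Theorem \ref{th2} with the symbols $p,c,\mathbf{\Sigma}^{-1}$ replaced by $q,\tilde c,\mathbf{\Sigma}^{+}$, no new analytic ingredient is required, which is precisely why the statement is a straightforward consequence of the proofs of Proposition \ref{alpha+} and Theorem \ref{th2}.
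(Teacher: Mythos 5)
Your reduction is exactly the paper's: its whole proof of this theorem consists of observing that the representation $\hat{\mathbf{\Sigma}}_n^{+}\stackrel{d}{=}\mathbf{Q}\mathbf{\Lambda}^{-1/2}\bigl(\tfrac{1}{n-1}\mathbf{Z}_n\mathbf{Z}_n'\bigr)^{-1}\mathbf{\Lambda}^{-1/2}\mathbf{Q}'$ from the proof of Proposition \ref{alpha+} turns $\mathbf{1}'\hat{\mathbf{\Sigma}}_n^{+}\mathbf{1}$ and $\mathbf{b}_n'\hat{\mathbf{\Sigma}}_n\mathbf{b}_n$ into quadratic forms in a nonsingular $q$-dimensional Wishart matrix, so that Lemma \ref{lem1} carries over with $p\mapsto q$, $c_n\mapsto\tilde c_n$, $\mathbf{\Sigma}^{-1}\mapsto\mathbf{\Sigma}^{+}$, after which the proof of Theorem \ref{th2} applies verbatim; your first two paragraphs reproduce this. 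Where you deviate is in how you propose to obtain the joint fluctuations of the two quadratic forms: the paper does no Schur-complement bookkeeping at all, it simply imports Lemma \ref{lem1}, which rests on a random-matrix CLT (Theorem 3 of Bai et al.\ cited there) that gives the asymptotic covariance of $(\mathbf{x}'\mathbf{S}_n\mathbf{x},\mathbf{y}'\mathbf{S}_n^{-1}\mathbf{y})$ in closed form. Your explicit Wishart decomposition is a viable substitute, but correct one remark: the leading-order cross term does \emph{not} cancel. The shared $\chi^2_{n-q}$ block produces an asymptotic covariance proportional to $-(\mathbf{x}'\mathbf{y})^2/(1-\tilde c)=-1/\bigl((1-\tilde c)\Delta_n\bigr)$ (using $\mathbf{b}_n'\mathbf{1}=1$, $\Delta_n=R^{+}_{\mathbf{b}_n}+1$), and it is precisely this nonvanishing term that turns the naive variance $2(2-\tilde c)/(1-\tilde c)$ of the relevant linear combination into $2\bigl((2-\tilde c)/(1-\tilde c)-2/\Delta_n\bigr)$ and hence generates the factor $R^{+}_{\mathbf{b}_n}+\tilde c/(2-\tilde c)$ in $B_n^{2\,+}$; if the cross term really vanished you would not recover the stated variance. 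With that correction, your argument coincides with the paper's.
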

Next, we are ready to introduce a test based on the estimated shrinkage intensity for testing the hypotheses
\begin{equation}\label{hypotheses_shr+}
H_{0}:\tilde{\alpha}^+_n=0\qquad \textrm{against} \qquad H_{1}:\tilde{\alpha}^+_n>0
\end{equation}
which are equivalent to
\begin{eqnarray*}
\label{hypotheses_sin+}
     H_{0}:\mathbf{\tilde{w}}_{GMVP}=\mathbf{\tilde{r}}\qquad \textrm{against} \qquad H_{1}:\mathbf{\tilde{w}}_{GMVP}\not= \mathbf{\tilde{r}}.
\end{eqnarray*}
Similarly as in the case of a nonsingular covariance matrix, we use the test statistic $S^+_n= \sqrt{n} \; \hat{\tilde{\alpha}}^+(\mathbf{b}_n = \mathbf{r})$ for testing (\ref{hypotheses_shr+}).
From Theorem \ref{th4} we get
\begin{equation*}
\frac{S^+_n-\sqrt{n}A^+_n}{B^+_n} \stackrel{d}{\to} \mathcal{N}(0,1)~~ \text{for $q/n\to \tilde{c}\in (0,1)$ as $n\to \infty$},
\end{equation*}
where $A^+_n$ and $B^+_n$ are given in the statement of Theorem \ref{th4}. Under the null hypothesis, $S^+_n \stackrel{d}{\to} \mathcal{N}\left(0, 2(1-\tilde{c})/\tilde{c}\right)$ for $q/n\to \tilde{c}\in (0,1)$ as $n\to \infty$. The power function can be constructed in a similar manner as in the case of a nonsingular matrix $\mathbf{\Sigma}$.

This result extends our previous findings and suggests that we may still use the test based on the optimal shrinkage intensity in the case of a singular population covariance matrix with the only difference that instead of $p/n\to c\in (0, 1)$ we demand $q/n\to \tilde{c} \in (0, 1)$ as $n\to \infty$ and instead of the usual inverse we can safely use the Moore-Penrose inverse of the sample covariance matrix $\hat{\mathbf{\Sigma}}_n$. Moreover, the test based on the shrinkage intensity needs no multiple testing scheme, which indicates its huge advantage over the test based on the Mahalanobis distance.

\section{Comparison Study}

The aim of this section is to compare several tests for the weights of the GMVP.

In the preceding two subsections, we considered two tests for the weights of the GMVP. For the test based on the empirical portfolio weights, the exact distribution of the test statistic is known. In Section III.A, the asymptotic power function of the test proposed by \citet*{bodnar2008test} is derived in a high-dimensional setting. In Section III.B, a new test is proposed, and its asymptotic power function, which purely depends on $R_{\mathbf{b}_n=\mathbf{r}}$, is determined. The fact that both tests depend on different quantities
complicates the comparison of both tests. Note that
  \begin{eqnarray*}
  &&  R_{\mathbf{b}_n=\mathbf{r}} = \mathbf{1}' \mathbf{\Sigma}^{-1} \mathbf{1} \; \mathbf{r}' \mathbf{\Sigma} \mathbf{r} - 1 \\
    &=& \lambda_n \; \frac{\mathbf{r}' \mathbf{\Sigma} \mathbf{r}}{(\mathbf{w}^*_{GMVP} - \mathbf{r}^*)' ( \mathbf{Q}^* )^{-1} (\mathbf{w}^*_{GMVP} - \mathbf{r}^*)} \; - \; 1  .
  \end{eqnarray*}

Here, both tests are compared with each other via simulations. Additionally, we include the test presented by \citet[Theorem 10]{glombeck} in our comparison study as well as tests derived for a singular covariance matrix in Section III.C.

\subsection{Design of the Comparison Study}

Let $\mathbf{\Sigma}$ be a $p\times p$ positive semi-definite covariance matrix of asset returns, $n$ the number of samples, and $p\equiv p(n)$.
		The structure of the covariance matrix is chosen in the following way: one-ninth of the non-zero eigenvalues are set equal to $2$, four-ninths are set equal to $5$, and the rest are set equal to $10$. A similar structure of the spectrum of the populaion covariance matrix is present in \citet*{ledoit2012nonlinear}. In doing so, we can ensure that the eigenvalues are not very dispersed, and if $p$ increases, then the spectrum of the covariance matrix does not change its behaviour. Then, the covariance matrix is determined as follows
		\begin{equation*}
			\mathbf{\Sigma}=\Theta\mathbf{\Lambda}\Theta^{'},
		\end{equation*}
where $\mathbf{\Lambda}$ is the diagonal matrix whose diagonal elements are the predefined eigenvalues and $\Theta$ is the $p\times p$ matrix of eigenvectors obtained from the spectral decomposition of a standard Wishart-distributed random matrix.

We consider the following scenario for modelling the changes. Under the alternative hypothesis, the covariance matrix is defined by
\begin{equation}\label{scenario1}
\mathbf{\Sigma}_{1}=\Theta\Delta\mathbf{\Lambda}\Delta\Theta^{'},
\end{equation}
where
\begin{equation}
\Delta=\left( \begin{array}{c|c}
D_m & \textbf{0} \\
\hline
\textbf{0} & I_{p-m}
\end{array}\right) ,
\end{equation}
with $D_m=diag(a)$ and $a=1+0.1\kappa$,\, $\kappa \in \{1,2,\ldots, 15\}$, $m \in \{0.2p,0.8p\}$ when $\bSigma$ is non-singular and $m \in \{0.2q,0.8q\}$ when $\bSigma$ is singular. The matrix $\Delta$ determines the deviations from the null hypothesis due to changes in the eigenvalues of the covariance matrix $\mathbf{\Sigma}$. Other specifications of the covariance matrix $\mathbf{\Sigma}$ under the alternative hypothesis might be considered as well.

\subsection{Comparison of the Tests}

\begin{figure}[th!]
		\centering
		\begin{tabular}{cc}
			\includegraphics[width=0.45\linewidth]{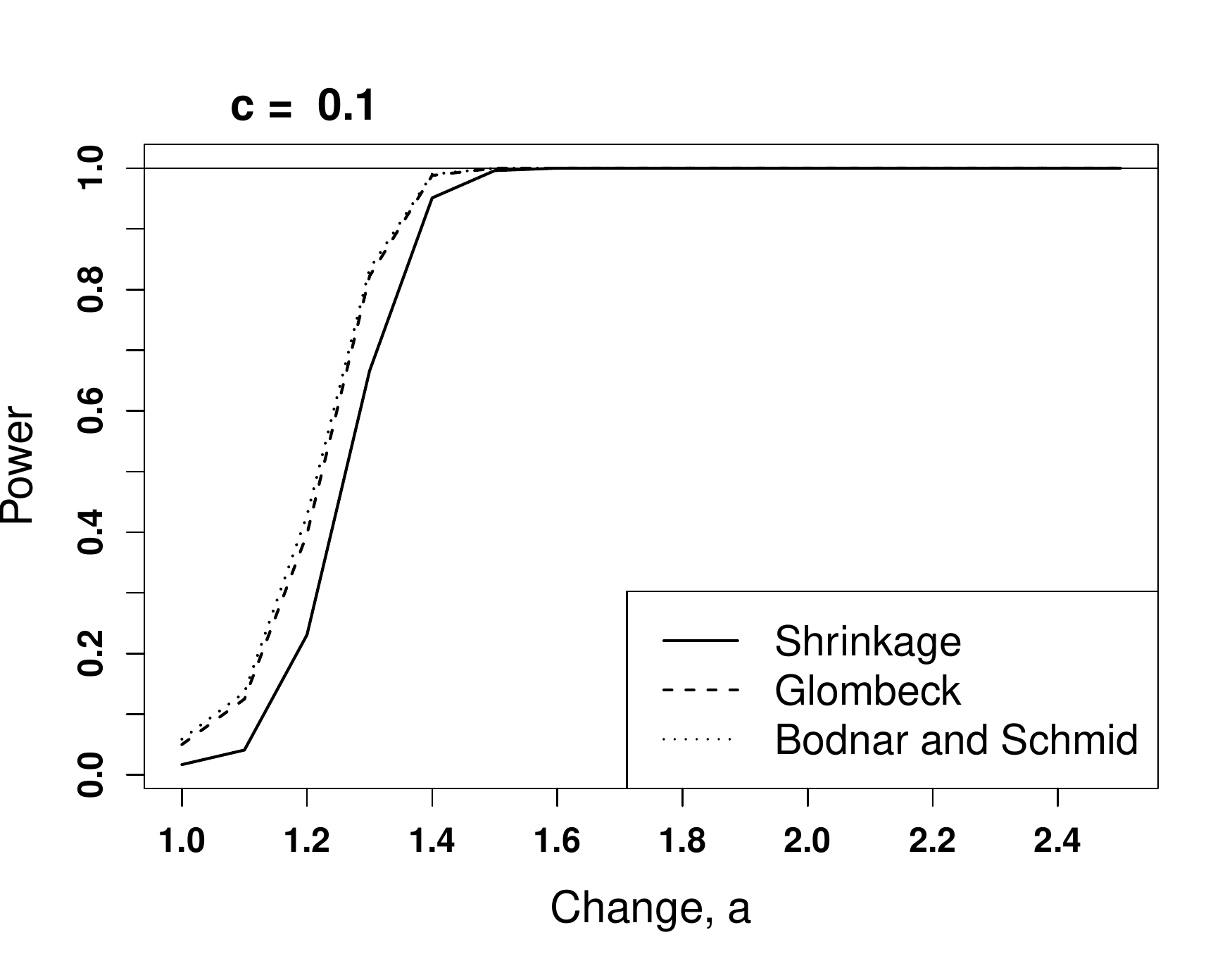} &	\includegraphics[width=0.45\linewidth]{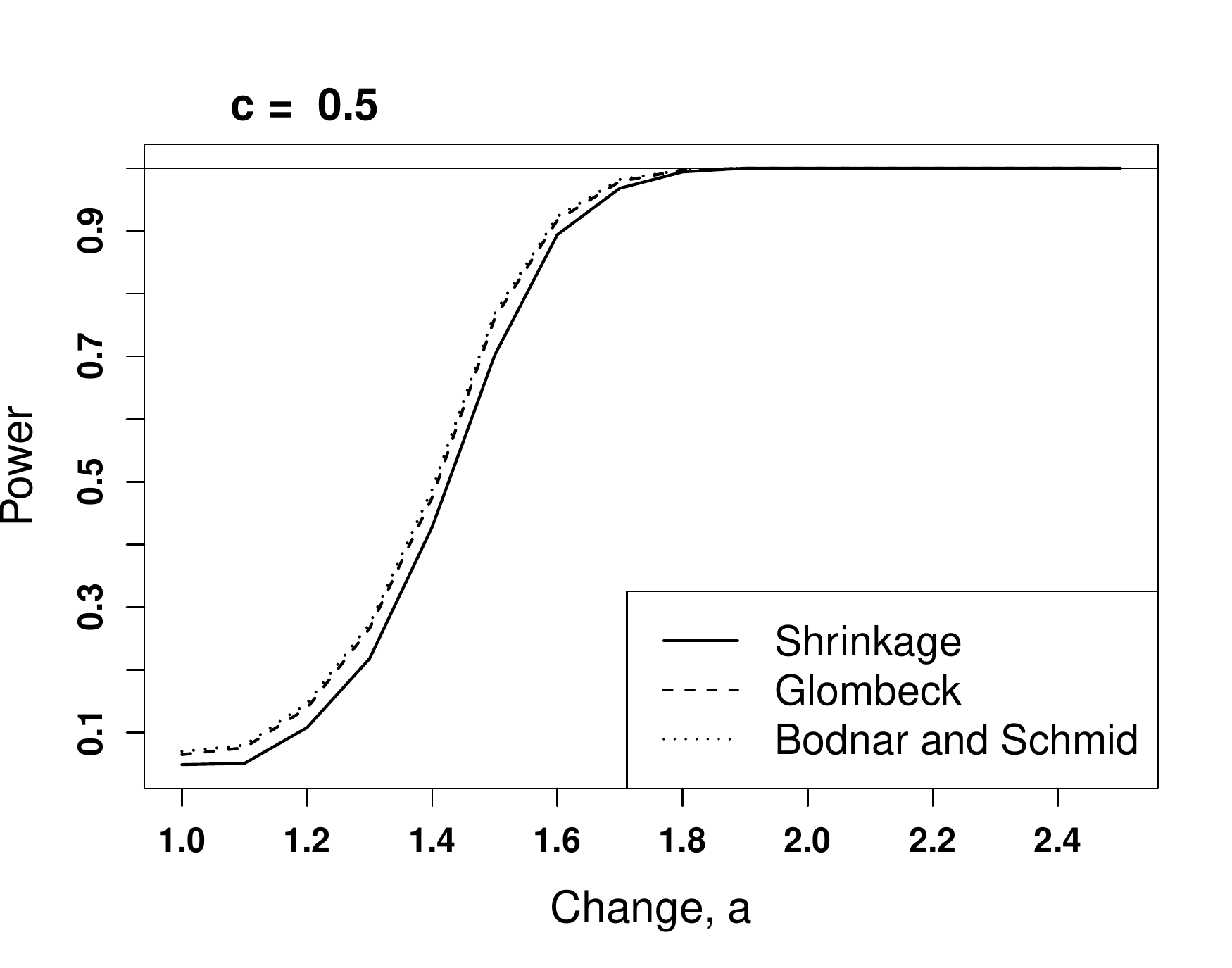}\\
		
			\includegraphics[width=0.45\linewidth]{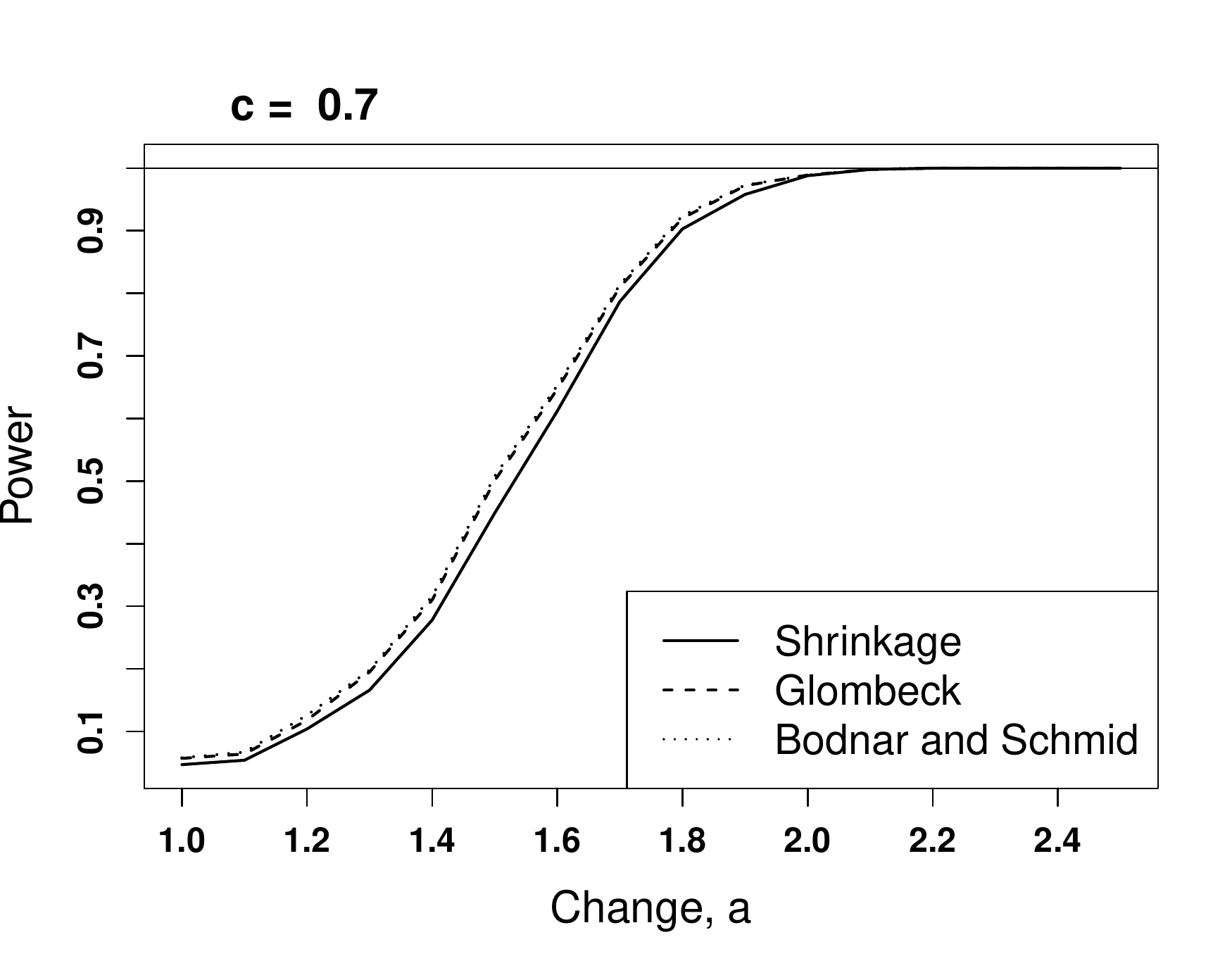} &	\includegraphics[width=0.45\linewidth]{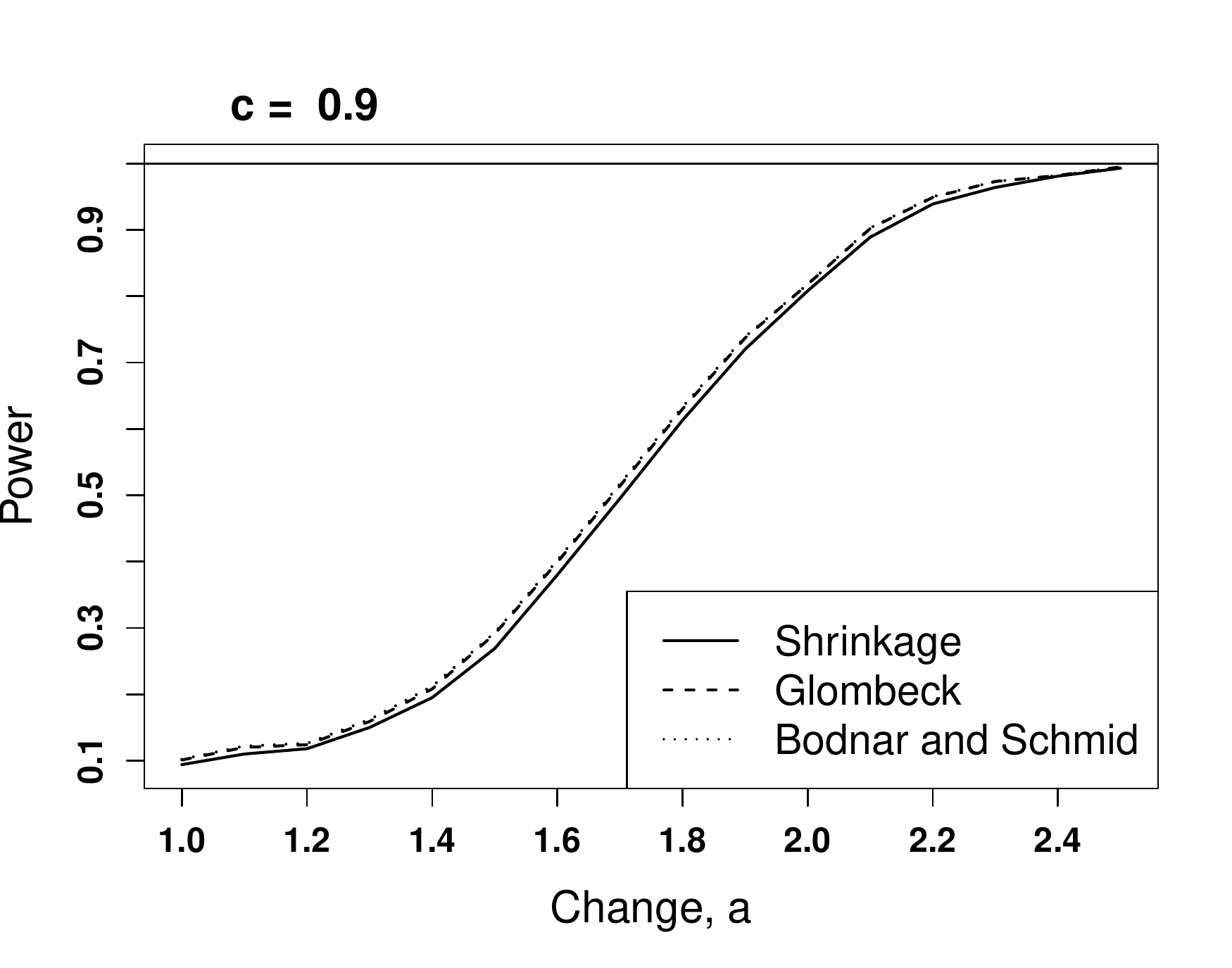}
			
		\end{tabular}
	\caption{Empirical power functions of the three tests for different values of $c$, $20\%$ changes on the main diagonal according to scenario given in (\ref{scenario1}) and $n=500$.}
\label{fig:fig3}	
\end{figure}

\begin{figure}[th!]
		\centering
		\begin{tabular}{cc}
			\includegraphics[width=0.45\linewidth]{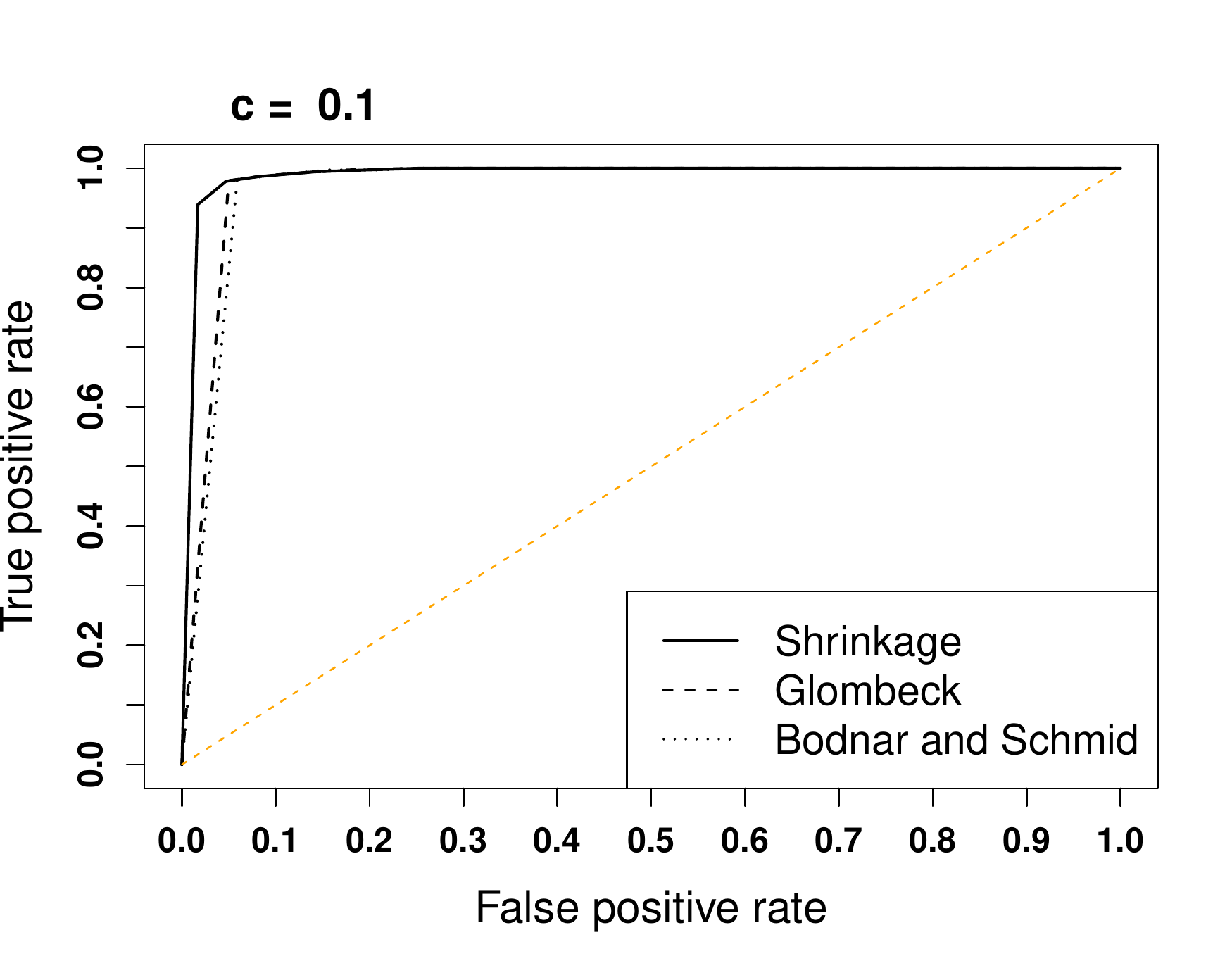} &	\includegraphics[width=0.45\linewidth]{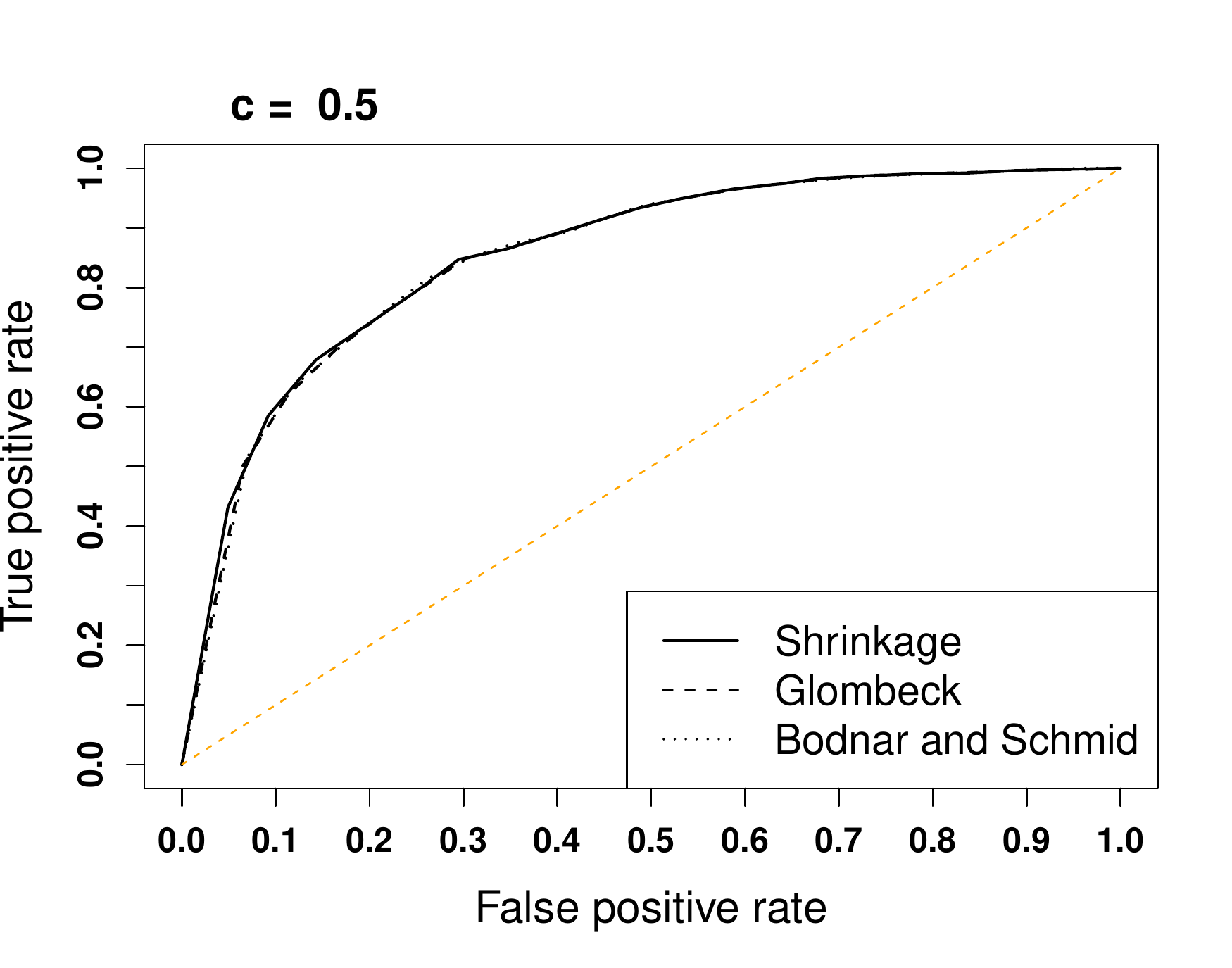}\\
		
			\includegraphics[width=0.45\linewidth]{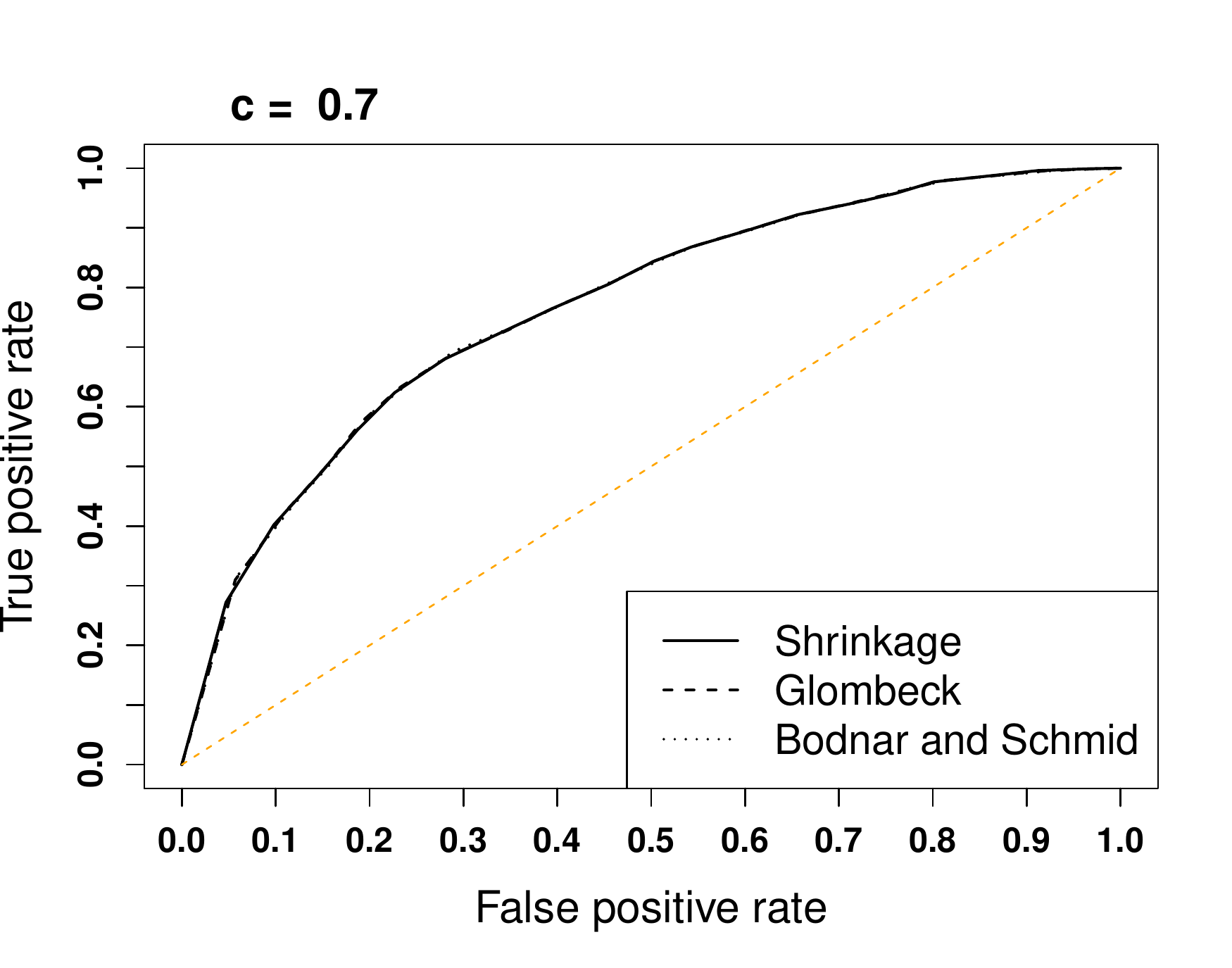} &	\includegraphics[width=0.45\linewidth]{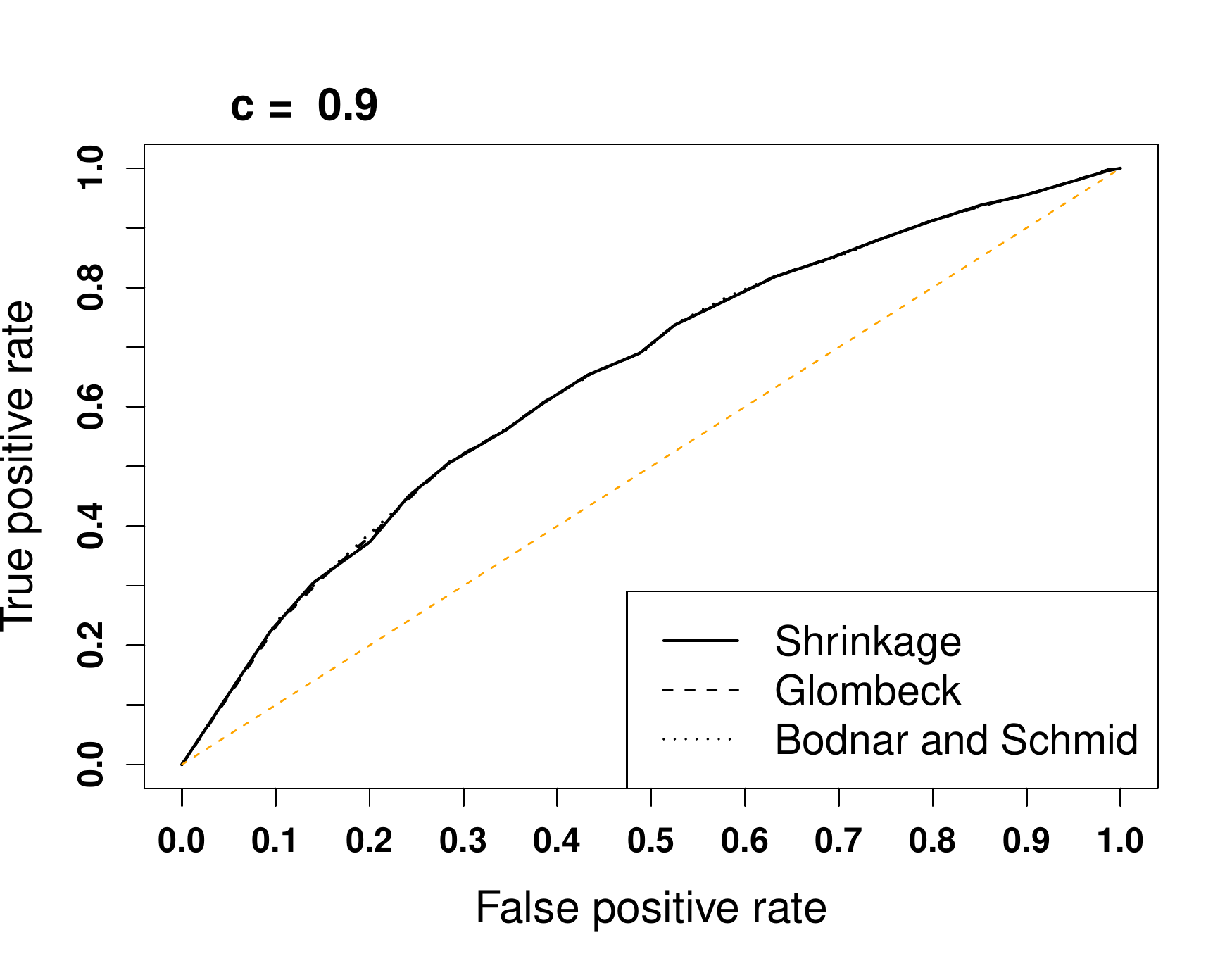}
			
		\end{tabular}
	\caption{ROC of the three tests for different values of $c$, $20\%$ changes on the main diagonal according to scenario given in (\ref{scenario1}) and $n=500$.}
\label{fig:fig4}	
\end{figure}	

\begin{figure}[th!]
		\centering
		\begin{tabular}{cc}
			\includegraphics[width=0.45\linewidth]{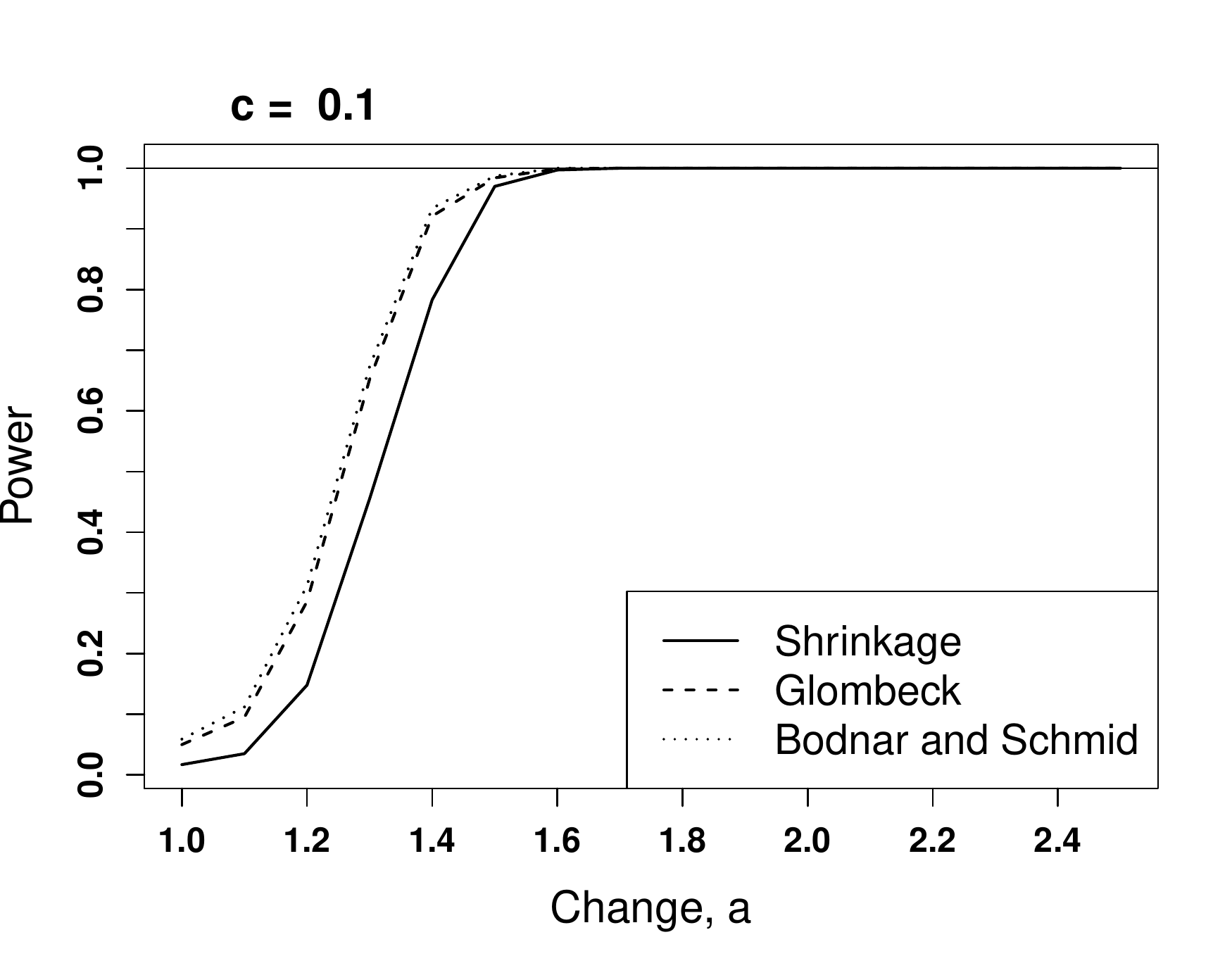} &	\includegraphics[width=0.45\linewidth]{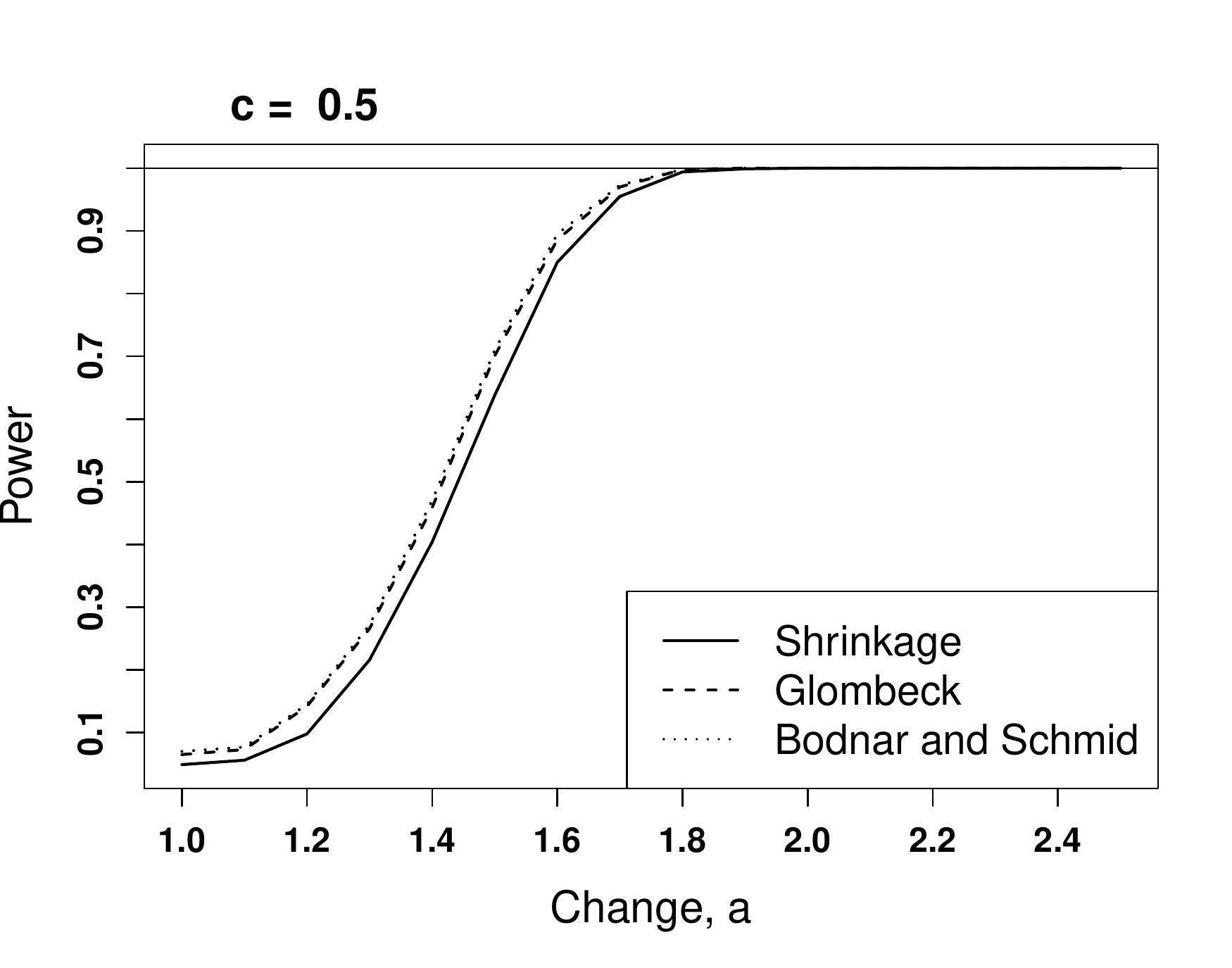}\\
		
			\includegraphics[width=0.45\linewidth]{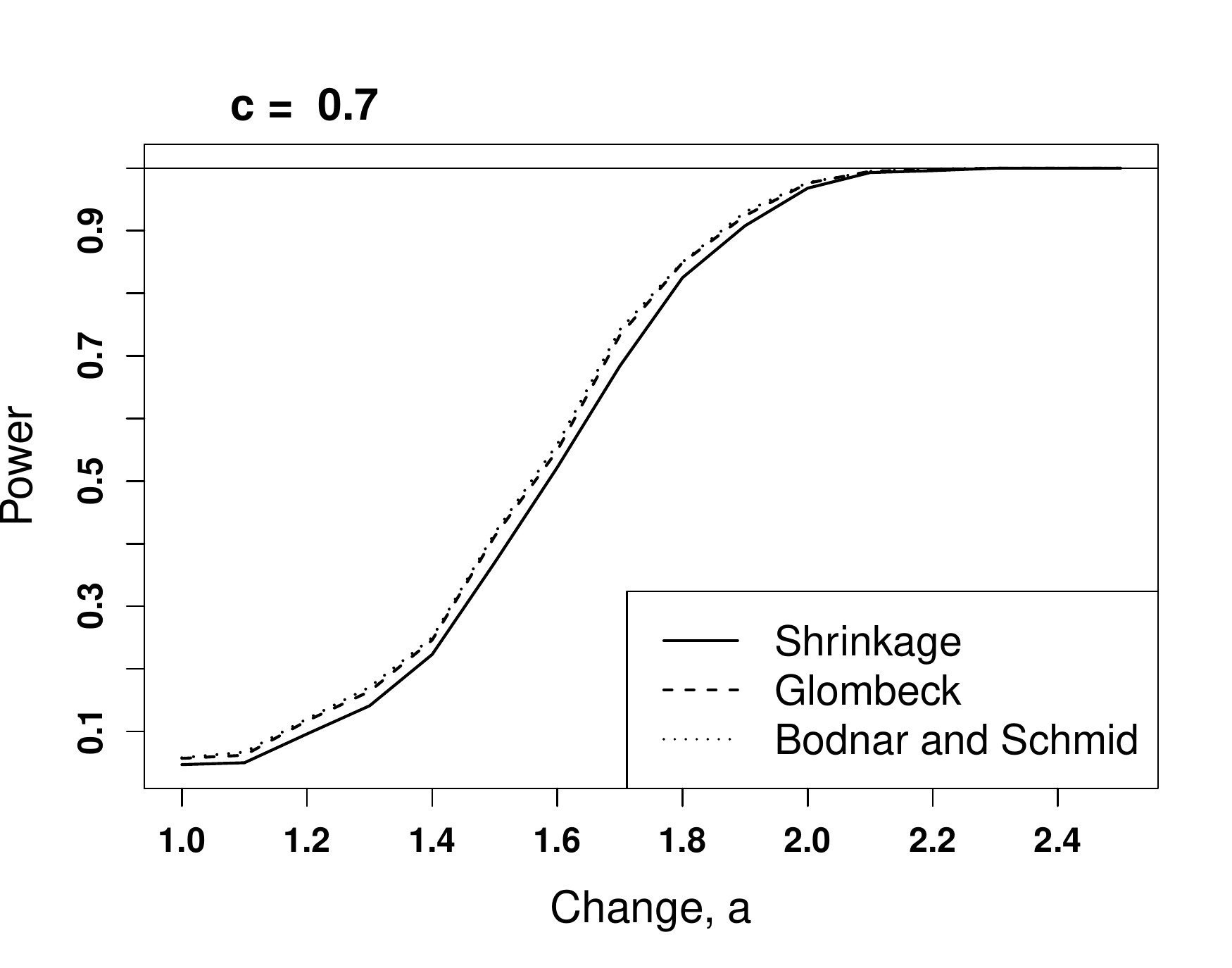} &	\includegraphics[width=0.45\linewidth]{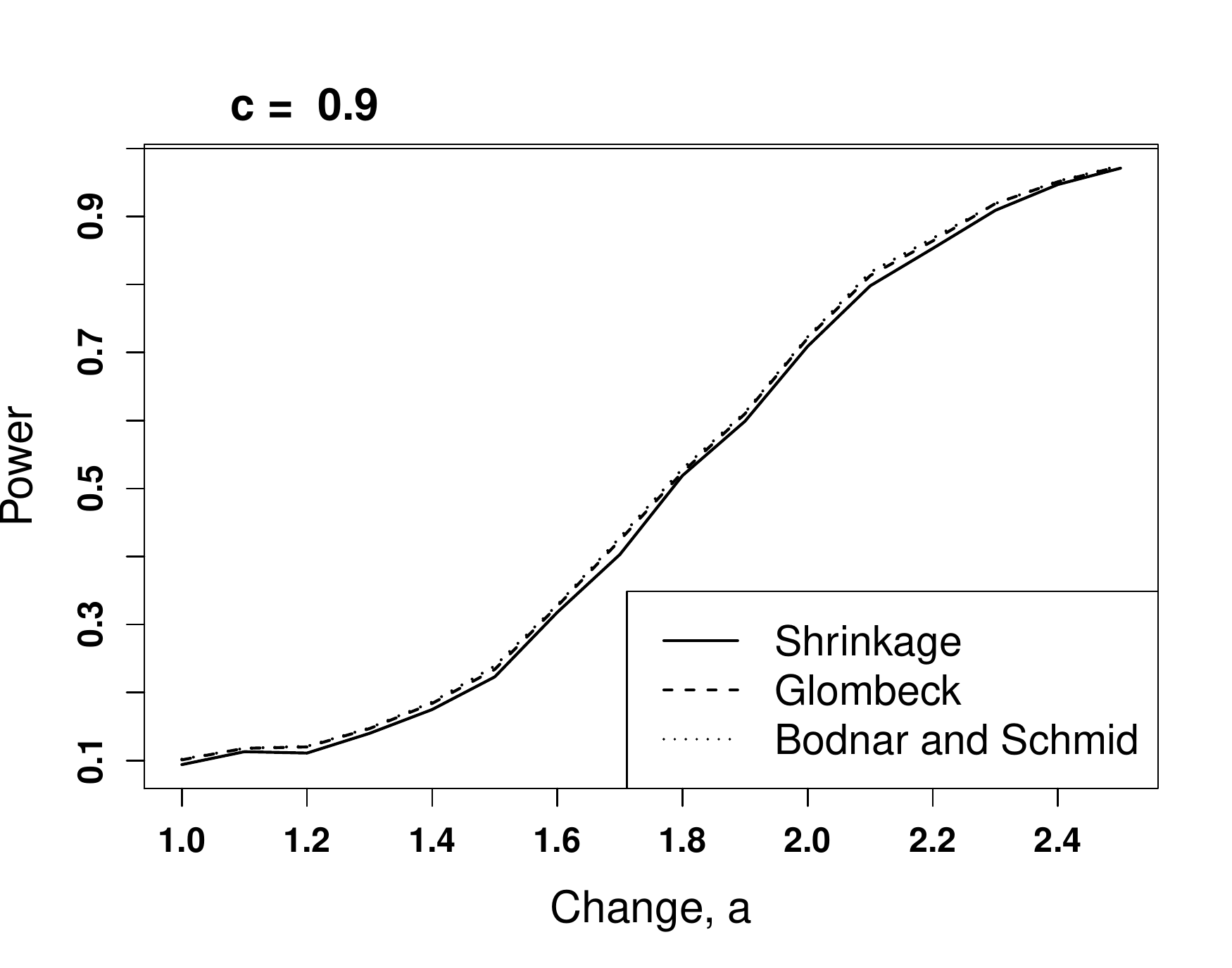}
			
		\end{tabular}
	\caption{Empirical power functions of the three tests for different values of $c$, $50\%$ changes on the main diagonal according to scenario given in (\ref{scenario1}) and $n=500$.}
\label{fig:fig5}	
\end{figure}	

\begin{figure}[th!]
		\centering
		\begin{tabular}{cc}
			\includegraphics[width=0.45\linewidth]{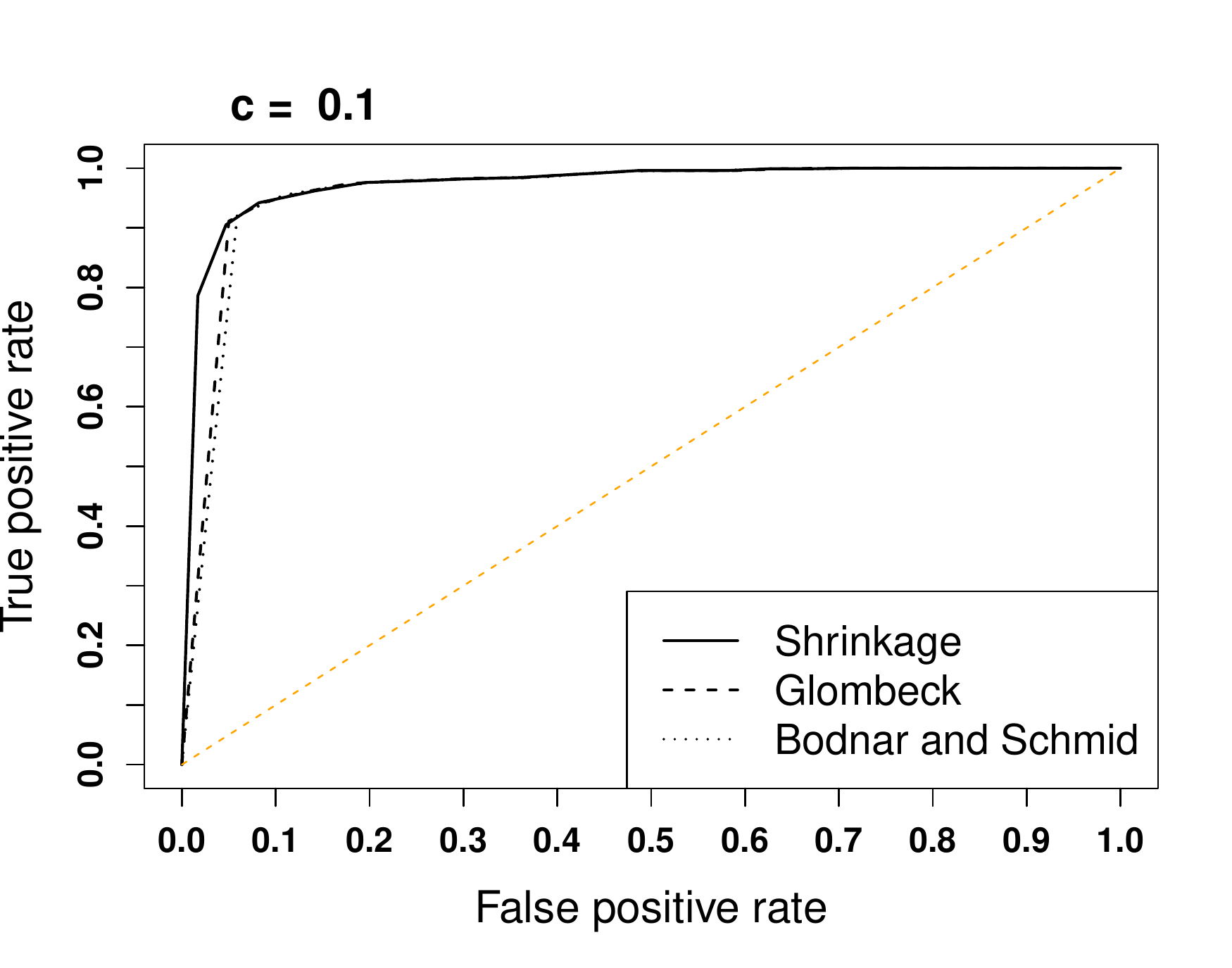} &	\includegraphics[width=0.45\linewidth]{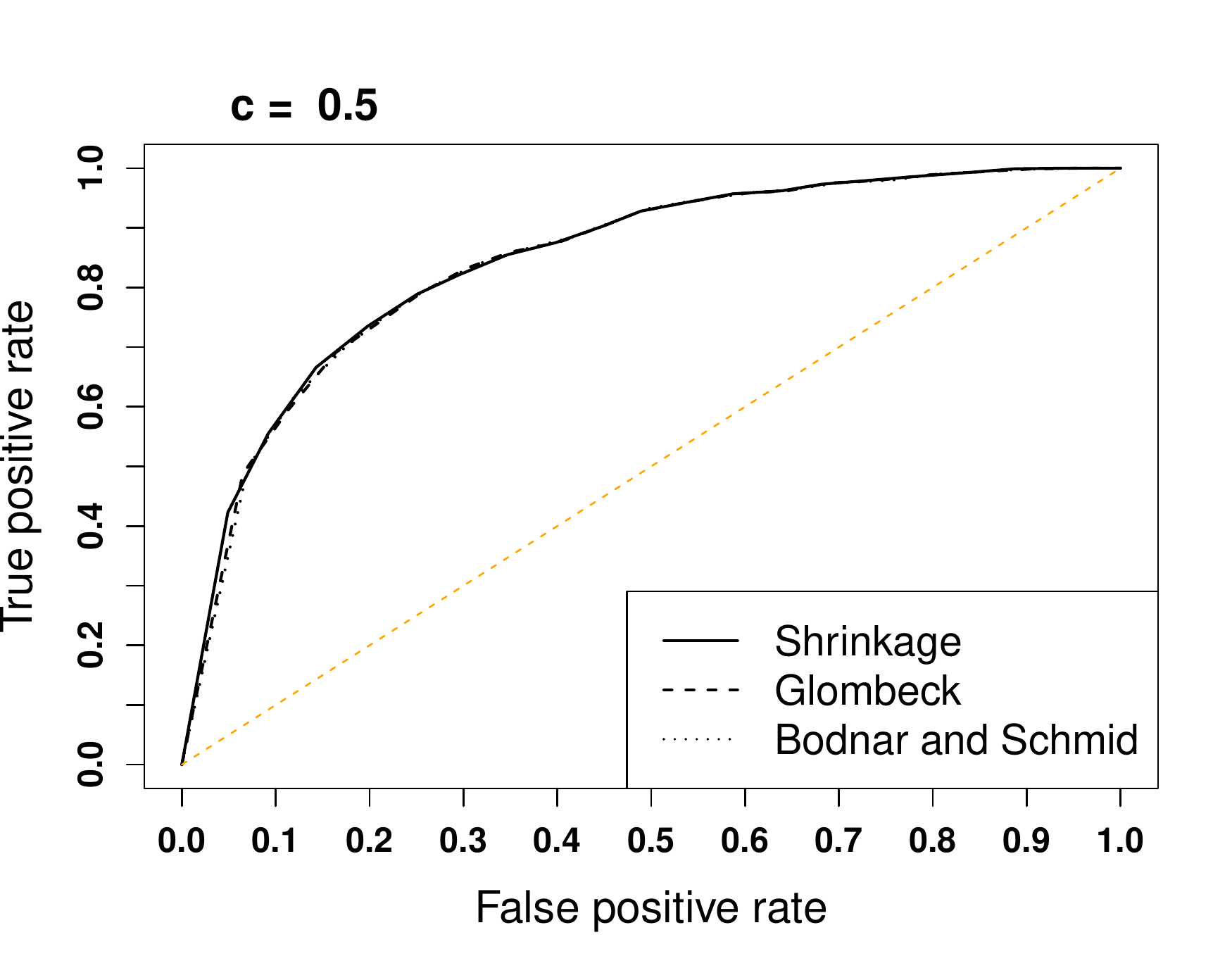}\\
		
			\includegraphics[width=0.45\linewidth]{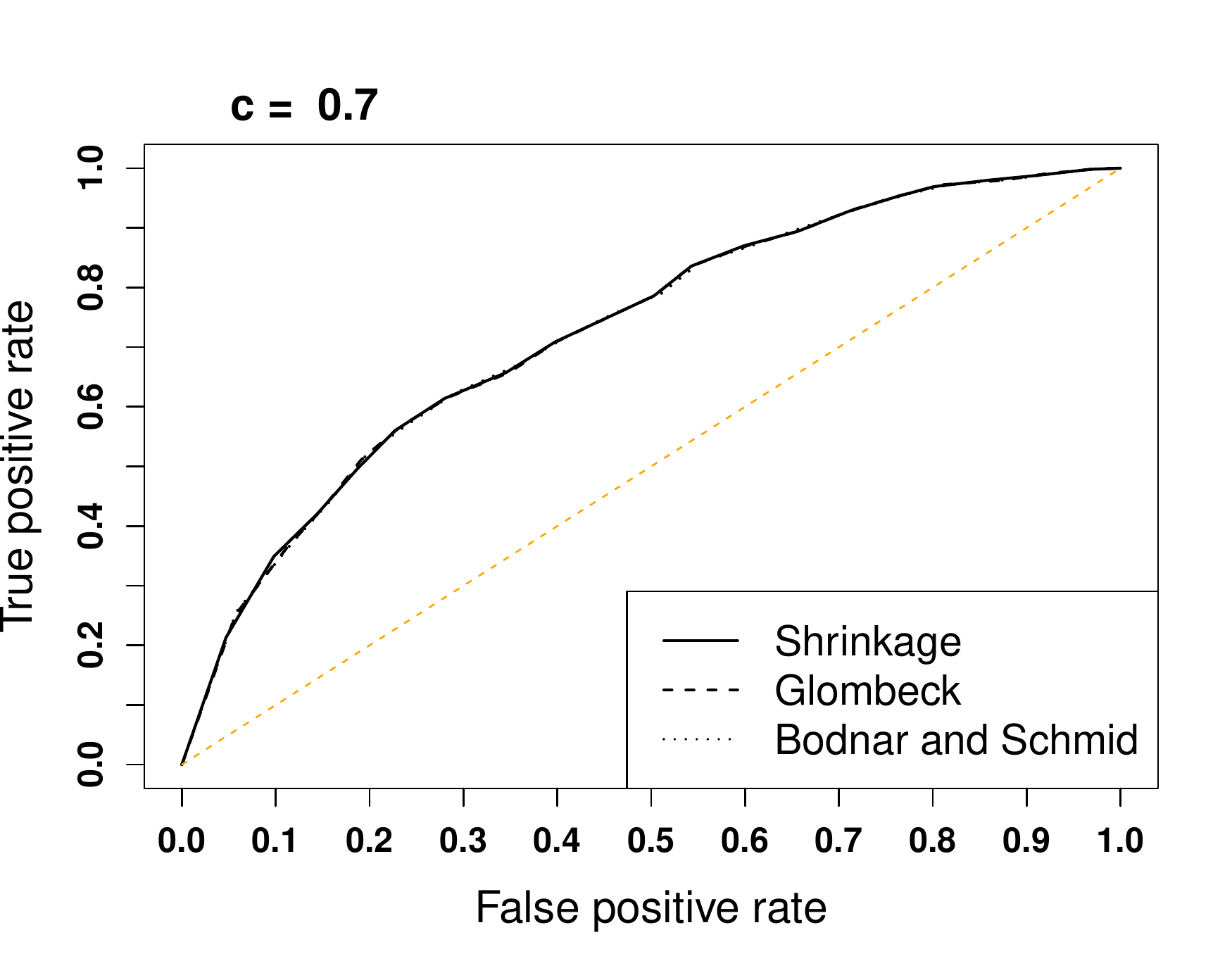} &	\includegraphics[width=0.45\linewidth]{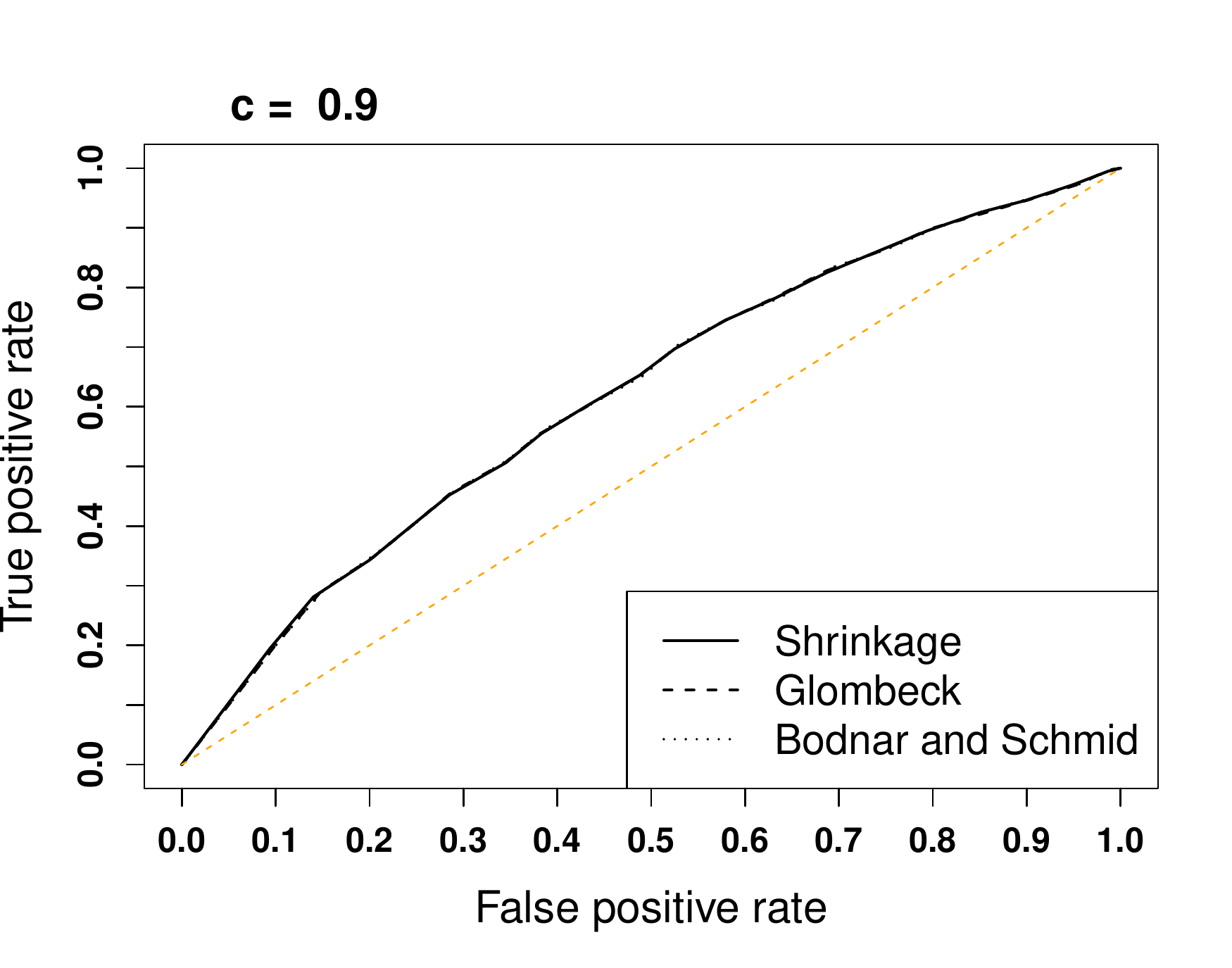}
			
		\end{tabular}
	\caption{ROC of the three tests for different values of $c$, $50\%$ changes on the main diagonal according to scenario given in (\ref{scenario1}) and $n=500$.}
\label{fig:fig6}	
\end{figure}	

In this section, we present the results of a simulation study to compare the power functions and the ROC (Receiver Operating Characteristic) curves of three tests in the case of a non-singular covariance matrix, of five tests when $\bSigma$ is singular and $p<n$, and of two tests when $\bSigma$ is singular and $p>n$. Our simulation study is based on $10^5$ independent realizations of $\Delta$. The significance level $\alpha$ is chosen to be $5\%$ in the figures showing the power functions and $a=1.4$ in the figures with the ROC curves. We set $n=500$, choose $c \in \{0.1,0.5,0.7,0.9\}$ when $\bSigma$ is non-singular, and use $\tilde{c} \in \{0.2,0.6\}$ in the singular case. Furthermore, we consider $p\in \{450, 600\}$ in the singular case.

In order to illustrate the performance of the tests based on the shrinkage approach, the test based on the statistic of \citet*{bodnar2008test}, and the test proposed by \citet*{glombeck} for the non-singular covariance matrix, the empirical power functions for the general hypothesis are evaluated for $m=0.2p$ (Figure \ref{fig:fig3}) and $m=0.5p$ (Figure \ref{fig:fig5}) while the ROC curves are presented in Figure \ref{fig:fig4} ($m=0.2p$) and Figure \ref{fig:fig6} ($m=0.5p$).

In Figure \ref{fig:fig3}, where $20\%$ of the eigenvalues of the covariance matrix are contaminated, we observe a slow increase of the power functions for $c=0.9$ and a better behaviour for smaller values of $c$. In the case $c=0.9$, there is no significant difference in the performance of the tests. For all considered values of $c$, the power curves of Glombeck's test and the test of \citet*{bodnar2008test} are very close to each other and they lie slightly above the power curve of the test based on the shrinkage approach. Some larger deviations are present in the case $c=0.1$. While in terms of the power the tests of \citet*{glombeck} and of \citet*{bodnar2008test} outperform the test based on the shrinkage approach, the opposite conclusion is drawn when the tests are compared by using their ROC curves. Here, we observe that the new approach performs better than the other two competitors. These two different performance results can be explained by the observation that the test based on the shrinkage approach tends to be in general undersized for small values of $c$ which are not of great importance for the proposed high-dimensional approach. Finally, we observe a similar behavior of the tests in terms of both the power functions and the ROC curves in Figures \ref{fig:fig5} and \ref{fig:fig6} for $m=0.5p$.

\begin{figure}[th!]
		\centering
		\begin{tabular}{cc}
			\includegraphics[width=0.45\linewidth]{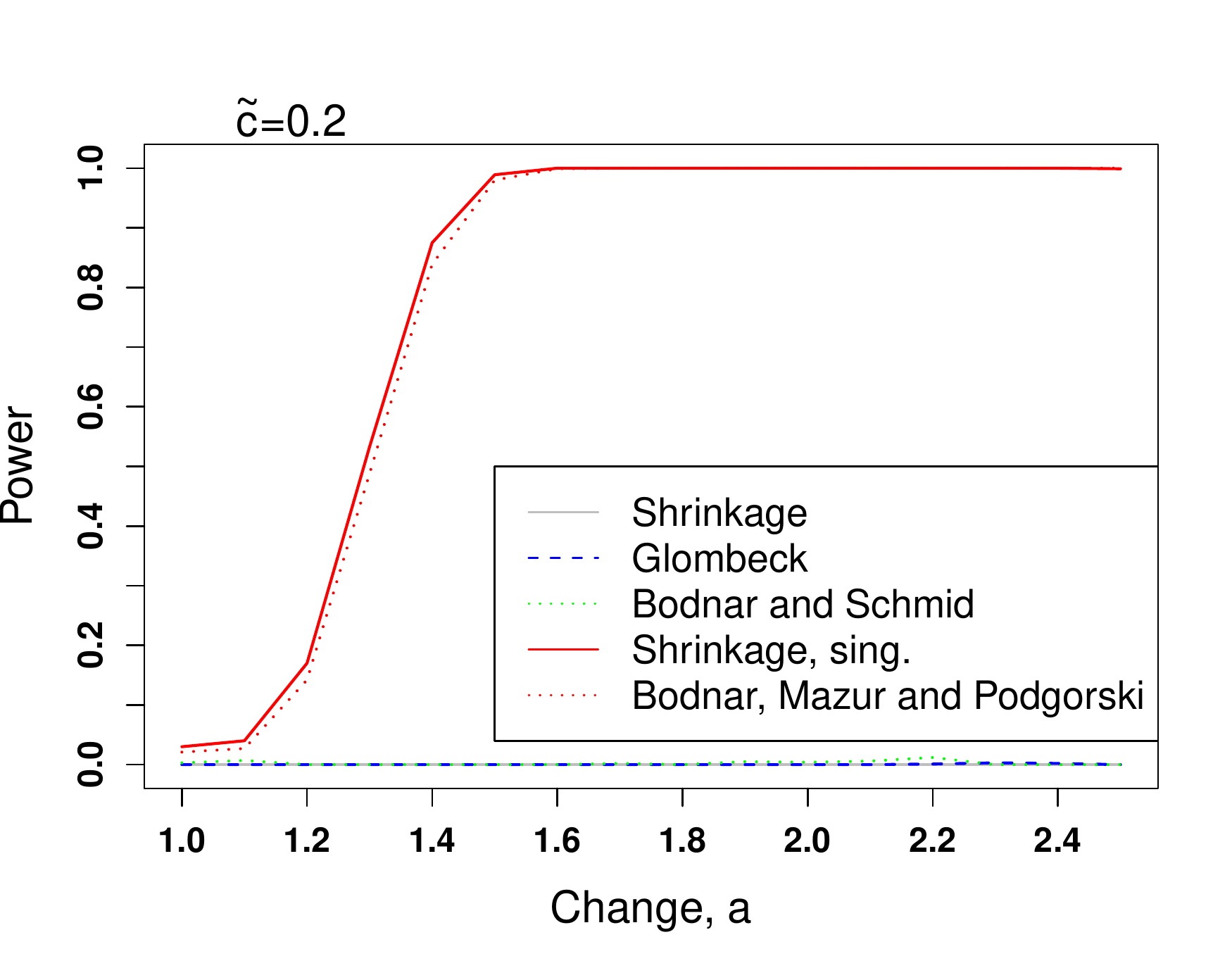} &	\includegraphics[width=0.45\linewidth]{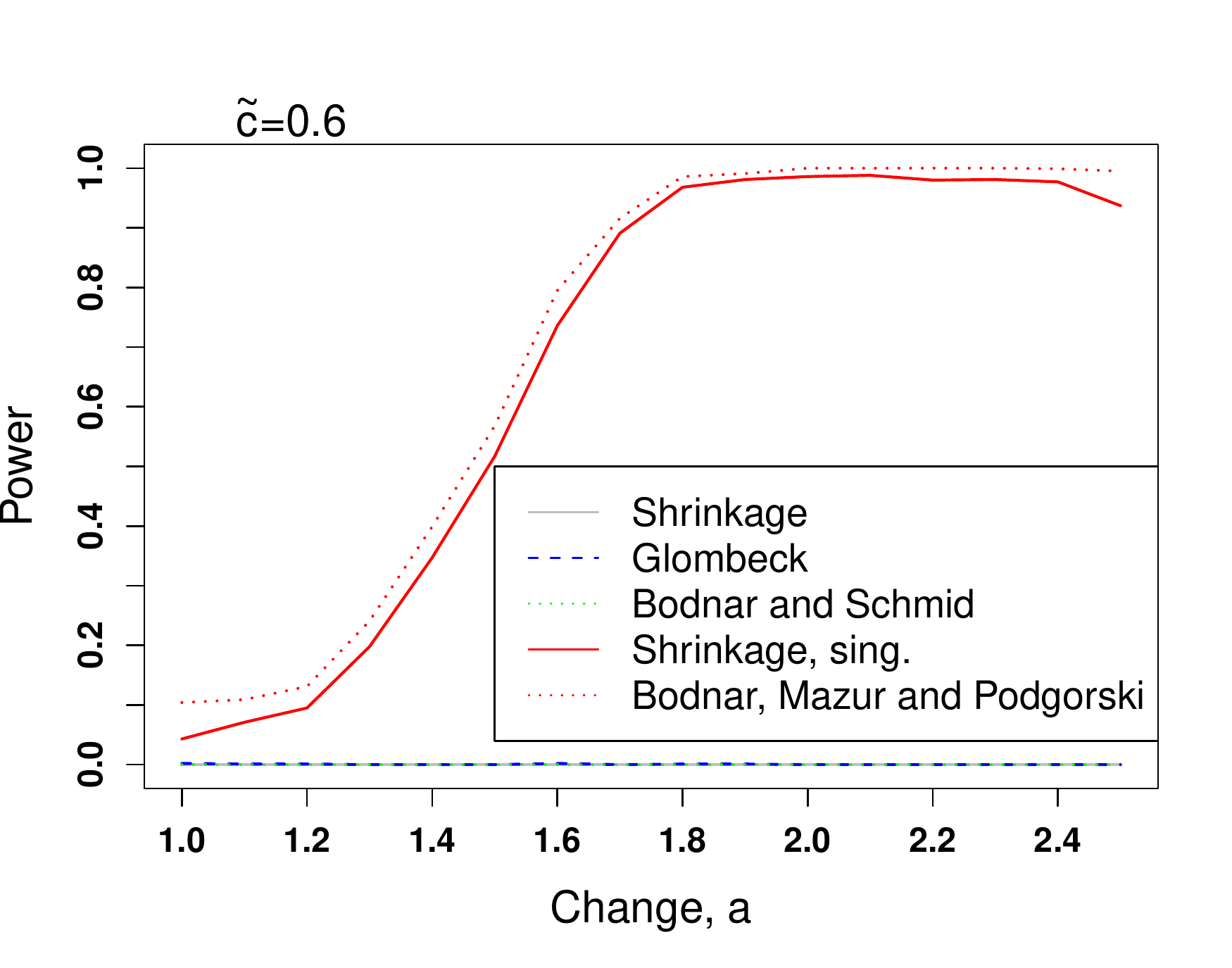}\\
		
			\includegraphics[width=0.45\linewidth]{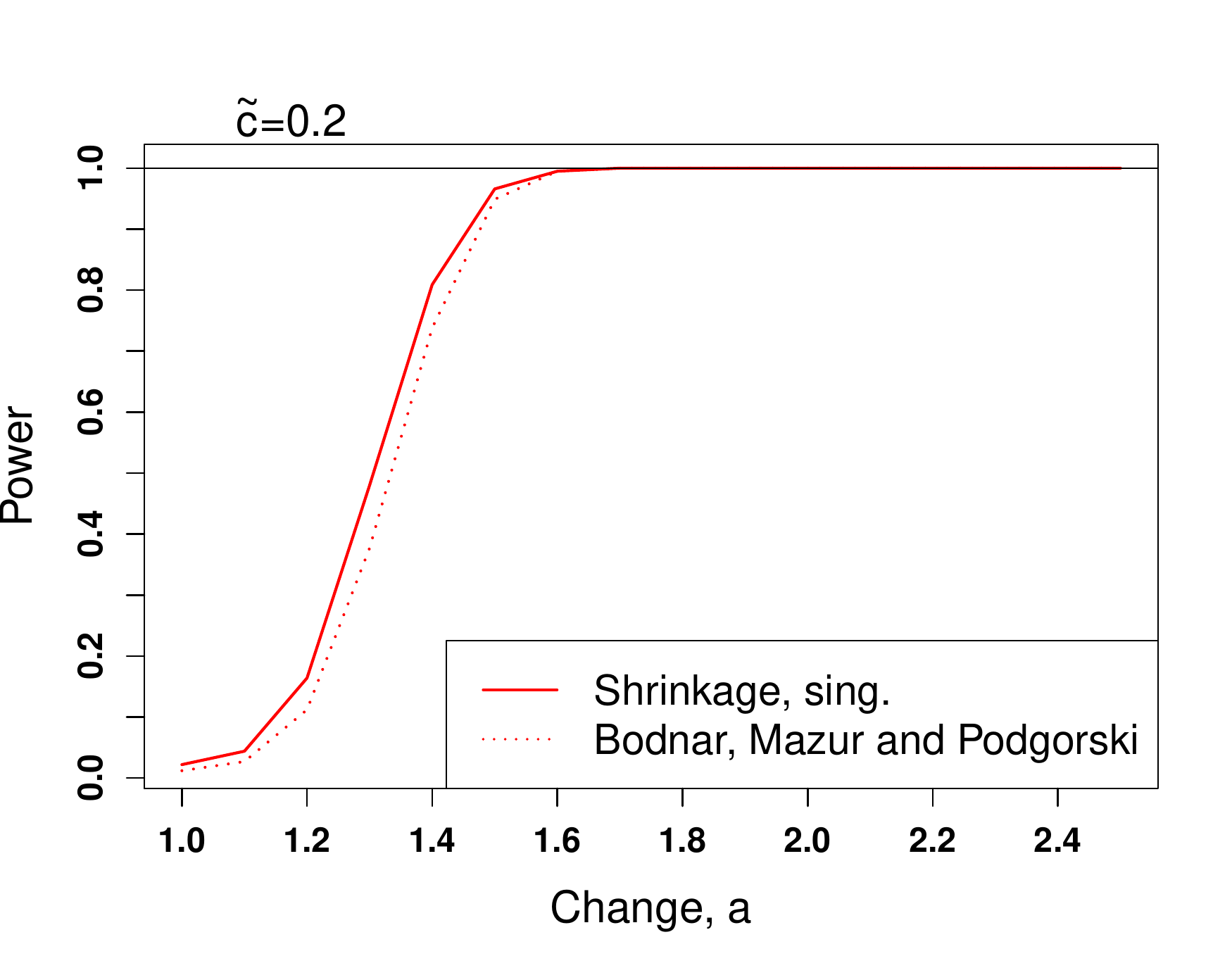} &	\includegraphics[width=0.45\linewidth]{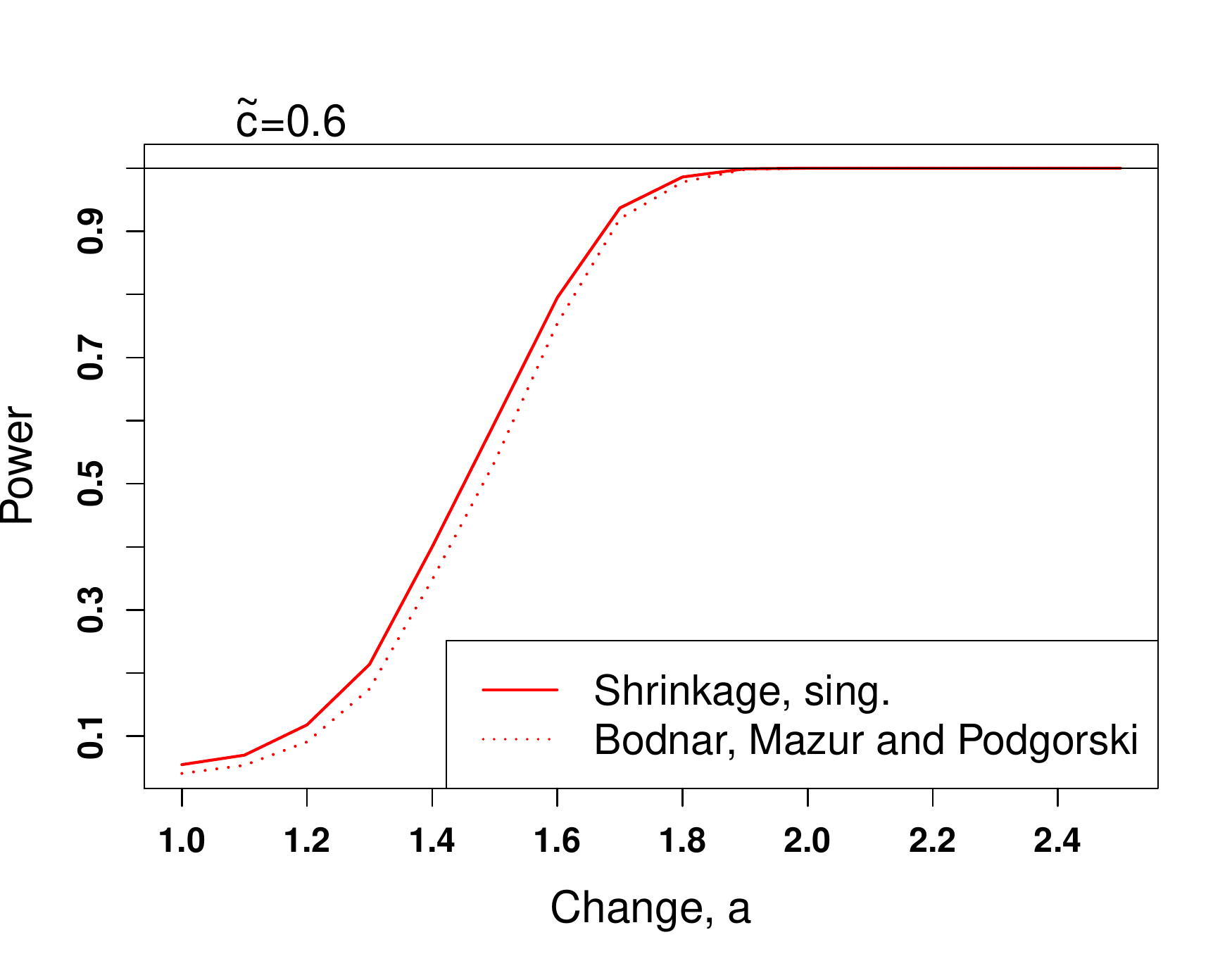}
			
		\end{tabular}
	\caption{Empirical power functions of the three tests derived under a non-singular covariance matrix and of the two tests developed for the singular covariance matrix for different values of $\tilde{c}$, $20\%$ changes on the main diagonal according to scenario given in (\ref{scenario1}), $n=500$, $p=450$ (upper figures) and $p=600$ (lower figures).}
\label{fig:fig7}	
\end{figure}	

\begin{figure}[th!]
		\centering
		\begin{tabular}{cc}
			\includegraphics[width=0.45\linewidth]{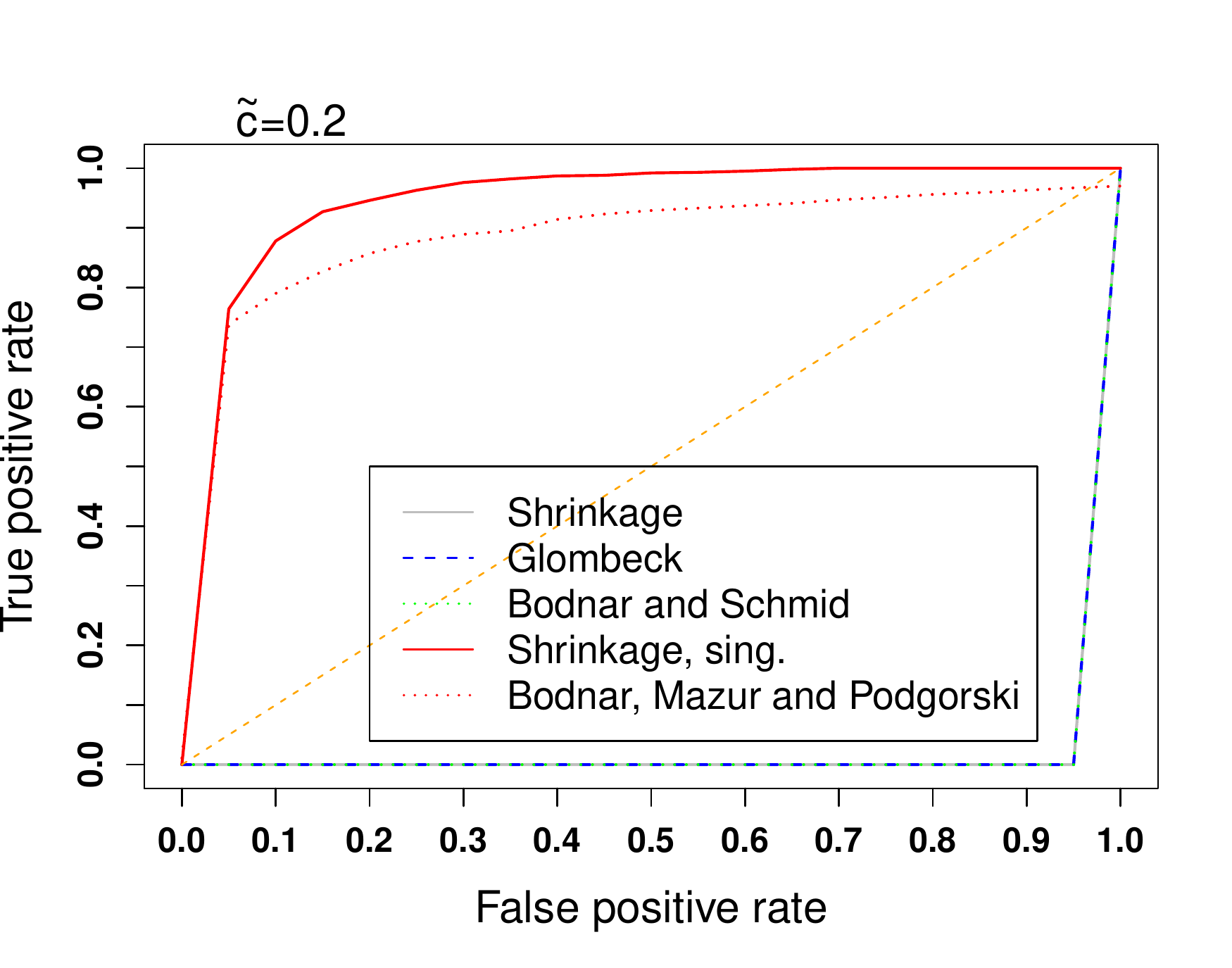} &	\includegraphics[width=0.45\linewidth]{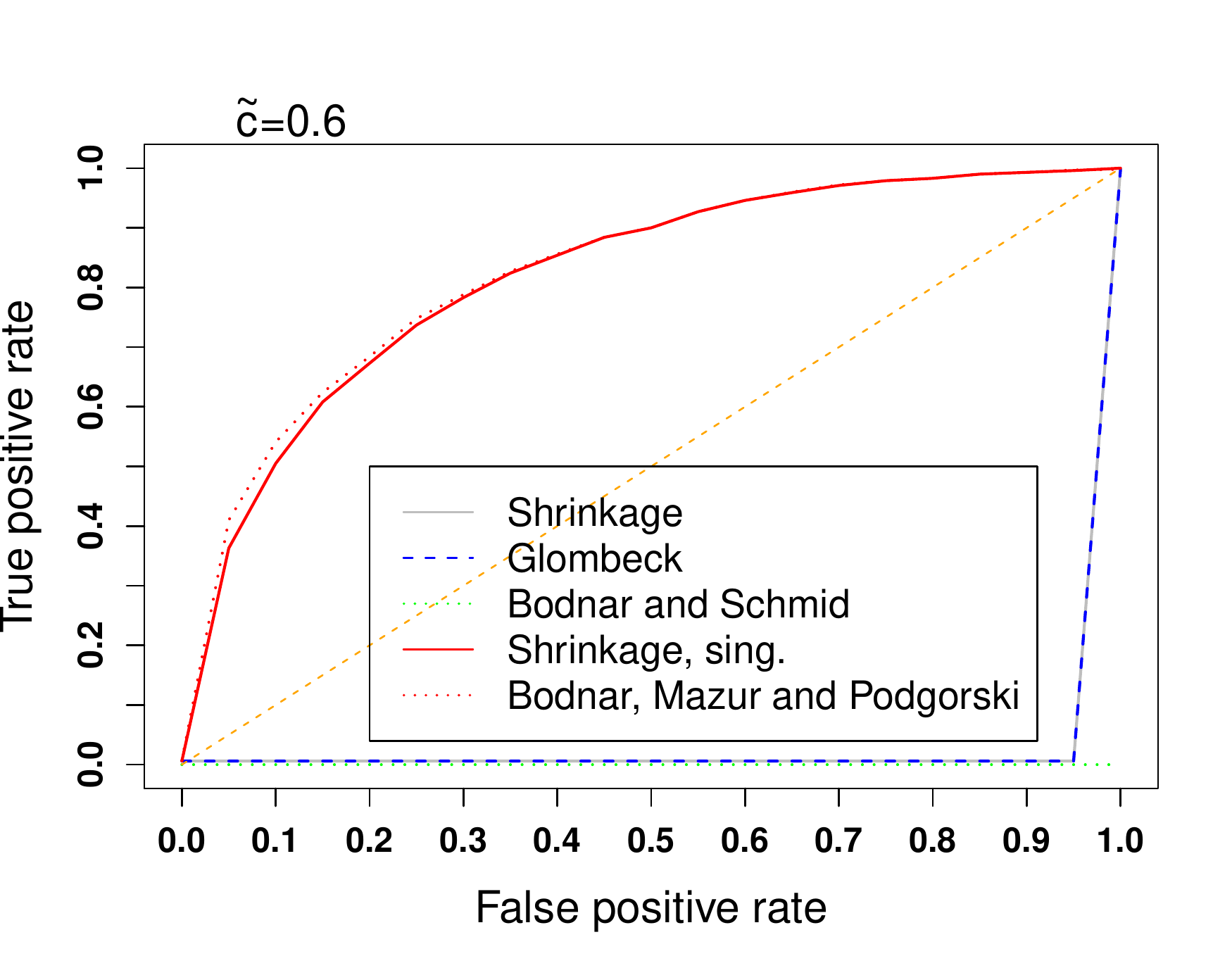}\\
		
			\includegraphics[width=0.45\linewidth]{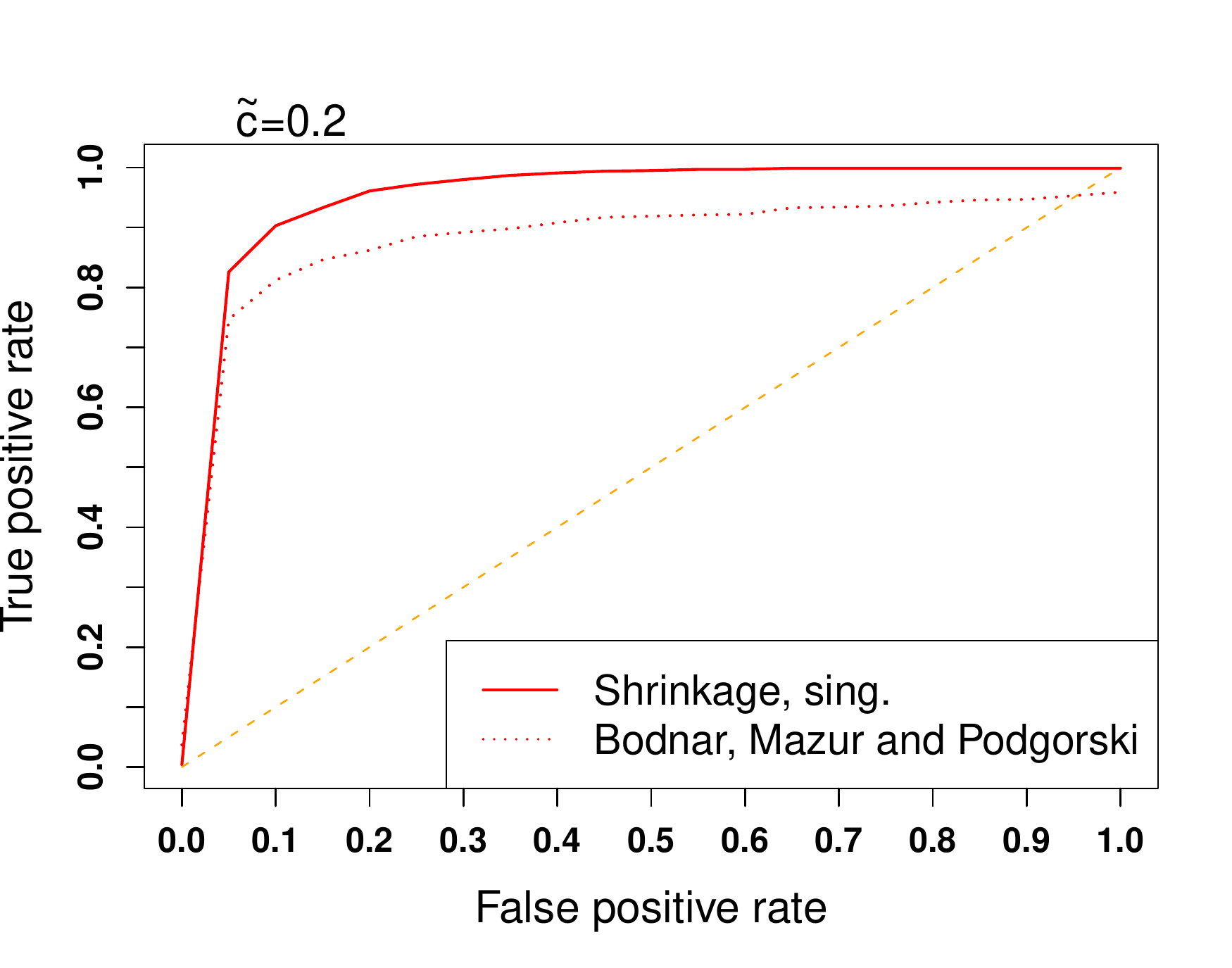} &	\includegraphics[width=0.45\linewidth]{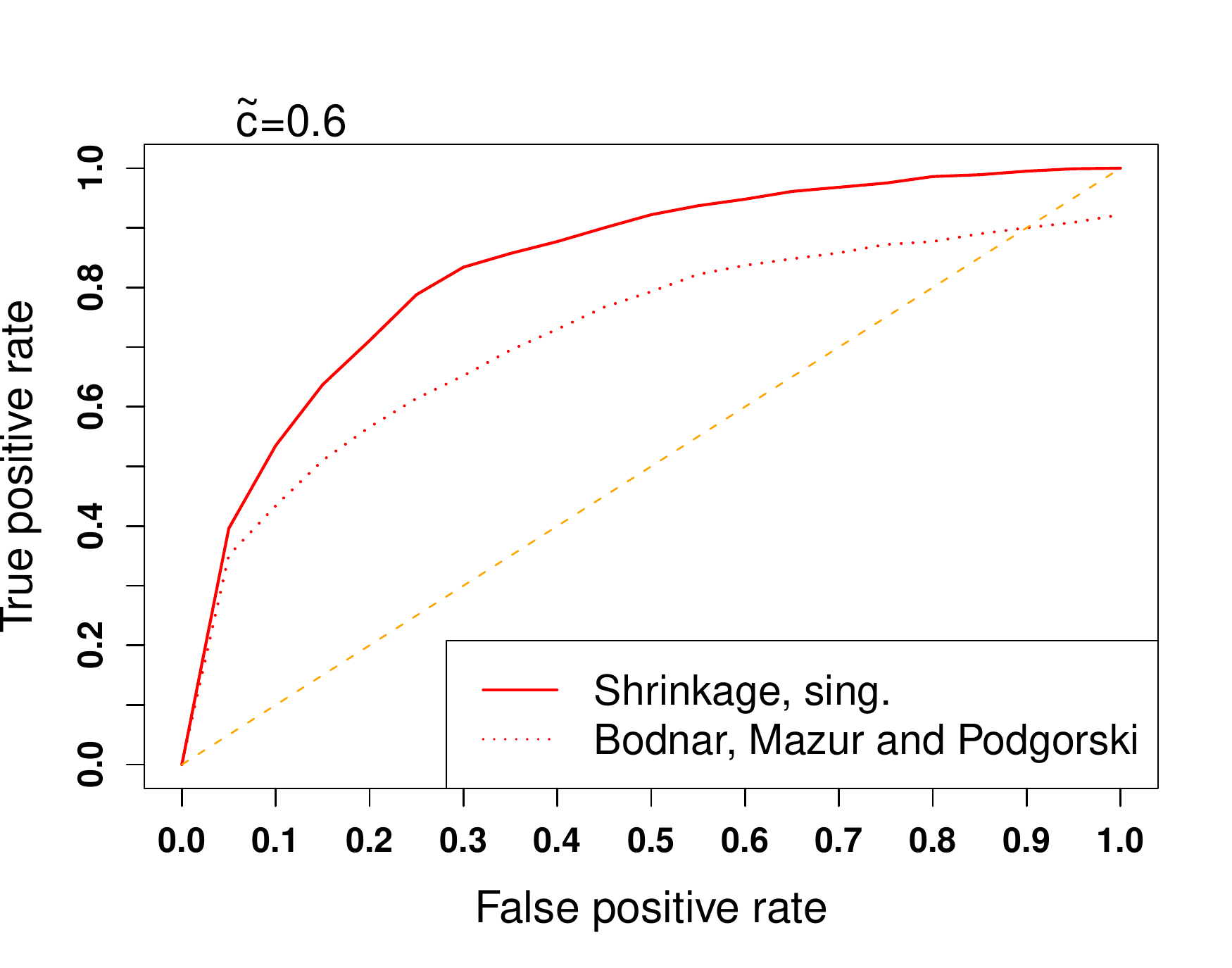}
			
		\end{tabular}
	\caption{ROC of the three tests derived under a non-singular covariance matrix and of the two tests developed for the singular covariance matrix for different values of $\tilde{c}$, $20\%$ changes on the main diagonal according to scenario given in (\ref{scenario1}), $n=500$, $p=450$ (upper figures) and $p=600$ (lower figures).}
\label{fig:fig8}	
\end{figure}

\begin{figure}[th!]
		\centering
		\begin{tabular}{cc}
			\includegraphics[width=0.45\linewidth]{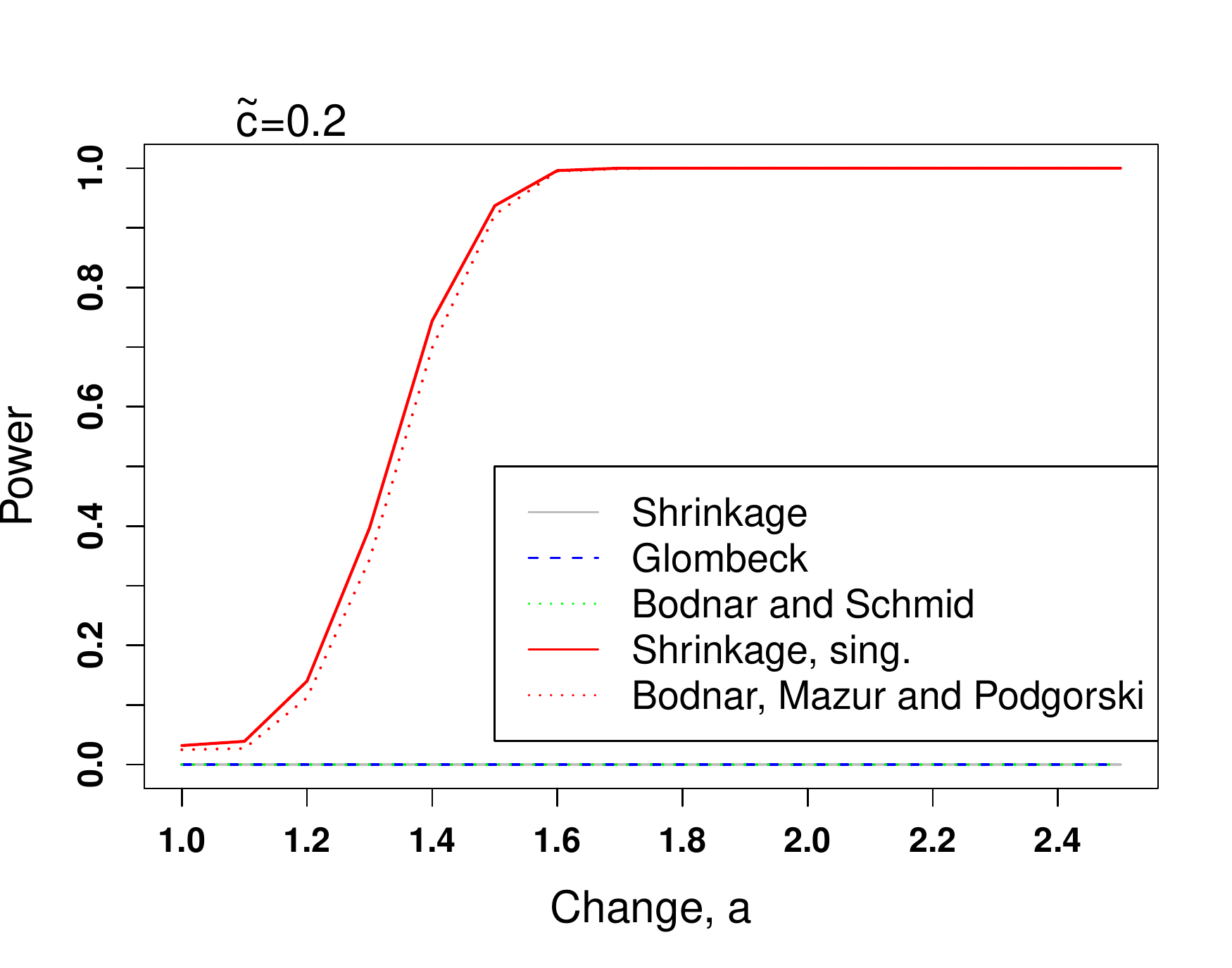} &	\includegraphics[width=0.45\linewidth]{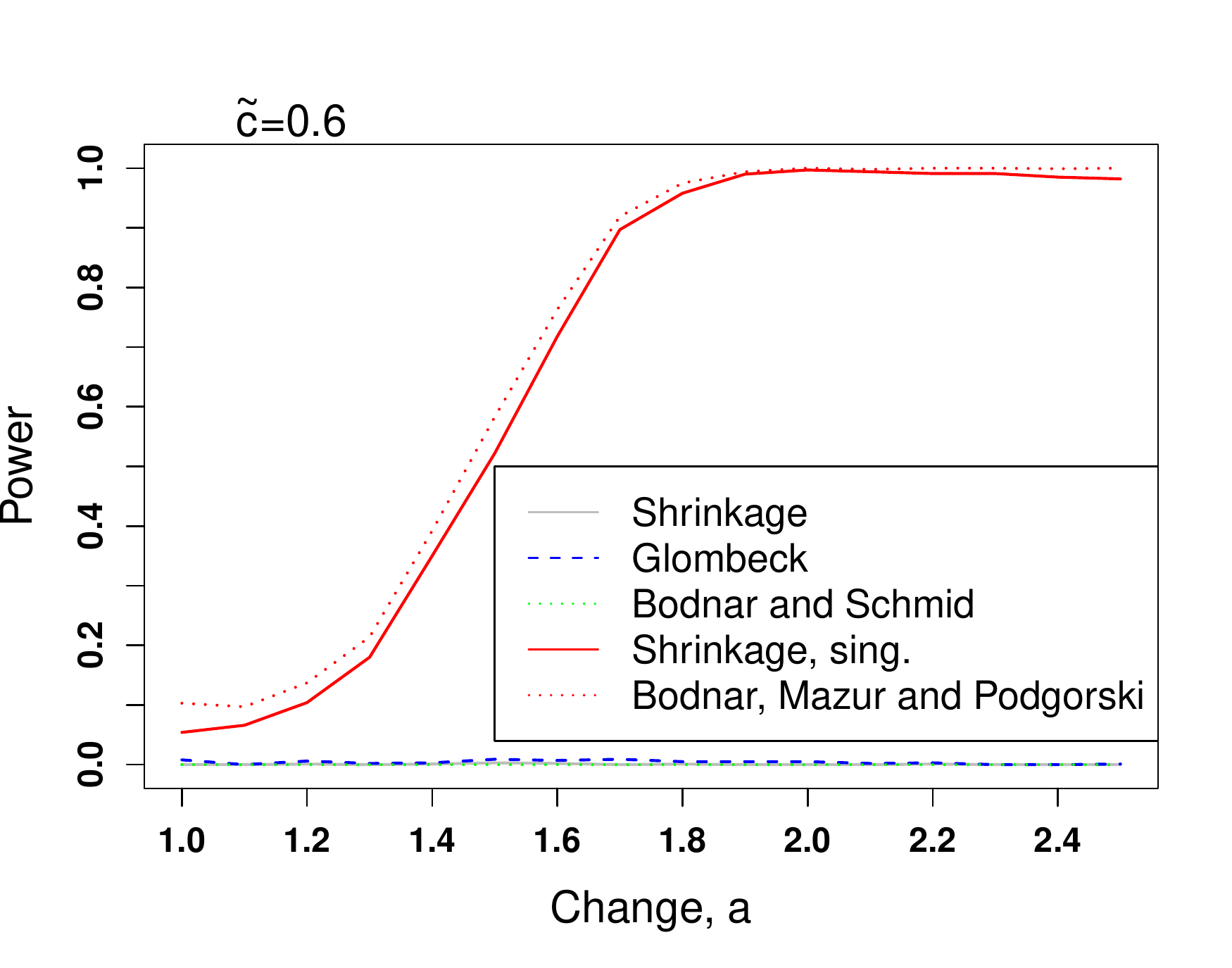}\\
		
			\includegraphics[width=0.45\linewidth]{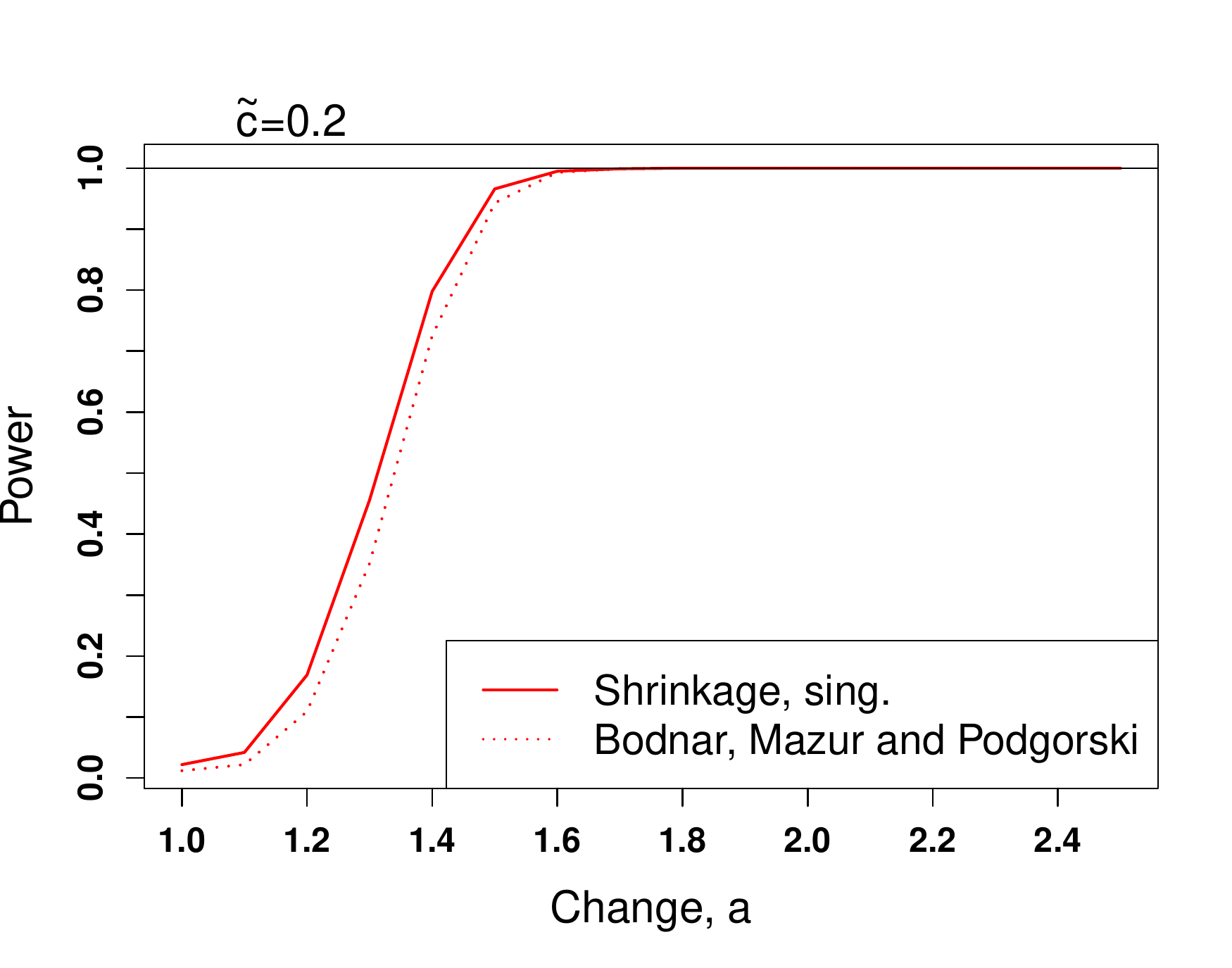} &	\includegraphics[width=0.45\linewidth]{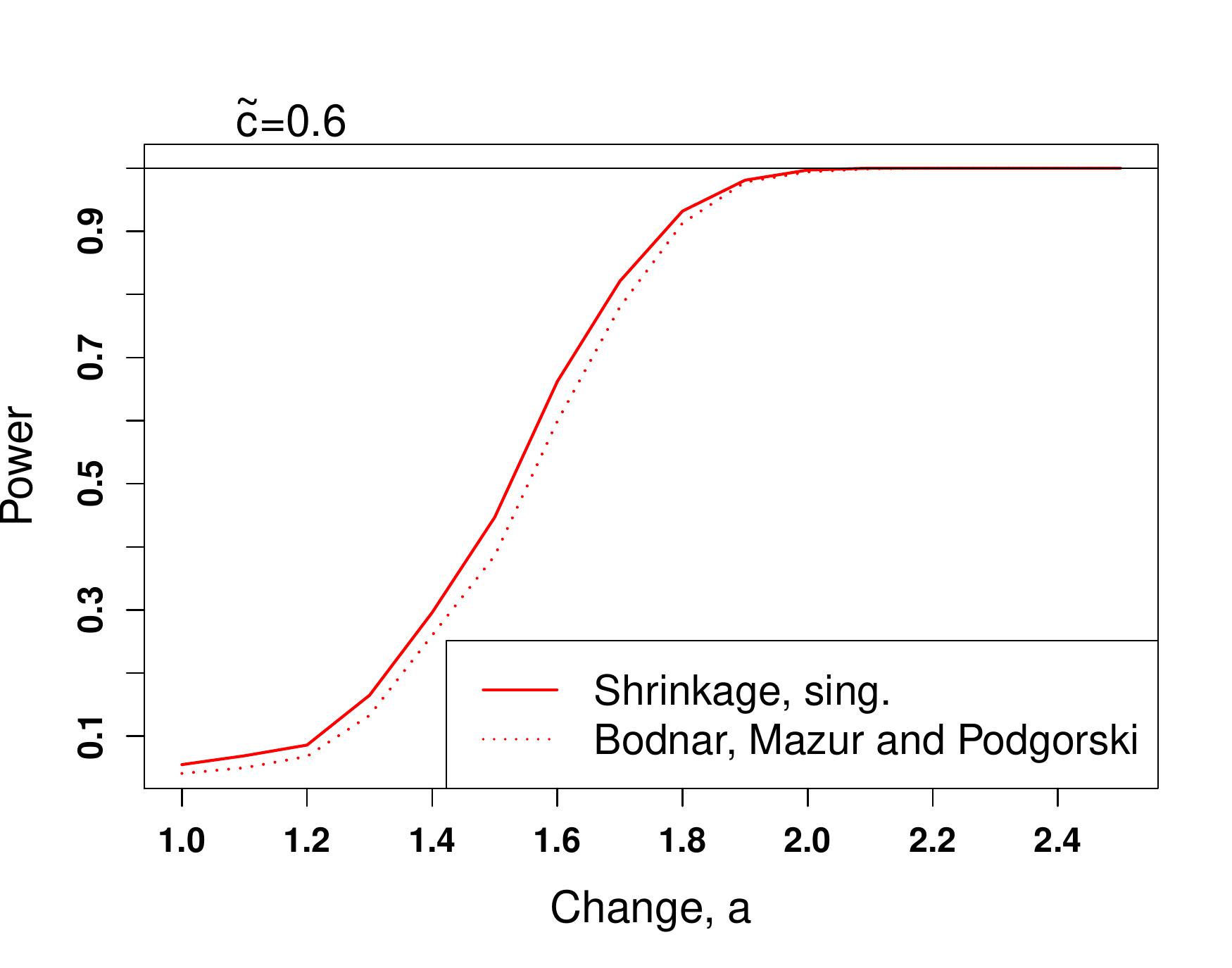}
			
		\end{tabular}
	\caption{Empirical power functions of the three tests derived under a non-singular covariance matrix and of the two tests developed for the singular covariance matrix for different values of $\tilde{c}$, $50\%$ changes on the main diagonal according to scenario given in (\ref{scenario1}), $n=500$, $p=450$ (upper figures) and $p=600$ (lower figures).}
\label{fig:fig9}	
\end{figure}	

\begin{figure}[th!]
		\centering
		\begin{tabular}{cc}
			\includegraphics[width=0.45\linewidth]{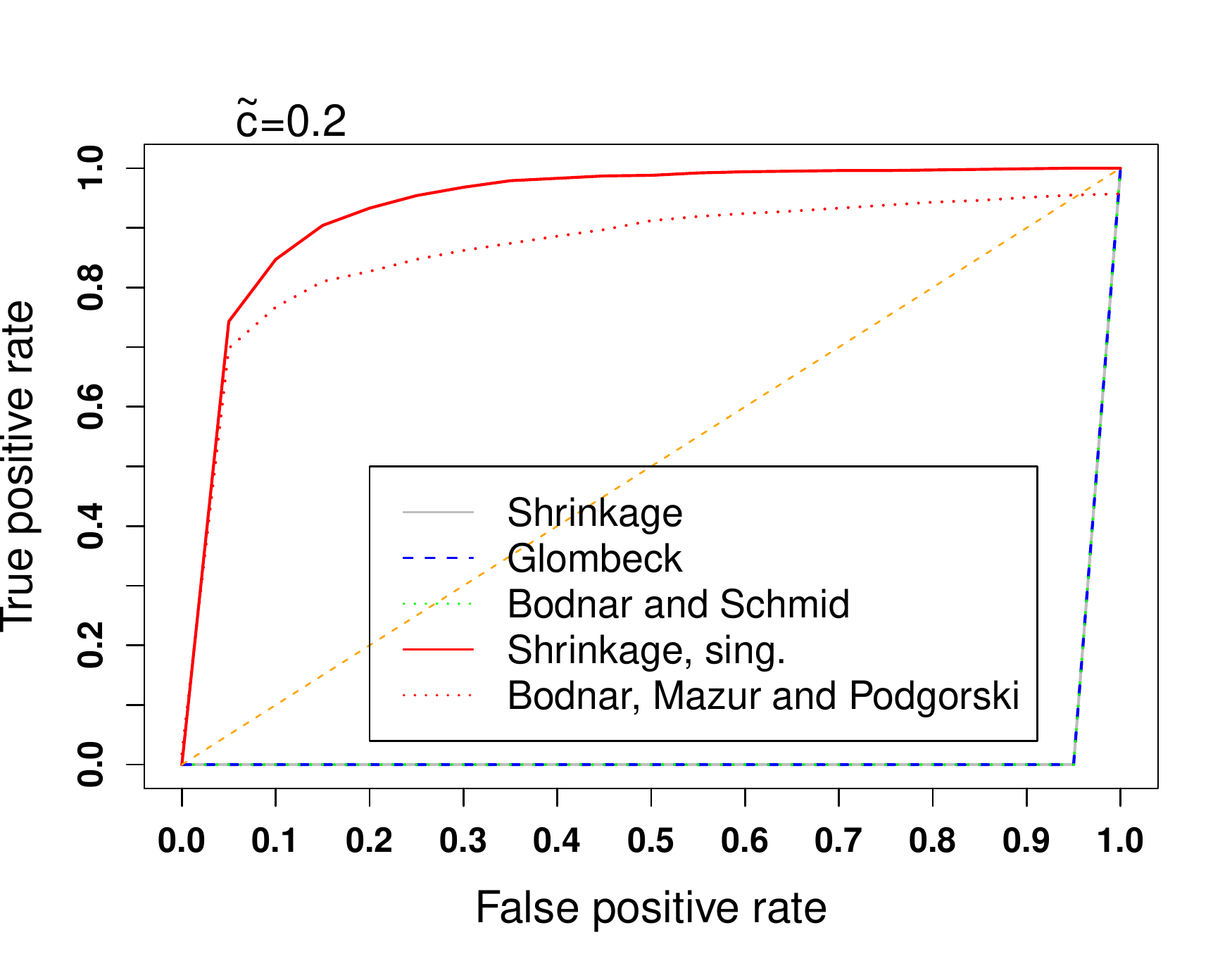} &	\includegraphics[width=0.45\linewidth]{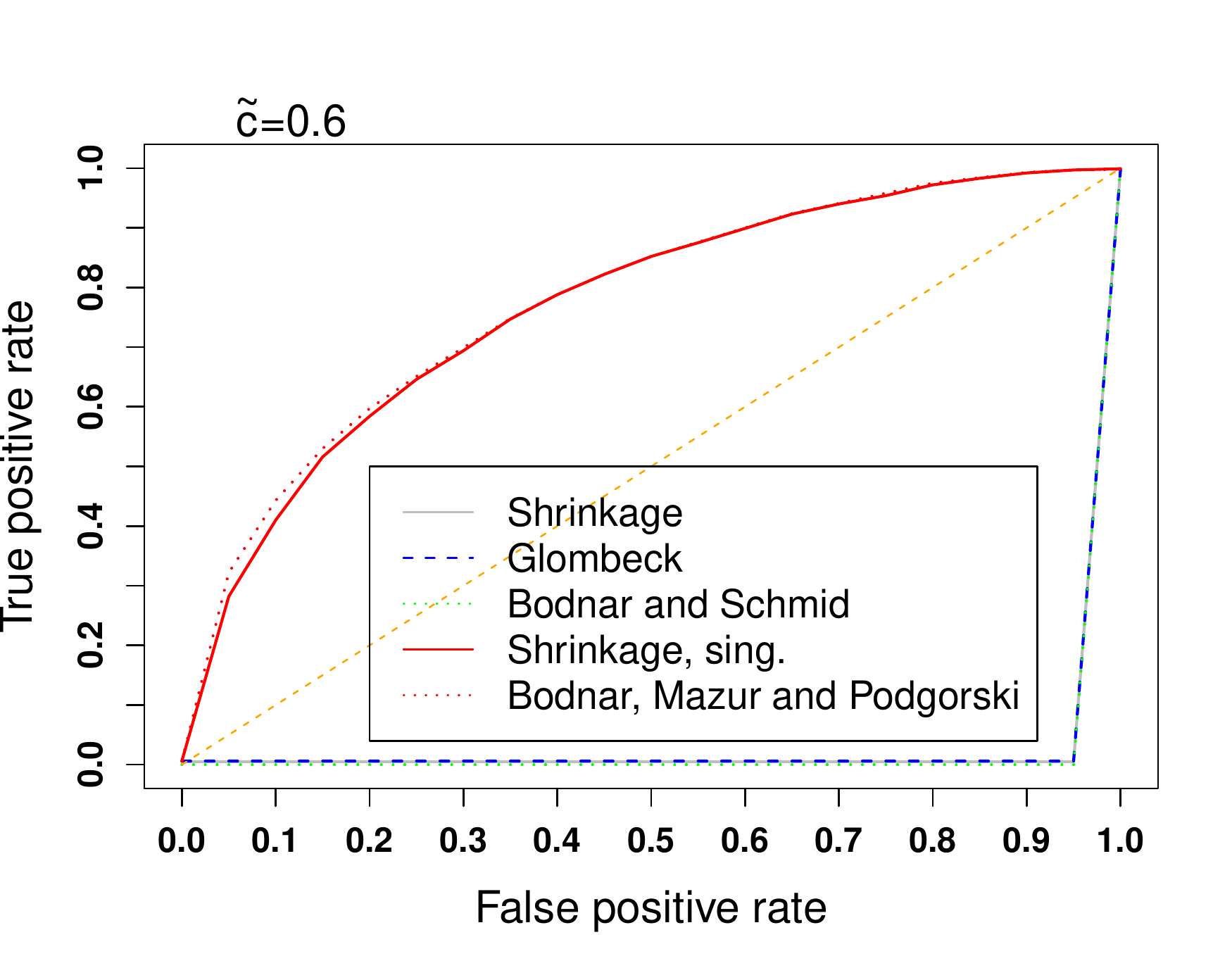}\\
		
			\includegraphics[width=0.45\linewidth]{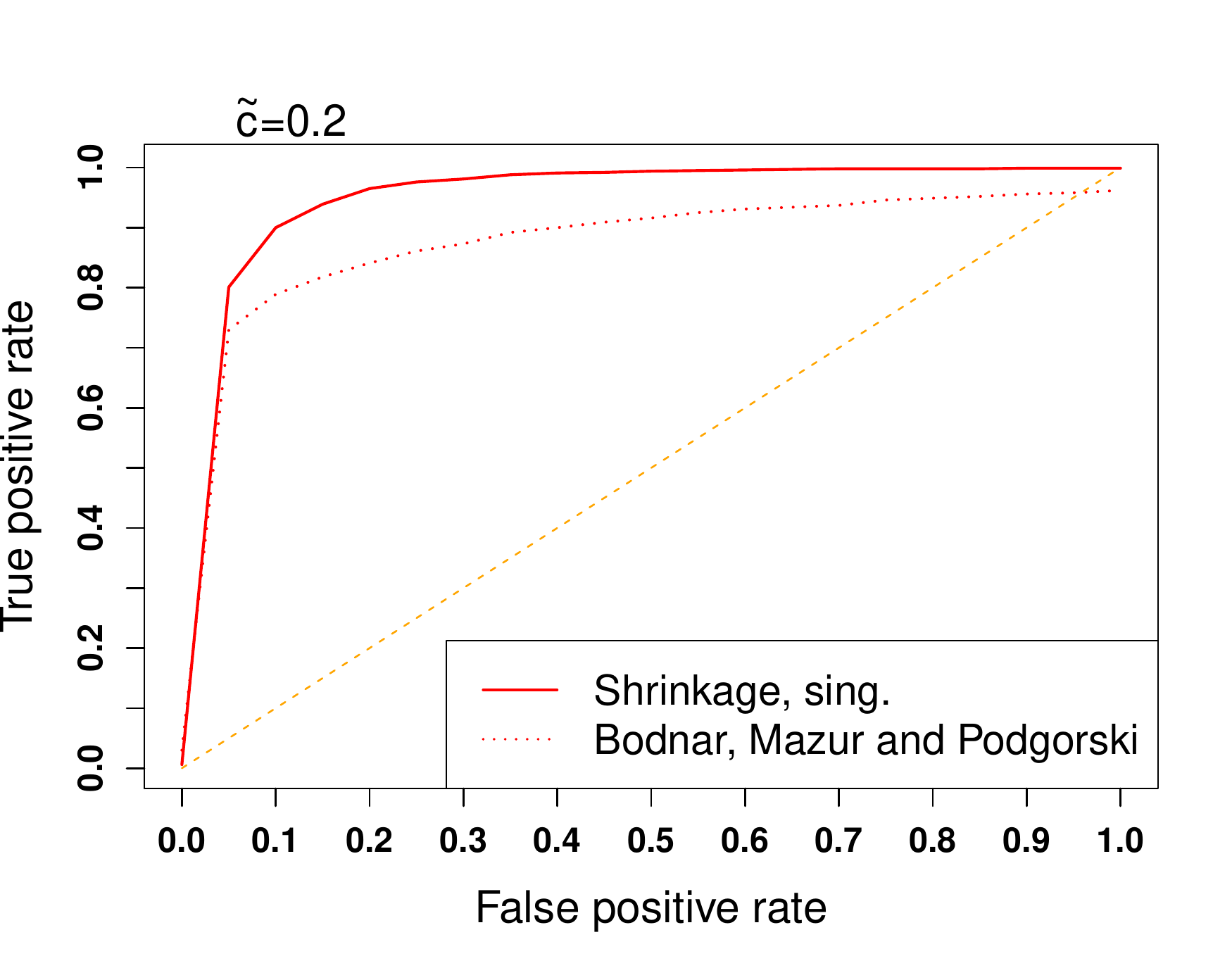} &	\includegraphics[width=0.45\linewidth]{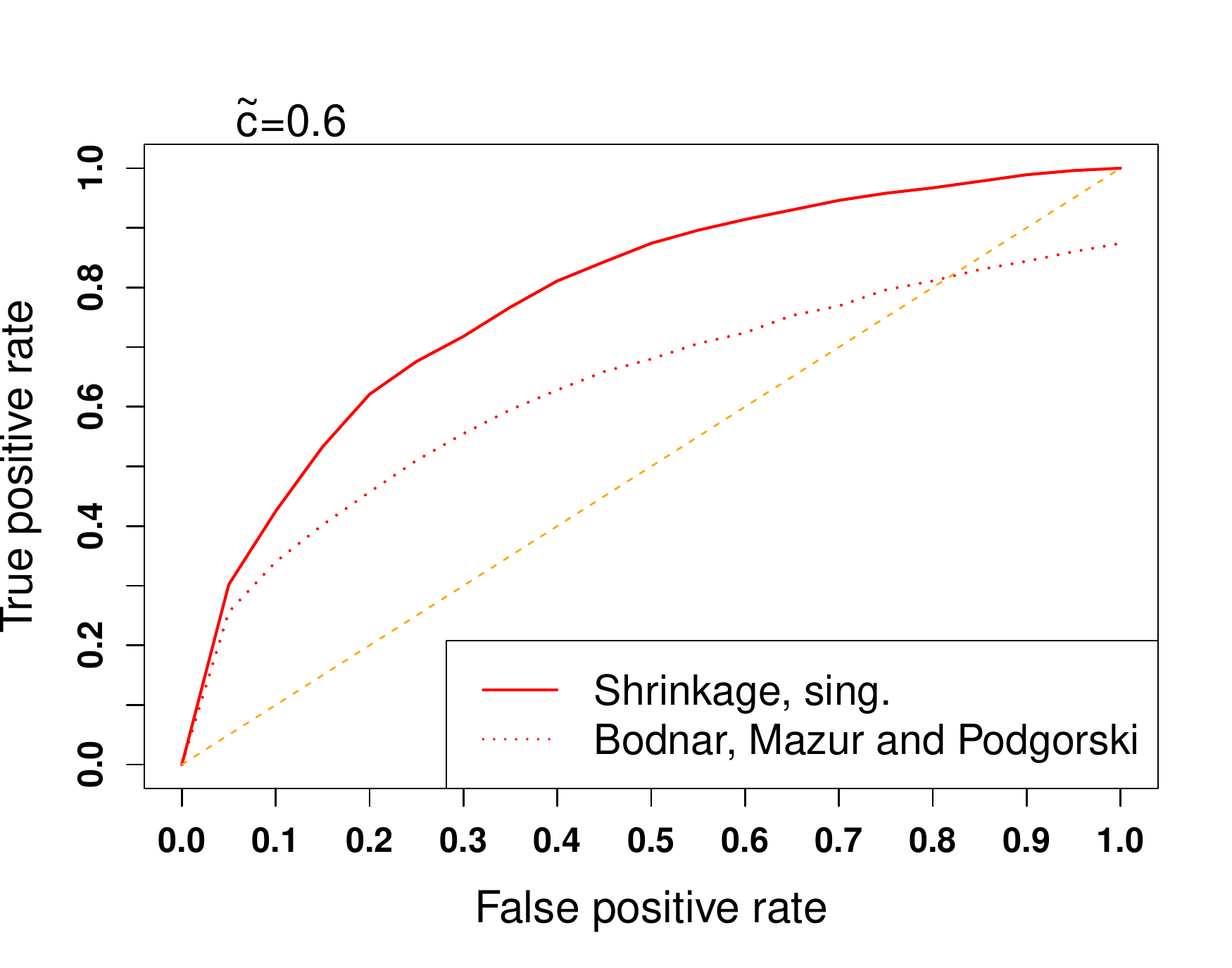}
			
		\end{tabular}
	\caption{ROC of the three tests derived under a non-singular covariance matrix and of the two tests developed for the singular covariance matrix for different values of $\tilde{c}$, $50\%$ changes on the main diagonal according to scenario given in (\ref{scenario1}), $n=500$, $p=450$ (upper figures) and $p=600$ (lower figures).}
\label{fig:fig10}	
\end{figure}	

In Figures \ref{fig:fig7} to \ref{fig:fig10}, we present the results in the case of the singular covariance matrix $\bSigma$ with two possible values for the ranks, namely $rank(\bSigma)\in \{100,300\}$ which corresponds to $\tilde{c} \in \{0.2,0.6\}$. It is remarkable that all three tests, which do not take into account the singularity of the covariance matrix, perform very bad. Both the power functions and the ROC curves are very close to zero in all considered cases. This is due to the fact that under the null hypothesis the computed asymptotic variances for all test statistics are considerably large since the singularity of the covariance matrix was ignored in their derivations. In contrast, the tests of Section III.C, which take into account this singularity in their derivations, provide improvements in both the expressions of the resulting test statistics and in their asymptotic distributions.

Further, we note a very good performance of the test based on the shrinkage approach that takes the singularity of the covariance matrix into account. It outperforms other approaches in almost all considered cases independently of the choice of the performance criterion. Only for $\tilde{c}=0.6$ and $p=450$, the test of \citet*{bodnar2016singular} shows a slightly better power function but this is due to the fact that its type I error is larger. Finally, in terms of the ROC curve the test of \citet*{bodnar2016singular} has not a good performance for moderate and large values of the false positive rate. This result is expected since, the test of \citet*{bodnar2016singular} is a multiple test whose critical values are obtained by employing the Bonferroni correction which appears to be very conservative for moderate and large significance values of the test.

\vspace{0.5cm}
\section{Summary}
The main focus of this study is the inference of the GMVP weights. After constructing an optimal portfolio, an investor is interested to know whether or not the weights of the portfolio he is holding are still optimal at a fixed time point. For that reason, we investigate several asymptotic and exact statistical procedures for detecting deviations in the weights of the GMVP. One test is based on the sample estimator of the GMVP weights, whereas another uses its shrinkage estimator. To the best of our knowledge, the shrinkage approach, which is very popular in point estimation, is applied in test theory for the first time. The asymptotic distributions of both test statistics are obtained under the null and alternative hypotheses in a high-dimensional setting. This finding is a great advantage with respect to other approaches that appear in the literature which do not elaborate on the distribution under the alternative hypothesis (e.g., \citet*{glombeck}). Finally, we deal with the case of a singular covariance matrix by deriving new testing procedures for the weights of the GMVP that are adopted to the singularity. The distributions of the resulting test statistics are obtained under both the null and alternative hypothesis.

In order to compare the performances of the proposed procedures, the empirical power functions of the derived tests are determined. It is shown that the test based on the shrinkage approach performs uniformly better than the other tests considered in the analysis in terms of both the power function and the ROC curve comparisons when the covariance matrix is singular. The new approach appears to be very promising for testing the portfolio weights in a high-dimensional situation. For a specific scenario, we also have studied a problem how good the power function of the asymptotic test based on the Mahalanobis distance approximates the power of the corresponding test and found good results already for moderate sample size, like $n=500$ with $p=\{50, 250, 350, 450\}$. Surely, these results could not be considered as a general statement to the problem and further investigation in this direction should be done. A similar topic should also be investigated for the test based on a shrinkage estimator, although only asymptotic results are available in the latter case.

\section*{Acknowledgement} The authors would like to thank Professor Pier Luigi Dragotti, Professor Byonghyo Shim, Professor Mathini Sellathurai, and anonymous Reviewers for their helpful suggestions. This research was partly supported by the German Science Foundation (DFG) via the projects BO 3521/3-1 and SCHM 859/13-1 ''Bayesian Estimation of the Multi-Period Optimal Portfolio Weights and Risk Measures'' and by the Swedish Research Council (VR) via the project ''Bayesian Analysis of Optimal Portfolios and Their Risk Measures''.

\bibliography{bibliography}

\section*{Appendix}

In this section, the proofs of Theorems are given.

Let the symbol $\stackrel{d}{=}$ denote equality in distribution. In Lemma \ref{lem0}, we first derive a stochastic representation for $T_n$.

\begin{lemma}\label{lem0}
Under the conditions of Theorem \ref{th1}, the stochastic representation of $T_n$ is expressed as
\begin{equation}\label{stoch_pres_Tn}
T_n \stackrel{d}{=} \frac{n-p}{p-1} \frac{(\sqrt{\lambda_n \xi_3}+\omega_1)^2+\xi_4}{\xi_2},
\end{equation}
where $\omega_1\sim \mathcal{N}(0,1)$, $\xi_2\sim \chi^2_{n-p}$, $\xi_3\sim \chi^2_{n-1}$, and $\xi_4\sim \chi^2_{p-2}$; $\omega_1$, $\xi_2$, $\xi_3$, and $\xi_4$ are independent.
\end{lemma}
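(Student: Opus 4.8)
The plan is to recast $T_n$ in terms of a Wishart matrix, turn $\mathbf{1}$ into a coordinate vector by a fixed change of basis, read the four independent ingredients $\omega_1,\xi_2,\xi_3,\xi_4$ off the classical partitioned-Wishart decomposition, and finish with the standard splitting of a noncentral $\chi^2$ into a shifted normal plus an independent central $\chi^2$. First I would pass to $\mathbf{V}=(n-1)\hat{\mathbf{\Sigma}}_n\sim W_p(n-1,\mathbf{\Sigma})$ (the subtracted sample mean plays no role, since $\hat{\mathbf{w}}_n$ and $\hat{\mathbf{Q}}_n$ depend on the data only through $\hat{\mathbf{\Sigma}}_n$) and cancel the $(n-1)$ factors, so that $T_n=\frac{n-p}{p-1}(\mathbf{1}'\mathbf{V}^{-1}\mathbf{1})(\hat{\mathbf{w}}^*_n-\mathbf{r}^*)'(\mathbf{P}^*)^{-1}(\hat{\mathbf{w}}^*_n-\mathbf{r}^*)$, where $\mathbf{P}=\mathbf{V}^{-1}-\mathbf{V}^{-1}\mathbf{1}\mathbf{1}'\mathbf{V}^{-1}/(\mathbf{1}'\mathbf{V}^{-1}\mathbf{1})$, $\mathbf{L}=[\mathbf{I}_{p-1},\mathbf{0}]$, $\hat{\mathbf{w}}^*_n=\mathbf{L}\mathbf{V}^{-1}\mathbf{1}/(\mathbf{1}'\mathbf{V}^{-1}\mathbf{1})$ and $\mathbf{P}^*=\mathbf{L}\mathbf{P}\mathbf{L}'$. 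Let $\mathbf{T}$ be $\mathbf{I}_p$ with its last column replaced by $\mathbf{1}$, so that $\mathbf{T}^{-1}\mathbf{1}=\mathbf{e}_p$, $\mathbf{L}(\mathbf{T}^{-1})'=\mathbf{L}$ and $\mathbf{T}^{-1}\mathbf{L}'=\mathbf{L}'$; put $\tilde{\mathbf{V}}=\mathbf{T}^{-1}\mathbf{V}(\mathbf{T}^{-1})'\sim W_p(n-1,\tilde{\mathbf{\Sigma}})$ with $\tilde{\mathbf{\Sigma}}=\mathbf{T}^{-1}\mathbf{\Sigma}(\mathbf{T}^{-1})'$, and partition $\tilde{\mathbf{V}},\tilde{\mathbf{\Sigma}}$ so that $\tilde{\mathbf{V}}_{11},\tilde{\mathbf{\Sigma}}_{11}$ are the leading $(p-1)\times(p-1)$ blocks, $\tilde{\mathbf{v}}_{12},\tilde{\boldsymbol{\sigma}}_{12}$ the first $p-1$ entries of their last columns, and $\tilde{v}_{22},\tilde{\sigma}_{22}$ the $(p,p)$ entries. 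Using the elementary identity that the leading $(p-1)\times(p-1)$ block of $\mathbf{M}^{-1}-\mathbf{M}^{-1}\mathbf{e}_p\mathbf{e}_p'\mathbf{M}^{-1}/(\mathbf{M}^{-1})_{pp}$ equals $\mathbf{M}_{11}^{-1}$, together with the partitioned-inverse formula, I obtain $\mathbf{1}'\mathbf{V}^{-1}\mathbf{1}=(\tilde{\mathbf{V}}^{-1})_{pp}=\tilde{v}_{22\cdot1}^{-1}$ (with Schur complement $\tilde{v}_{22\cdot1}=\tilde{v}_{22}-\tilde{\mathbf{v}}_{12}'\tilde{\mathbf{V}}_{11}^{-1}\tilde{\mathbf{v}}_{12}$), $\hat{\mathbf{w}}^*_n=-\tilde{\mathbf{V}}_{11}^{-1}\tilde{\mathbf{v}}_{12}$, and $(\mathbf{P}^*)^{-1}=\tilde{\mathbf{V}}_{11}$; applying the same identity to $\mathbf{\Sigma}$ and un-rotating gives $\tilde{\mathbf{\Sigma}}_{11}=(\mathbf{Q}^*)^{-1}$, $\tilde{\sigma}_{22\cdot1}=(\mathbf{1}'\mathbf{\Sigma}^{-1}\mathbf{1})^{-1}=:\sigma_n^2$, and $-\tilde{\mathbf{\Sigma}}_{11}^{-1}\tilde{\boldsymbol{\sigma}}_{12}=\mathbf{w}^*_{GMVP}$.

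Next I would invoke the classical partitioned-Wishart theorem for $\tilde{\mathbf{V}}\sim W_p(n-1,\tilde{\mathbf{\Sigma}})$: $\tilde{\mathbf{V}}_{11}\sim W_{p-1}(n-1,(\mathbf{Q}^*)^{-1})$; the Schur complement satisfies $\tilde{v}_{22\cdot1}\sim\sigma_n^2\chi^2_{n-p}$ and is independent of $(\tilde{\mathbf{V}}_{11},\tilde{\mathbf{v}}_{12})$; and, conditionally on $\tilde{\mathbf{V}}_{11}$, $\hat{\mathbf{w}}^*_n=-\tilde{\mathbf{V}}_{11}^{-1}\tilde{\mathbf{v}}_{12}\sim\mathcal{N}(\mathbf{w}^*_{GMVP},\sigma_n^2\tilde{\mathbf{V}}_{11}^{-1})$. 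Put $\xi_2:=\tilde{v}_{22\cdot1}/\sigma_n^2\sim\chi^2_{n-p}$, so that $\mathbf{1}'\mathbf{V}^{-1}\mathbf{1}=(\sigma_n^2\xi_2)^{-1}$. Conditioning on $\tilde{\mathbf{V}}_{11}$ and writing $\hat{\mathbf{w}}^*_n-\mathbf{w}^*_{GMVP}=\sigma_n\tilde{\mathbf{V}}_{11}^{-1/2}\mathbf{z}$ with $\mathbf{z}\sim\mathcal{N}(\mathbf{0},\mathbf{I}_{p-1})$, the quadratic form becomes $(\hat{\mathbf{w}}^*_n-\mathbf{r}^*)'\tilde{\mathbf{V}}_{11}(\hat{\mathbf{w}}^*_n-\mathbf{r}^*)=\|\sigma_n\mathbf{z}+\mathbf{h}\|^2$, where $\mathbf{h}:=\tilde{\mathbf{V}}_{11}^{1/2}(\mathbf{w}^*_{GMVP}-\mathbf{r}^*)$; resolving $\mathbf{z}$ into its coordinate $\omega_1:=\langle\mathbf{z},\mathbf{h}/\|\mathbf{h}\|\rangle\sim\mathcal{N}(0,1)$ along $\mathbf{h}$ and the orthogonal remainder $\xi_4:=\|\mathbf{z}\|^2-\omega_1^2\sim\chi^2_{p-2}$ (conditionally on $\tilde{\mathbf{V}}_{11}$ the pair $(\omega_1,\xi_4)$ has the fixed law $\mathcal{N}(0,1)\otimes\chi^2_{p-2}$, hence $\omega_1\perp\xi_4$ and $(\omega_1,\xi_4)\perp\tilde{\mathbf{V}}_{11}$) yields $\|\sigma_n\mathbf{z}+\mathbf{h}\|^2=\sigma_n^2[(\omega_1+\|\mathbf{h}\|/\sigma_n)^2+\xi_4]$. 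Multiplying by $\frac{n-p}{p-1}\mathbf{1}'\mathbf{V}^{-1}\mathbf{1}=\frac{n-p}{p-1}(\sigma_n^2\xi_2)^{-1}$, the factors $\sigma_n^2$ cancel and $T_n\stackrel{d}{=}\frac{n-p}{p-1}[(\omega_1+\|\mathbf{h}\|/\sigma_n)^2+\xi_4]/\xi_2$.

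It remains to identify the shift $\|\mathbf{h}\|/\sigma_n$. Since $\mathbf{w}^*_{GMVP}-\mathbf{r}^*$ is a \emph{deterministic} vector and $\tilde{\mathbf{V}}_{11}\sim W_{p-1}(n-1,(\mathbf{Q}^*)^{-1})$, the quadratic form $\|\mathbf{h}\|^2=(\mathbf{w}^*_{GMVP}-\mathbf{r}^*)'\tilde{\mathbf{V}}_{11}(\mathbf{w}^*_{GMVP}-\mathbf{r}^*)$ is distributed as $(\mathbf{w}^*_{GMVP}-\mathbf{r}^*)'(\mathbf{Q}^*)^{-1}(\mathbf{w}^*_{GMVP}-\mathbf{r}^*)\,\xi_3$ with $\xi_3\sim\chi^2_{n-1}$; hence $\|\mathbf{h}\|^2/\sigma_n^2=(\mathbf{1}'\mathbf{\Sigma}^{-1}\mathbf{1})(\mathbf{w}^*_{GMVP}-\mathbf{r}^*)'(\mathbf{Q}^*)^{-1}(\mathbf{w}^*_{GMVP}-\mathbf{r}^*)\,\xi_3=\lambda_n\xi_3$ by \eqref{lambda}, and substituting proves \eqref{stoch_pres_Tn}. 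For the mutual independence of $\omega_1,\xi_2,\xi_3,\xi_4$: the variable $\xi_2$ is a function of $\tilde{v}_{22\cdot1}$, which is independent of $(\tilde{\mathbf{V}}_{11},\tilde{\mathbf{v}}_{12})$; since $\xi_3$ is a function of $\tilde{\mathbf{V}}_{11}$ alone while $\omega_1,\xi_4$ are functions of $(\tilde{\mathbf{V}}_{11},\mathbf{z})$ and $\mathbf{z}=\sigma_n^{-1}\tilde{\mathbf{V}}_{11}^{1/2}(\hat{\mathbf{w}}^*_n-\mathbf{w}^*_{GMVP})$ is a function of $(\tilde{\mathbf{V}}_{11},\tilde{\mathbf{v}}_{12})$, it follows that $\xi_2$ is independent of $(\xi_3,\omega_1,\xi_4)$; and, as noted, $(\omega_1,\xi_4)$ is independent of $\tilde{\mathbf{V}}_{11}$ (hence of $\xi_3$) with $\omega_1\perp\xi_4$. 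I expect the only real work to be the bookkeeping: verifying the partitioned-matrix identity, keeping track of the $\sigma_n^2$ and $(n-1)$ factors across the change of basis, and checking that the conditional mean of $\hat{\mathbf{w}}^*_n$ is $\mathbf{w}^*_{GMVP}$ itself (not its rotated image) and that $\tilde{\mathbf{\Sigma}}_{11}=(\mathbf{Q}^*)^{-1}$; no single step is deep, but small errors propagate, so the algebra must be carried out carefully. (Under $H_0$ one has $\lambda_n=0$ and the representation collapses to $\frac{n-p}{p-1}(\omega_1^2+\xi_4)/\xi_2\sim F_{p-1,n-p}$, as it must.)
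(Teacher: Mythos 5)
Your proof is correct and follows essentially the same route as the paper's: you condition on the Wishart matrix $(n-1)(\hat{\mathbf{Q}}^*_n)^{-1}$ (your $\tilde{\mathbf{V}}_{11}\sim W_{p-1}(n-1,(\mathbf{Q}^*)^{-1})$), extract $\xi_2\sim\chi^2_{n-p}$ as an independent Schur-complement factor, identify the noncentrality as $\lambda_n\xi_3$ with $\xi_3\sim\chi^2_{n-1}$, and split the resulting noncentral $\chi^2_{p-1}$ into $(\sqrt{\lambda_n\xi_3}+\omega_1)^2+\xi_4$. The only difference is presentational: the paper imports the partitioned inverse-Wishart facts from Bodnar and Okhrin (2008, Theorem 3) and Muirhead/Gupta--Nagar, whereas you re-derive them from the classical partitioned-Wishart theorem after the change of basis sending $\mathbf{1}$ to $\mathbf{e}_p$ and carry out the noncentral-$\chi^2$ splitting by hand, which makes the argument self-contained but mathematically equivalent.
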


\begin{proof}\textit{of Lemma \ref{lem0}:}
Let $\mathbf{L}$ be a $(p-1)\times p$ matrix such that $\mathbf{\hat{w}}^{*}_{n}=\mathbf{L} \mathbf{\hat{w}}_{n}$, i.e., it transforms the vector of the estimated GMVP weights into the vector of its $(p-1)$ first components. We define $\mathbf{M}^\prime = (\mathbf{L}^\prime, \mathbf{1})$ and
$$\mathbf{M} \mathbf{\Sigma}^{-1} \mathbf{M}^\prime = \{\mathbf{H}_{ij}\}_{i,j=1,2}, \quad \mathbf{M} \, \hat{\mathbf{\Sigma}}^{-1}_n \,
\mathbf{M}^{\prime} = \{\hat{\mathbf{H}}_{ij}\}_{i,j=1,2}$$
with
$\mathbf{{H}}_{22} = \mathbf{1}^\prime \mathbf{\Sigma}^{-1}\mathbf{1}$,
$\hat{\mathbf{H}}_{22} = \mathbf{1}^\prime \hat{\mathbf{\Sigma}}^{-1}_n \mathbf{1}$,
$\mathbf{H}_{12}=\mathbf{L} \mathbf{\Sigma}^{-1} \mathbf{1}$, $\hat{\mathbf{H}}_{12}=\mathbf{L}
\hat{\mathbf{\Sigma}}^{-1}_n\mathbf{1}$, $\mathbf{H}_{11} = \mathbf{L} \mathbf{\Sigma}^{-1} \mathbf{L}^\prime$,
and $\mathbf{\hat{H}}_{11} = \mathbf{L} \hat{\mathbf{\Sigma}}^{-1}_n \mathbf{L}^\prime$.

Since $(n-1)\hat{\mathbf{\Sigma}} \sim W_p(n-1, \mathbf{\Sigma})$ ($p$-dimensional Wishart distribution with $n-1$ degrees of freedom and covariance matrix $\mathbf{\Sigma}$) and $rank(\mathbf{M})=p$ we get
with \citet[Theorem 3.2.11]{muirhead2009aspects} that
\[ (n-1) ( \mathbf{M} \hat{\mathbf{\Sigma}}^{-1}_n \mathbf{M}^\prime )^{-1} \sim W_{p}(n-1, ( \mathbf{M} \,
\mathbf{\Sigma}^{-1} \, \mathbf{M}^\prime )^{-1} ), \]
and, consequently (see, \citet[Theorem 3.4.1]{GuptaNagar2000}),
\[ (n-1)^{-1} \mathbf{M} \hat{\mathbf{\Sigma}}^{-1}_n \mathbf{M}^\prime \sim W_{p}^{-1}(n+p, \mathbf{M} \,
\mathbf{\Sigma}^{-1} \, \mathbf{M}^\prime ). \]

Recalling the definition of $\mathbf{\hat{w}}^{*}_{n}$ and $\hat{\mathbf{Q}}^*_n$, we get from Theorem 3 of \citet*{BodnarOkhrin2008} that
\begin{enumerate}
\item $\hat{\mathbf{H}}_{22}= \mathbf{1}^\prime \hat{\mathbf{\Sigma}}^{-1}_n \mathbf{1}$ is independent of $\hat{\mathbf{H}}_{12}/\hat{\mathbf{H}}_{22}=\hat{\mathbf{w}}_n^*$ and
    $\hat{\mathbf{H}}_{11}-\hat{\mathbf{H}}_{12}\hat{\mathbf{H}}_{21}/\hat{\mathbf{H}}_{22}
    =\hat{\mathbf{Q}}^*_n$
\item $(n-1)^{-1}\hat{\mathbf{H}}_{22}= (n-1)^{-1}\mathbf{1}^\prime \hat{\mathbf{\Sigma}}^{-1}_n \mathbf{1}\sim W_1^{-1}(n-p+2,\mathbf{1}^\prime \mathbf{\Sigma}^{-1} \mathbf{1})$
and, consequently,
\begin{equation}\label{th1_eq2}
\xi_2=(n-1) \frac{\mathbf{1}^\prime \mathbf{\Sigma}^{-1}\mathbf{1}}{\mathbf{1}^\prime \hat{\mathbf{\Sigma}}^{-1}_n \mathbf{1}} \sim \chi^2_{n-p},
\end{equation}
\item
\begin{eqnarray*}
&&(n-1)^{-1} \hat{\mathbf{H}}_{12}|(n-1)^{-1}\hat{\mathbf{H}}_{22}, (n-1)^{-1}\hat{\mathbf{Q}}^*_n\\
&\sim&\mathcal{N}\left(\frac{H_{12}}{H_{11}}(n-1)^{-1}\hat{\mathbf{H}}_{22},
(n-1)^{-3}\hat{\mathbf{Q}}^*_n\frac{ \hat{\mathbf{H}}^2_{22}}{ \mathbf{H}_{22}}\right)
\end{eqnarray*}
or, equivalently,
\begin{eqnarray*}
&&\hat{\mathbf{w}}_n^*|(n-1)^{-1}\hat{\mathbf{H}}_{22}, (n-1)^{-1}\hat{\mathbf{Q}}^*_n \nonumber\\
&\sim&\mathcal{N}\left(\frac{H_{12}}{H_{11}},
(n-1)^{-1}\hat{\mathbf{Q}}^*_n\frac{1}{ \mathbf{H}_{22}}\right),
\end{eqnarray*}
where the conditional distribution does not depend on $\hat{\mathbf{H}}_{22}$, i.e., $\hat{\mathbf{w}}_n^*$ and $\hat{\mathbf{Q}}^*_n$ are independent of $\xi_2$. Hence,
\begin{eqnarray}
\hat{\mathbf{w}}_n^*|(n-1)^{-1}\hat{\mathbf{Q}}^*_n
\sim\mathcal{N}\left(\mathbf{w}^*_{GMVP},
\frac{(n-1)^{-1}\hat{\mathbf{Q}}^*_n}{ \mathbf{1}^\prime \mathbf{\Sigma}^{-1}\mathbf{1}}\right). \label{con_norm}
\end{eqnarray}
\end{enumerate}

\vspace{0.1cm}
Let
\begin{eqnarray*}
\xi_1= (n-1) \left(\mathbf{1}^\prime \mathbf{\Sigma}^{-1}\mathbf{1}\right)
\left(\hat{\mathbf{w}}^*_n-\mathbf{r}^*\right)^\prime (\hat{\mathbf{Q}}^*_n)^{-1} \left(\hat{\mathbf{w}}^*_n-\mathbf{r}^*\right).
\end{eqnarray*}
Then, $\xi_1$ and $\xi_2$ are independent, and the application of \eqref{con_norm} leads to
\begin{eqnarray}
\xi_1|(n-1)^{-1}\hat{\mathbf{Q}}^*_n&\sim& \chi^2_{p-1,\lambda_n(\hat{\mathbf{Q}}^*_n)}\nonumber\,,
\end{eqnarray}
with
\begin{eqnarray*}
\lambda_n(\hat{\mathbf{Q}}^*_n)&=&(n-1) \left(\mathbf{1}^\prime \mathbf{\Sigma}^{-1}\mathbf{1}\right)\\
&\times&\left(\mathbf{w}^*_{GMVP}-\mathbf{r}^*\right)^\prime (\hat{\mathbf{Q}}^*_n)^{-1} \left(\mathbf{w}^*_{GMVP}-\mathbf{r}^*\right).
\end{eqnarray*}
\vspace{-0.05cm}
Moreover, in using
$(n-1)(\hat{\mathbf{Q}}^*_n)^{-1} \sim \mathcal{W}_p(n-1,(\mathbf{Q}^*)^{-1} )$ (cf. \citet[Theorems 3.2.10 and 3.2.11]{muirhead2009aspects}), we obtain
{\small
\begin{eqnarray}\label{th1_eq3}
\lambda_n(\hat{\mathbf{Q}}^*_n)&=& \lambda_n
\frac{(n-1)\left(\mathbf{w}^*_{GMVP}-\mathbf{r}^*\right)^\prime (\hat{\mathbf{Q}}^*_n)^{-1} \left(\mathbf{w}^*_{GMVP}-\mathbf{r}^*\right)}
{\left(\mathbf{w}^*_{GMVP}-\mathbf{r}^*\right)^\prime (\mathbf{Q}^*)^{-1} \left(\mathbf{w}^*_{GMVP}-\mathbf{r}^*\right)}\nonumber\\
 &\stackrel{d}{=}& \lambda_n \xi_3\,,
\end{eqnarray}
where $\xi_3\sim \chi^2_{n-1}$.}

The last equality shows that the conditional distribution of $\xi_1$ given $\hat{\mathbf{Q}}^*_n$ depends only on $\hat{\mathbf{Q}}^*_n$ over $\xi_3$, and, consequently, the conditional distribution $\xi_1|\hat{\mathbf{Q}}^*_n$ coincides with $\xi_1|\xi_3$. Using the distributional properties of the non-central $F$-distribution, we obtain the following stochastic representation for $\xi_1$ given by
\[\xi_1 \stackrel{d}{=} (\sqrt{\lambda_n \xi_3}+\omega_1)^2+\xi_4\,,\]
and, hence,
\begin{equation*}
T_n =\frac{n-p}{p-1} \frac{\xi_1}{\xi_2}\stackrel{d}{=} \frac{n-p}{p-1} \frac{(\sqrt{\lambda_n \xi_3}+\omega_1)^2+\xi_4}{\xi_2},
\end{equation*}
where $\omega_1\sim \mathcal{N}(0,1)$, $\xi_2\sim \chi^2_{n-p}$, $\xi_3\sim \chi^2_{n-1}$, and $\xi_4\sim \chi^2_{p-2}$; $\omega_1$, $\xi_2$, $\xi_3$, and $\xi_4$ are independent.
\end{proof}

\begin{proof}\textit{of Theorem \ref{th1}:}
Applying \eqref{stoch_pres_Tn} of Lemma \ref{lem0}, we get
{\footnotesize
\begin{eqnarray*}
&&\frac{n-p}{\xi_2}\sqrt{p-1}\Bigg[\frac{\lambda_n \xi_3+2\sqrt{\lambda_n \xi_3}\omega_1+\omega_1^2+\xi_4}{p-1}\\
&-&\left(1+\lambda_n\frac{n-1}{p-1}\right)\frac{\xi_2}{n-p}\Bigg]\\
&=&\frac{n-p}{\xi_2}\Bigg[\lambda_n \frac{n-1}{p-1}\sqrt{p-1}\left(\frac{\xi_3}{n-1}-1\right)\\
&+&\sqrt{p-1}\left(\frac{\xi_4}{p-1}-1\right)-\left(1+\lambda_n\frac{n-1}{p-1}\right)\sqrt{p-1}\left(\frac{\xi_2}{n-p}-1\right)\\
&+&2\sqrt{\lambda_n}\sqrt{\frac{\xi_3}{p-1}}\omega_1+\frac{\omega_1^2}{\sqrt{p-1}}\Bigg]\,.
\end{eqnarray*}
}
Using the asymptotic properties of a $\chi^2$-distribution with infinite degrees of freedom and the independence of $\omega_1$, $\xi_2$, $\xi_3$, $\xi_4$, the application of Slutsky's lemma (see, for example, Theorem 1.5 in \citet*{dasgupta2008}) leads to
\begin{equation*}
\sqrt{p-1}\left(\frac{T_n-1-\lambda_n\frac{n-1}{p-1}}{C_n}\right) \stackrel{d}{\to} \mathcal{N}\left(0,1\right),
\end{equation*}
where
	\begin{equation*}
C^2_n= 	2+2\frac{\lambda_n^2}{c}+4\frac{\lambda_n}{c}+2\frac{c}{1-c}\left(1+\frac{\lambda_n}{c}\right)^2.
\end{equation*}
\end{proof}

In order to stress the dependence on $n$, we use the notation $\mathbf{\Sigma}_n$ in the proofs of the asymptotic results. For the proof of Theorem \ref{th2} we apply Lemma \ref{lem1}. It must be mentioned that Proposition 3 of \citet*{glombeck} is not fully correct that is why we can not use this result in proving Lemma \ref{lem1}.
\begin{lemma} \label{lem1}
Let
\begin{eqnarray*}
            D_n &=& \frac{\mathbf{b}_n'\mathbf{\hat{\Sigma}}_{n}\mathbf{b}_n}{\mathbf{b}_n'\mathbf{\Sigma}_{n}\mathbf{b}_n} - 1,\\
  E_n& =&  \frac{\mathbf{1}'\mathbf{\hat{\Sigma}}_{n}^{-1}\mathbf{1}}{\mathbf{1}'\mathbf{\Sigma}_{n}^{-1}\mathbf{1}} - \frac{1}{1-c_n}\,.
\end{eqnarray*}
and denote the unit norm vectors
\begin{eqnarray*}
  \mathbf{x}&=& \frac{\mathbf{\Sigma}^{1/2}_{n}\mathbf{b}_n}{\sqrt{\mathbf{b}_n'\mathbf{\Sigma}_{n}\mathbf{b}_n}},~~~  \mathbf{y}= \frac{\mathbf{\Sigma}_{n}^{-1/2}\mathbf{1}}{\sqrt{\mathbf{1}'\mathbf{\Sigma}_{n}^{-1}\mathbf{1}}}.
\end{eqnarray*}

Then, under the assumptions of Theorem \ref{th2} it holds that
	\begin{eqnarray*}\label{glombeck}
         && \sqrt{n}\left( \begin{matrix}
	D_n\\
	E_n
	\end{matrix}\right)\stackrel{d}{\to}\mathcal{N}\left[\left( \begin{matrix}
	0\\0
	\end{matrix} \right) , 2 \left(\begin{matrix}
	1 & -\frac{\lim\limits_{n\to\infty}(\mathbf{x}'\mathbf{y})^2}{1-c}\\
	-\frac{\lim\limits_{n\to\infty}(\mathbf{x}'\mathbf{y})^2}{1-c} & \frac{1}{(1-c)^3}
	\end{matrix} \right) \right]
	\end{eqnarray*}
	for $\dfrac{p}{n} \to c <1 $ as $n \to \infty$.
\end{lemma}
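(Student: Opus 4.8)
The plan is to turn the joint law of $(D_n,E_n)$ into an explicit function of four independent scalar random variables and then linearise. Write $(n-1)\hat{\mathbf{\Sigma}}_n=\mathbf{\Sigma}_n^{1/2}\mathbf{W}\mathbf{\Sigma}_n^{1/2}$ with $\mathbf{W}\sim W_p(n-1,\mathbf{I}_p)$. Then $D_n=\dfrac{\mathbf{x}'\mathbf{W}\mathbf{x}}{n-1}-1$ and, substituting $\hat{\mathbf{\Sigma}}_n^{-1}=(n-1)\mathbf{\Sigma}_n^{-1/2}\mathbf{W}^{-1}\mathbf{\Sigma}_n^{-1/2}$ directly, $\dfrac{\mathbf{1}'\hat{\mathbf{\Sigma}}_n^{-1}\mathbf{1}}{\mathbf{1}'\mathbf{\Sigma}_n^{-1}\mathbf{1}}=(n-1)\,\mathbf{y}'\mathbf{W}^{-1}\mathbf{y}$, so $E_n=(n-1)\,\mathbf{y}'\mathbf{W}^{-1}\mathbf{y}-(1-c_n)^{-1}$. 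By the orthogonal invariance of $W_p(n-1,\mathbf{I}_p)$ one may assume $\mathbf{x}=\mathbf{e}_1$ and $\mathbf{y}=\rho\,\mathbf{e}_1+\sqrt{1-\rho^2}\,\tilde{\mathbf{y}}$, where $\rho=\mathbf{x}'\mathbf{y}$ and $\tilde{\mathbf{y}}$ is a unit vector orthogonal to $\mathbf{e}_1$.

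Partition $\mathbf{W}$ along $\{1\}\cup\{2,\dots,p\}$ and express the relevant entries of $\mathbf{W}^{-1}$ through the Schur complement $W_{11\cdot 2}=W_{11}-\mathbf{w}_{12}'\mathbf{W}_{22}^{-1}\mathbf{w}_{12}$ and the block $\mathbf{W}_{22}^{-1}$ via the block-inverse formula. Invoking the classical Wishart partition results (\citet[Theorems 3.2.10 and 3.2.11]{muirhead2009aspects}) — $W_{11\cdot 2}\sim\chi^2_{n-p}$ is independent of $(\mathbf{w}_{12},\mathbf{W}_{22})$, $\mathbf{w}_{12}\mid\mathbf{W}_{22}\sim\mathcal{N}(\mathbf{0},\mathbf{W}_{22})$, and $(\tilde{\mathbf{y}}'\mathbf{W}_{22}^{-1}\tilde{\mathbf{y}})^{-1}\sim\chi^2_{n-p+1}$ — I expect the joint stochastic representation
\begin{equation*}
\mathbf{x}'\mathbf{W}\mathbf{x}\;\stackrel{d}{=}\;\xi_A+\zeta^2+\eta,\qquad
\mathbf{y}'\mathbf{W}^{-1}\mathbf{y}\;\stackrel{d}{=}\;\frac{1}{\xi_A}\left(\rho-\sqrt{1-\rho^2}\,\frac{\zeta}{\sqrt{\xi_B}}\right)^{2}+\frac{1-\rho^2}{\xi_B},
\end{equation*}
where $\xi_A\sim\chi^2_{n-p}$, $\xi_B\sim\chi^2_{n-p+1}$, $\eta\sim\chi^2_{p-2}$ and $\zeta\sim\mathcal{N}(0,1)$ are mutually independent; here $\zeta$ is the standardised coordinate of $\mathbf{w}_{12}$ along $\mathbf{W}_{22}^{-1/2}\tilde{\mathbf{y}}$, so that $\mathbf{w}_{12}'\mathbf{W}_{22}^{-1}\mathbf{w}_{12}=\zeta^2+\eta$ and $\tilde{\mathbf{y}}'\mathbf{W}_{22}^{-1}\mathbf{w}_{12}=\zeta\,\xi_B^{-1/2}$. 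As a sanity check one verifies the marginals $\mathbf{x}'\mathbf{W}\mathbf{x}\sim\chi^2_{n-1}$ and $\mathbf{y}'\mathbf{W}^{-1}\mathbf{y}\sim 1/\chi^2_{n-p}$.

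Given this representation the rest is a first-order expansion. Writing $\xi_A,\xi_B,\eta$ each as its mean plus that mean's square root times a standardised variable converging in law to $\mathcal{N}(0,2)$ — call them $A,B,H$ — Taylor-expanding $D_n$ and $(n-1)\mathbf{y}'\mathbf{W}^{-1}\mathbf{y}$, discarding the $O_p(n^{-1})$ remainders (negligible after multiplication by $\sqrt n$), and applying Slutsky's lemma together with the joint CLT for $(A,B,H,\zeta)$ (joint because these variables are independent), one obtains
\begin{equation*}
\sqrt n\,D_n\;\stackrel{d}{\to}\;\sqrt{1-c}\,A+\sqrt c\,H,\qquad
\sqrt n\,E_n\;\stackrel{d}{\to}\;-\frac{\rho_\infty^2\,A+2\rho_\infty\sqrt{1-\rho_\infty^2}\,\zeta+(1-\rho_\infty^2)\,B}{(1-c)^{3/2}},
\end{equation*}
where $\rho_\infty^2=\lim_{n\to\infty}(\mathbf{x}'\mathbf{y})^2$. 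Reading off the covariance of this bivariate Gaussian limit — the two coordinates share only the term in $A$, which has variance $2$ — yields diagonal entries $2$ and $2/(1-c)^3$ and off-diagonal entry $-2\rho_\infty^2/(1-c)$, i.e. exactly the asserted matrix. Since $(\mathbf{x}'\mathbf{y})^2=(\mathbf{b}_n'\mathbf{\Sigma}_n\mathbf{b}_n\;\mathbf{1}'\mathbf{\Sigma}_n^{-1}\mathbf{1})^{-1}\in[0,1]$ is bounded under the standing assumptions, the limit $\rho_\infty^2$ exists along subsequences and the limiting law depends on the subsequence only through $\rho_\infty^2$, so the convergence holds along the full sequence.

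The main obstacle is the middle step: assembling the joint stochastic representation, and in particular identifying the single Gaussian degree of freedom $\zeta$ that is shared by $\mathbf{x}'\mathbf{W}\mathbf{x}$ and $\mathbf{y}'\mathbf{W}^{-1}\mathbf{y}$ — this shared $\zeta$ is the only source of the off-diagonal covariance — while verifying the mutual independence of $\xi_A,\xi_B,\zeta,\eta$. This is precisely the bookkeeping that Proposition 3 of \citet*{glombeck} mishandles, which is why that result cannot be used here.
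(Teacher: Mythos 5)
Your proposal is correct, but it takes a genuinely different route from the paper's. The paper disposes of the lemma in a few lines: writing $\hat{\mathbf{\Sigma}}_n \overset{d}{=} \mathbf{\Sigma}_n^{1/2}\mathbf{S}_n\mathbf{\Sigma}_n^{1/2}$ with $\mathbf{S}_n\sim W(n-1,\mathbf{I})$, it invokes Theorem 3 of \citet*{bai2011}, a random-matrix CLT for the pair $\left(\mathbf{x}'\mathbf{S}_n\mathbf{x},\,\mathbf{y}'\mathbf{S}_n^{-1}\mathbf{y}\right)$, whose asymptotic covariance is expressed through moments of the Marchenko--Pastur law already computed in Lemma 14 of \citet*{glombeck}; elementary algebra then yields the stated matrix. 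You instead construct an exact finite-sample joint stochastic representation of $\left(\mathbf{x}'\mathbf{W}\mathbf{x},\,\mathbf{y}'\mathbf{W}^{-1}\mathbf{y}\right)$ from Wishart partition theory: the Schur complement $\xi_A=W_{11\cdot 2}\sim\chi^2_{n-p}$ independent of $(\mathbf{w}_{12},\mathbf{W}_{22})$, the conditional normality of $\mathbf{w}_{12}$, and $\xi_B=(\tilde{\mathbf{y}}'\mathbf{W}_{22}^{-1}\tilde{\mathbf{y}})^{-1}\sim\chi^2_{n-p+1}$, with $\zeta,\eta$ extracted so that $\xi_A,\xi_B,\zeta,\eta$ are mutually independent. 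I checked the representation, the degrees of freedom, and the expansion: the limiting variances are $2$ and $2/(1-c)^3$ (using $2\rho^4+4\rho^2(1-\rho^2)+2(1-\rho^2)^2=2$) and the covariance is $-2\rho_\infty^2/(1-c)$, in agreement with the lemma. Your route is self-contained and elementary, in the same spirit as the paper's Lemma \ref{lem0} for $T_n$, and it gives an exact representation valid for every $n$ (which could even be used for finite-sample simulation); the paper's route is much shorter and rests on a CLT that extends beyond Gaussian returns, but it imports the random-matrix machinery as a black box together with Glombeck's moment computations.

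Two small blemishes, neither fatal. First, your closing remark that the shared $\zeta$ ``is the only source of the off-diagonal covariance'' contradicts your own displayed limits: $\zeta$ enters $\sqrt{n}\,D_n$ only through $\zeta^2/\sqrt{n}=o_p(1)$, so the asymptotic covariance comes entirely from the shared $\xi_A$ (the $A$ term), exactly as your covariance reading states; both $\xi_A$ and $\zeta$ are shared by the two forms, but only $\xi_A$ survives the scaling. Second, the subsequence argument at the end is unnecessary and, as phrased, would not give full-sequence convergence unless $(\mathbf{x}'\mathbf{y})^2$ actually converges; since the lemma's covariance matrix is written in terms of $\lim_{n\to\infty}(\mathbf{x}'\mathbf{y})^2$, the existence of this limit is already presupposed, and you can simply substitute it in your limiting expressions.
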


\begin{proof} \textit{of Lemma \ref{lem1}:} \,
Noting that $\mathbf{\hat{\Sigma}}_n\overset{d}{=}\mathbf{\Sigma}_n^{1/2}\mathbf{S}_n\mathbf{\Sigma}^{1/2}_n$ with $\mathbf{S}_n\sim W(n-1, \mathbf{I})$, the result of Lemma \ref{lem1} follows by the direct application of Theorem 3 in \citet*{bai2011}, where it was proven that for $\dfrac{p}{n} \to c <1 $ as $n \to \infty$ the following result holds
  \begin{eqnarray*}
    \sqrt{n}\left( \begin{matrix} \displaystyle
	\mathbf{x}'\mathbf{S}_n\mathbf{x}-1\\
	\mathbf{y}'\mathbf{S}_n^{-1}\mathbf{y}-\frac{1}{1-c_n}
	\end{matrix}\right)\stackrel{d}{\to}\mathcal{N}\left(\mathbf{0} , \frac{2}{c}\boldsymbol{\Theta}_{\mathbf{x, y}}\circ\boldsymbol{\Omega}_c \right),
  \end{eqnarray*}
  where
  {\small
\begin{eqnarray*}
  \boldsymbol{\Theta}_{\mathbf{x, y}}&=&\left(\begin{matrix}
	\lim\limits_{n\to\infty}(\mathbf{x}'\mathbf{x})^2 & \lim\limits_{n\to\infty}(\mathbf{x}'\mathbf{y})^2 \\
	\lim\limits_{n\to\infty}(\mathbf{x}'\mathbf{y})^2 & \lim\limits_{n\to\infty}(\mathbf{y}'\mathbf{y})^2
      \end{matrix} \right)\\
  &=&\left(\begin{matrix}
	1 & \lim\limits_{n\to\infty}(\mathbf{x}'\mathbf{y})^2 \\
	\lim\limits_{n\to\infty}(\mathbf{x}'\mathbf{y})^2 & 1
      \end{matrix} \right),\\
  \boldsymbol{\Omega}_c&=&
\left(
\begin{array}{cc}
\omega_{c,11}& \omega_{c,12}\\
\omega_{c,12}& \omega_{c,22}\\
\end{array}
  \right)
  \end{eqnarray*}
with
\begin{eqnarray*}
\omega_{c,11}&=& \int z^2\text{d}F_c(z)-\left(\int z\text{d}F_c(z) \right)^2 ,\\
\omega_{c,12}&=&  1-\int z\text{d}F_c(z) \int \frac{1}{z}\text{d}F_c(z),\\
\omega_{c,22}&=& \int \frac{1}{z^2}\text{d}F_c(z)-\left(\int \frac{1}{z}\text{d}F_c(z) \right)^2\\
\end{eqnarray*}
}
and the symbol $\circ$ denotes the Hadamard (entrywise) product of matrices. The function $F_c(z)$ denotes the cumulative distribution function of the Marchenko-Pastur law (see, \citet*{bai2010spectral}) for $c<1$ given by
\begin{eqnarray*}
\text{d}  F_c(z) = \frac{1}{2\pi z c}\sqrt{(a_{+}-z)(z-a_{-})}\mathbbm{1}_{[a_{-}, a_{+}]}(z)dz,
\end{eqnarray*}
where $a_{\pm}=(1\pm\sqrt{c})^2$.
 The moments of $F_c(z)$ given in the matrix $\boldsymbol{\Omega}_c$ are already calculated in \citet[Lemma 14]{glombeck} and, thus, it holds
\begin{eqnarray*}
  \boldsymbol{\Omega}_c&=&  \left(\begin{matrix}
	c & -\frac{c}{1-c} \\
	-\frac{c}{1-c} & \frac{c}{(1-c)^3}
	\end{matrix} \right)\,.
\end{eqnarray*}
  At last, after elementary calculus the result follows.
\end{proof}

\begin{proof} \textit{of Theorem \ref{th2}:} \,
First, the asymptotic distribution of $\hat{R}_{\mathbf{b}_n}$ is derived. We rewrite $\hat{R}_{\mathbf{b}_n}$ as
	\begin{eqnarray*}
		\hat{R}_{\mathbf{b}_n} & = & (1-c_n)\frac{\mathbf{b}_n'\mathbf{\hat{\Sigma}}_{n}\mathbf{b}_n}{\mathbf{b}_n'\mathbf{\Sigma}_{n}\mathbf{b}_n} \frac{\mathbf{1}'\mathbf{\hat{\Sigma}}_{n}^{-1}\mathbf{1}}{\mathbf{1}'\mathbf{\Sigma}_{n}^{-1}\mathbf{1}}\mathbf{b}_n'\mathbf{\Sigma}_{n}\mathbf{b}_n\mathbf{1}'\mathbf{\Sigma}_{n}^{-1}\mathbf{1} -1\\
		& = & \Delta_n (1-c_n) (D_n E_n + \frac{D_n}{1-c_n} + E_n) + \Delta_n - 1
	\end{eqnarray*}
	with
          \begin{eqnarray*}
            \Delta_n &=& \mathbf{b}_n'\mathbf{\Sigma}_{n}\mathbf{b}_n \; \mathbf{1}'\mathbf{\Sigma}_{n}^{-1}\mathbf{1}.~~
          \end{eqnarray*}	
          Then, it follows that
          {\small
	\begin{eqnarray*}
          &&	\sqrt{n} \frac{\hat{R}_{\mathbf{b}_n} - \Delta_n +1}{\Delta_n} \\
          &=&  \sqrt{n} (1-c_n) (D_n E_n + D_n/(1-c_n) + E_n)\\
		& = & (1-c_n) \sqrt{n} ( D_n/(1-c_n) + E_n) + o_p(1)\\
          & = & ( 1  \quad 1-c_n ) \; \sqrt{n} \; \left( \begin{matrix}\displaystyle D_n\\\displaystyle E_n \end{matrix} \right) + o_p(1)\\
          &\stackrel{d}{\to}& \mathcal{N}\left( 0, 2 \left(\frac{2-c}{1-c}-\frac{2}{\Delta_n}\right)\right),
	\end{eqnarray*}}
where the last equality follows from Lemma \ref{lem1}.

          Since
        {\small
	\[ \hat{\tilde{\alpha}}_n = \frac{(1-c_n) \left( \frac{\hat{R}_{\mathbf{b}_n} - \Delta_n +1}{\Delta_n} + \frac{\Delta_n - 1}{\Delta_n} \right) }{\frac{c_n}{\Delta_n} + (1-c_n) \left( \frac{\hat{R}_{\mathbf{b}_n} - \Delta_n +1}{\Delta_n} + \frac{\Delta_n - 1}{\Delta_n} \right) } \,,\]	
	it follows that
		\[ \sqrt{n} \left( \hat{\tilde{\alpha}}_n - \frac{ (1-c_n) (\Delta_n - 1)}{c_n + (1-c_n) (\Delta_n - 1)} \right) = I_n + II_n \]
	with}
        {\small
	\begin{eqnarray*}
		I_n & = &   \sqrt{n} \; \frac{(1-c_n) \left( \frac{\hat{R}_{\mathbf{b}_n} - \Delta_n +1}{\Delta_n}  \right) }{\frac{c_n}{\Delta_n} + (1-c_n) \left( \frac{\hat{R}_{\mathbf{b}_n} - \Delta_n +1}{\Delta_n} + \frac{\Delta_n - 1}{\Delta_n} \right) } \\
		& = & \sqrt{n} \; \frac{(1-c_n) \left( \frac{\hat{R}_{\mathbf{b}_n} - \Delta_n +1}{\Delta_n}  \right) }{\frac{c_n}{\Delta_n} + (1-c_n)  \frac{\Delta_n - 1}{\Delta_n}  }  \\
&\times& \frac{\frac{c_n}{\Delta_n} + (1-c_n)  \frac{\Delta_n - 1}{\Delta_n}  }{\frac{c_n}{\Delta_n} + (1-c_n) \left( \frac{\hat{R}_{\mathbf{b}_n} - \Delta_n +1}{\Delta_n} + \frac{\Delta_n - 1}{\Delta_n} \right) }\\
		& = & \sqrt{n} \; \frac{(1-c_n) \left( \frac{\hat{R}_{\mathbf{b}_n} - \Delta_n +1}{\Delta_n}  \right) }{1 - c_n - \frac{1-2 c_n}{\Delta_n}} \\
&\times& \frac{1}{1 + \frac{1-c_n}{\sqrt{n} (1 - c_n - \frac{1-2 c_n}{\Delta_n})} \sqrt{n} \frac{\hat{R}_{\mathbf{b}_n} - \Delta_n +1}{\Delta_n}} .
\end{eqnarray*}
	}
	
Furthermore,{\small
	\begin{eqnarray*}
          II_n & = & \sqrt{n} (1 - c_n) (1 - \frac{1}{\Delta_n}) \\
          &\times&\left( \frac{1}{\frac{c_n}{\Delta_n} + (1-c_n) \left( \frac{\hat{R}_{\mathbf{b}_n} - \Delta_n +1}{\Delta_n} + \frac{\Delta_n - 1}{\Delta_n} \right) }\right.\\
          &-&\left. \frac{1}{\frac{c_n}{\Delta_n} + (1-c_n)  \frac{\Delta_n - 1}{\Delta_n}  } \right)\\
		& = &  \sqrt{n} \frac{1 - c_n}{1 - c_n - \frac{1-2 c_n}{\Delta_n}} \; (1 - \frac{1}{\Delta_n}) \\
&\times& \left( \frac{1}{1 + \frac{1-c_n}{1 - c_n - \frac{1-2c_n}{\Delta_n}}  \frac{\hat{R}_{\mathbf{b}_n} - \Delta_n +1 }{\Delta_n} }
		- 1 \right) \\
		& = & - \frac{(1 - c_n)^2}{(1 - c_n - \frac{1-2 c_n}{\Delta_n})^2} \; \left(1 - \frac{1}{\Delta_n}\right) \\
&\times&
		\frac{\sqrt{n} \frac{ \hat{R}_{\mathbf{b}_n} - \Delta_n +1 }{\Delta_n} }{1 + \frac{1-c_n}{\sqrt{n} (1 - c_n - \frac{1-2 c_n}{\Delta_n} ) } \sqrt{n} \frac{ \hat{R}_{\mathbf{b}_n} - \Delta_n +1 }{\Delta_n} }  .
	\end{eqnarray*}
	}

Consequently, if $\sqrt{n} (1 - c_n - \frac{1-2 c_n}{\Delta_n} ) \rightarrow \infty$ as $n \to \infty$ then {\small
	\begin{eqnarray*}
		I_n + II_n & = & \sqrt{n} \frac{ \hat{R}_{\mathbf{b}_n} - \Delta_n +1 }{\Delta_n}
                                 \frac{c_n(1 - c_n)}{(1 - c_n - \frac{1-2 c_n}{\Delta_n})^2} \; \frac{1}{\Delta_n} \;\\
          &\times&	\frac{1}{1 + \frac{1-c_n}{\sqrt{n} (1 - c_n - \frac{1-2 c_n}{\Delta_n} ) } \sqrt{n} \frac{ \hat{R}_{\mathbf{b}_n} - \Delta_n +1 }{\Delta_n} } \\
		& = & \sqrt{n} \frac{ \hat{R}_{\mathbf{b}_n} - \Delta_n +1 }{\Delta_n}
                      \frac{c_n(1 - c_n) \Delta_n}{(c_n + (\Delta_n - 1) (1 - c_n))^2} \;\\
          &\times& (1 + o_p(1))\\
        \end{eqnarray*}

        \vspace{-1cm}

        \begin{equation*}
        \stackrel{d}{\approx}  {\cal N}\left( 0, 2 \frac{c_n^2 (1-c_n)(2-c_n) \Delta_n}{(c_n + (\Delta_n - 1) (1 - c_n))^4}\left(\Delta_n+\frac{2(c_n-1)}{2-c_n} \right)\right),
        \end{equation*}}
where we have used the equality $$\frac{2-c_n}{1-c_n}-\frac{2}{\Delta_n}=\frac{2-c_n}{1-c_n}\frac{1}{\Delta_n}\left(\Delta_n+\frac{2(c_n-1)}{2-c_n}\right).$$
	Since $\mathbf{b}_n'\mathbf{\Sigma}_{n}\mathbf{b}_n \, \geq \, \min\limits_{\substack{\bf{w}}} \mathbf{w}'\mathbf{\Sigma}_{n}\mathbf{w} =\displaystyle \frac{1}{\mathbf{1}'_n\mathbf{\Sigma}_n^{-1}\mathbf{1}_n} $, it holds that $\Delta_n \geq 1$, and, thus, the condition $\lim\limits_{n\to\infty}\sqrt{n} (1 - c_n - \frac{1-2 c_n}{\Delta_n} ) \rightarrow \infty$ is always fulfilled. Taking into account the relation $\Delta_n=R_{\mathbf{b}_n}+1$ the proof of Theorem \ref{th2} is finished.
\end{proof}

In the proof of Theorem \ref{th3} we use the following two lemmas. Lemma \ref{lem2} extends the results of Theorem 1 in \citet*{bodnar2016singular} to the case $n>p$, while Lemma \ref{lem3} presents a stochastic representation of $\tilde{T}_n$ similarly to the statement of Lemma \ref{lem0} in the non-singular case.

\begin{lemma}\label{lem2}
Let $\mathbf{V} \sim W_p(N,\mathbf{\Sigma})$ with $rank(\mathbf{\Sigma})=q \le N$ and let $\mathbf{L}: k \times p$ be a matrix of constants of rank $k\le q$. Then it holds that
$$
(\mathbf{L}\mathbf{V}^+\mathbf{L}')^{-1} \sim W_{k}\left(n-q+k,(\mathbf{L}\mathbf{\Sigma}^+\mathbf{L}')^{-1}\right).
$$
\end{lemma}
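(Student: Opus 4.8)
The plan is to reduce the singular Wishart matrix $\mathbf{V}$ to a non-singular Wishart matrix of size $q$, evaluate the Moore--Penrose inverse explicitly in the reduced coordinates, and then appeal to the classical result on the distribution of a diagonal block of an inverse Wishart matrix, namely Theorem~3.2.11 in \citet*{muirhead2009aspects} (already invoked in the proof of Lemma~\ref{lem0}). This is, in essence, the route of Theorem~1 in \citet*{bodnar2016singular}, adapted so that $N\ge q$ is allowed.

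First I would fix the spectral decomposition $\mathbf{\Sigma}=\mathbf{\Gamma}\mathbf{\Lambda}\mathbf{\Gamma}'$, where $\mathbf{\Gamma}$ is $p\times q$ with orthonormal columns spanning $\mathrm{range}(\mathbf{\Sigma})$ and $\mathbf{\Lambda}$ is $q\times q$, diagonal and positive definite, so that $\mathbf{\Sigma}^{+}=\mathbf{\Gamma}\mathbf{\Lambda}^{-1}\mathbf{\Gamma}'$. Writing $\mathbf{V}=\sum_{i=1}^{N}\mathbf{z}_i\mathbf{z}_i'$ with the $\mathbf{z}_i$ independent $\mathcal{N}_p(\mathbf{0},\mathbf{\Sigma})$ and substituting $\mathbf{z}_i=\mathbf{\Gamma}\mathbf{\Lambda}^{1/2}\mathbf{u}_i$ with the $\mathbf{u}_i$ independent $\mathcal{N}_q(\mathbf{0},\mathbf{I}_q)$ yields the stochastic representation
\[
\mathbf{V}\stackrel{d}{=}\mathbf{\Gamma}\mathbf{\Lambda}^{1/2}\,\mathbf{U}\,\mathbf{\Lambda}^{1/2}\mathbf{\Gamma}',\qquad \mathbf{U}\sim W_q(N,\mathbf{I}_q),
\]
and, since $N\ge q$, the matrix $\mathbf{U}$ is almost surely positive definite, whence $\mathrm{range}(\mathbf{V})=\mathrm{range}(\mathbf{\Sigma})$ a.s. Using $\mathbf{\Gamma}'\mathbf{\Gamma}=\mathbf{I}_q$ and checking the four Penrose conditions, I would then obtain
\[
\mathbf{V}^{+}\stackrel{d}{=}\mathbf{\Gamma}\mathbf{\Lambda}^{-1/2}\,\mathbf{U}^{-1}\,\mathbf{\Lambda}^{-1/2}\mathbf{\Gamma}'.
\]
Setting $\mathbf{K}=\mathbf{L}\mathbf{\Gamma}\mathbf{\Lambda}^{-1/2}$ (a $k\times q$ matrix) gives $\mathbf{L}\mathbf{V}^{+}\mathbf{L}'\stackrel{d}{=}\mathbf{K}\mathbf{U}^{-1}\mathbf{K}'$, while at the same time $\mathbf{K}\mathbf{K}'=\mathbf{L}\mathbf{\Gamma}\mathbf{\Lambda}^{-1}\mathbf{\Gamma}'\mathbf{L}'=\mathbf{L}\mathbf{\Sigma}^{+}\mathbf{L}'$; non-singularity of $\mathbf{L}\mathbf{\Sigma}^{+}\mathbf{L}'$ (implicit in the statement, since its inverse appears there) is exactly $\mathrm{rank}(\mathbf{K})=k$.

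To finish I would append $q-k$ further rows to $\mathbf{K}$ so as to form a non-singular $q\times q$ matrix $\tilde{\mathbf{K}}$ (possible because $\mathrm{rank}(\mathbf{K})=k\le q$), and observe that $\mathbf{K}\mathbf{U}^{-1}\mathbf{K}'$ is the leading $k\times k$ block of $\tilde{\mathbf{K}}\mathbf{U}^{-1}\tilde{\mathbf{K}}'=\mathbf{W}^{-1}$, where $\mathbf{W}=(\tilde{\mathbf{K}}')^{-1}\mathbf{U}\tilde{\mathbf{K}}^{-1}\sim W_q\!\big(N,(\tilde{\mathbf{K}}\tilde{\mathbf{K}}')^{-1}\big)$ by the linear-transformation property of the Wishart law. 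Theorem~3.2.11 in \citet*{muirhead2009aspects} then identifies the inverse of the leading $k\times k$ block of $\mathbf{W}^{-1}$ with the Schur complement $\mathbf{W}_{11\cdot 2}$, which follows a $W_k$ law with $N-(q-k)$ degrees of freedom and scale matrix $\big[(\tilde{\mathbf{K}}\tilde{\mathbf{K}}')_{11}\big]^{-1}=(\mathbf{K}\mathbf{K}')^{-1}=(\mathbf{L}\mathbf{\Sigma}^{+}\mathbf{L}')^{-1}$. Hence $(\mathbf{L}\mathbf{V}^{+}\mathbf{L}')^{-1}\sim W_k\big(N-q+k,(\mathbf{L}\mathbf{\Sigma}^{+}\mathbf{L}')^{-1}\big)$, which is the claim. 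I expect the only genuinely delicate point to be the explicit evaluation of $\mathbf{V}^{+}$: everything hinges on the almost-sure identity $\mathrm{range}(\mathbf{V})=\mathrm{range}(\mathbf{\Sigma})$ valid for $N\ge q$, which is what legitimizes the displayed formula for $\mathbf{V}^{+}$ and hence the reduction to $\mathbf{K}\mathbf{U}^{-1}\mathbf{K}'$; the remaining steps are routine manipulations with standard Wishart identities.
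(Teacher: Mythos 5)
Your proposal is correct and follows essentially the same route as the paper's proof: reduce $\mathbf{V}$ to a nonsingular $W_q(N,\mathbf{I}_q)$ matrix via the spectral decomposition of the singular $\mathbf{\Sigma}$, evaluate $\mathbf{V}^{+}$ explicitly through that full-rank factorization (the paper cites Greville (1966) where you verify the Penrose conditions directly), and then invoke Theorem 3.2.11 of Muirhead. Your extra step of embedding $\mathbf{K}$ into a nonsingular $\tilde{\mathbf{K}}$ and passing through the Schur-complement/block form is harmless but unnecessary, since Theorem 3.2.11 applies directly to the rectangular rank-$k$ matrix $\mathbf{K}$ and immediately yields $(\mathbf{L}\mathbf{V}^{+}\mathbf{L}')^{-1}\sim W_k\left(N-q+k,(\mathbf{L}\mathbf{\Sigma}^{+}\mathbf{L}')^{-1}\right)$.
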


\begin{proof}\textit{of Lemma \ref{lem2}:}
The stochastic representation of $\mathbf{V}$ is expressed as
\begin{equation}\label{pth1_eq1}
\mathbf{V} \stackrel{d}{=} \mathbf{Y}\mathbf{Y}' ~~ \text{with} ~~ \mathbf{Y} \sim \mathcal{N}_{p,N}(\mathbf{0},\mathbf{\Sigma} \otimes \mathbf{I}_n)\,.
\end{equation}

Let $\mathbf{\Sigma}=\mathbf{Q} \mathbf{\Lambda} \mathbf{Q}'$ be the singular value decomposition of $\mathbf{\Sigma}$ where $\mathbf{\Lambda}: q\times q$ is the matrix of non-zero eigenvalues and $\mathbf{Q}:p\times q$ is the semi-orthogonal matrix of the corresponding eigenvectors, i.e., $\mathbf{Q}'\mathbf{Q}=\bI_q$. Then the stochastic representation of $\mathbf{Y}$ is given by
\begin{equation}\label{pth1_eq2}
\mathbf{Y} \stackrel{d}{=} \mathbf{Q} \mathbf{\Lambda}^{1/2}\mathbf{Z} ~~ \text{with} ~~ \mathbf{Z} \sim \mathcal{N}_{q,N}(\mathbf{0},\mathbf{I}_q \otimes \mathbf{I}_n)\,,
\end{equation}
and, hence,
\begin{equation}\label{stoch_singular}
\mathbf{V} \stackrel{d}{=} \mathbf{Q} \mathbf{\Lambda}^{1/2}\mathbf{Z}\mathbf{Z}' \mathbf{\Lambda}^{1/2} \mathbf{Q}' \,,
\end{equation}
where $\mathbf{Z}\mathbf{Z}' \sim \mathcal{W}_q(N,\mathbf{I}_q)$.

Since $\mathbf{Q} \mathbf{\Lambda}^{1/2}$ is a full column-rank matrix and $\mathbf{\Lambda}^{1/2} \mathbf{Q}'$ is a full row-rank matrix, we get
\begin{eqnarray}\label{pth1_eq3}
\mathbf{L}\mathbf{V}^+\mathbf{L}'&\stackrel{d}{=}& \mathbf{L}\left(\mathbf{Q} \mathbf{\Lambda}^{1/2}\mathbf{Z}\mathbf{Z}' \mathbf{\Lambda}^{1/2} \mathbf{Q}'\right)^+\mathbf{L}'\nonumber\\
&=& \mathbf{L}\mathbf{Q} \mathbf{\Lambda}^{-1/2}\left(\mathbf{Z}\mathbf{Z}'\right)^+ \mathbf{\Lambda}^{-1/2} \mathbf{Q}'\mathbf{L}'\nonumber\\
&=& \mathbf{L}\mathbf{Q} \mathbf{\Lambda}^{-1/2}\left(\mathbf{Z}\mathbf{Z}'\right)^{-1} \mathbf{\Lambda}^{-1/2} \mathbf{Q}'\mathbf{L}'\,,
\end{eqnarray}
because $\mathbf{Z}\mathbf{Z}'$ is non-singular (cf., \citet*{Greville1966}). Finally, the application of Theorem 3.2.11 in \citet*{muirhead2009aspects} leads to
\begin{eqnarray*}
(\mathbf{L}\mathbf{V}^+\mathbf{L}')^{-1}&\sim& \mathcal{W}_k\left(N-q+k,\left(\mathbf{L}\mathbf{Q} \mathbf{\Lambda}^{-1/2}\mathbf{I}_q \mathbf{\Lambda}^{-1/2} \mathbf{Q}'\mathbf{L}'\right)^{-1}\right) \\
&=&\mathcal{W}_{k}\left(N-q+k,(\mathbf{L}\mathbf{\Sigma}^+\mathbf{L}^\prime)^{-1}\right)\,.
\end{eqnarray*}
\end{proof}

\begin{lemma}\label{lem3}
Under the conditions of Theorem \ref{th3}, the stochastic representation of $\tilde{T}_n$ is expressed as
\begin{equation}\label{stoch_pres_Tn_sing}
\tilde{T}_n \stackrel{d}{=} \frac{n-q}{k} \frac{(\sqrt{\tilde{\lambda}_n \xi_3}+\omega_1)^2+\xi_4}{\xi_2},
\end{equation}
where $\omega_1\sim \mathcal{N}(0,1)$, $\xi_2\sim \chi^2_{n-q}$, $\xi_3\sim \chi^2_{n-q+k}$, and $\xi_4\sim \chi^2_{k-1}$; $\omega_1$, $\xi_2$, $\xi_3$, and $\xi_4$ are independent.
\end{lemma}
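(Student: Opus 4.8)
The plan is to mirror, essentially line by line, the proof of Lemma~\ref{lem0}, replacing the ordinary inverse by the Moore--Penrose inverse and using Lemma~\ref{lem2} wherever the non-singular argument invoked the classical Wishart inversion formulas. First I would form the augmented matrix $\mathbf{M}^\prime=(\mathbf{L}^\prime,\mathbf{1})$, which is now $p\times(k+1)$; since $k<q$ and (for the coordinate-selection $\mathbf{L}$ of interest) $\mathbf{1}$ is not in the row span of $\mathbf{L}$, it has full rank $k+1\le q$. Partitioning $\mathbf{M}\hat{\mathbf{\Sigma}}_n^{+}\mathbf{M}^\prime=\{\hat{\mathbf{H}}_{ij}\}_{i,j=1,2}$ and $\mathbf{M}\mathbf{\Sigma}^{+}\mathbf{M}^\prime=\{\mathbf{H}_{ij}\}_{i,j=1,2}$ exactly as in Lemma~\ref{lem0}, one checks directly that $\hat{\mathbf{H}}_{22}=\mathbf{1}^\prime\hat{\mathbf{\Sigma}}_n^{+}\mathbf{1}$, $\hat{\mathbf{H}}_{12}/\hat{\mathbf{H}}_{22}=\hat{\tilde{\mathbf{w}}}_n^*$ and $\hat{\mathbf{H}}_{11}-\hat{\mathbf{H}}_{12}\hat{\mathbf{H}}_{21}/\hat{\mathbf{H}}_{22}=\hat{\tilde{\mathbf{Q}}}_n^*$, so that the objects entering $\tilde{T}_n$ appear as blocks of $\mathbf{M}\hat{\mathbf{\Sigma}}_n^{+}\mathbf{M}^\prime$.

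The crucial distributional input is Lemma~\ref{lem2}. Applied to $(n-1)\hat{\mathbf{\Sigma}}_n\sim W_p(n-1,\mathbf{\Sigma})$ with $\mathrm{rank}(\mathbf{\Sigma})=q$ and to the matrix $\mathbf{M}$, it gives $(n-1)(\mathbf{M}\hat{\mathbf{\Sigma}}_n^{+}\mathbf{M}^\prime)^{-1}\sim W_{k+1}(n-q+k,(\mathbf{M}\mathbf{\Sigma}^{+}\mathbf{M}^\prime)^{-1})$, hence $(n-1)^{-1}\mathbf{M}\hat{\mathbf{\Sigma}}_n^{+}\mathbf{M}^\prime$ is an inverse Wishart matrix and Theorem~3 of \citet*{BodnarOkhrin2008} applies to it verbatim. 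From it I would read off: (i) $\hat{\mathbf{H}}_{22}$ is independent of the pair $(\hat{\tilde{\mathbf{w}}}_n^*,\hat{\tilde{\mathbf{Q}}}_n^*)$; (ii) the top-left $k\times k$ sub-block and the scalar Schur complement of the underlying Wishart give $(n-1)(\hat{\tilde{\mathbf{Q}}}_n^*)^{-1}\sim W_k(n-q+k,(\tilde{\mathbf{Q}}^*)^{-1})$ and $\xi_2=(n-1)\frac{\mathbf{1}^\prime\mathbf{\Sigma}^{+}\mathbf{1}}{\mathbf{1}^\prime\hat{\mathbf{\Sigma}}_n^{+}\mathbf{1}}\sim\chi^2_{n-q}$; (iii) conditionally on $\hat{\tilde{\mathbf{Q}}}_n^*$, $\hat{\tilde{\mathbf{w}}}_n^*\sim\mathcal{N}\big(\tilde{\mathbf{w}}^*_{GMVP},\frac{(n-1)^{-1}\hat{\tilde{\mathbf{Q}}}_n^*}{\mathbf{1}^\prime\mathbf{\Sigma}^{+}\mathbf{1}}\big)$, independently of $\xi_2$. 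The step I expect to require the most care is precisely the bookkeeping of the degrees of freedom: the rank-$q$ correction enters through Lemma~\ref{lem2} and then propagates through the block/Schur-complement reductions, turning the triple $(n-p,\,n-1,\,p-2)$ of Lemma~\ref{lem0} into $(n-q,\,n-q+k,\,k-1)$; this is mechanical but error-prone, and one also has to keep track of the condition $q\le n-1$, which holds for $n$ large under the assumptions of Theorem~\ref{th3}.

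Finally, mirroring the last part of the proof of Lemma~\ref{lem0}, I would set $\tilde{\xi}_1=(n-1)(\mathbf{1}^\prime\mathbf{\Sigma}^{+}\mathbf{1})(\hat{\tilde{\mathbf{w}}}_n^*-\tilde{\mathbf{r}}^*)^\prime(\hat{\tilde{\mathbf{Q}}}_n^*)^{-1}(\hat{\tilde{\mathbf{w}}}_n^*-\tilde{\mathbf{r}}^*)$, which by (i) and (iii) is independent of $\xi_2$ and, conditionally on $\hat{\tilde{\mathbf{Q}}}_n^*$, non-central $\chi^2_k$ with non-centrality $\tilde{\lambda}_n(\hat{\tilde{\mathbf{Q}}}_n^*)=(n-1)(\mathbf{1}^\prime\mathbf{\Sigma}^{+}\mathbf{1})(\tilde{\mathbf{w}}^*_{GMVP}-\tilde{\mathbf{r}}^*)^\prime(\hat{\tilde{\mathbf{Q}}}_n^*)^{-1}(\tilde{\mathbf{w}}^*_{GMVP}-\tilde{\mathbf{r}}^*)$. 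Using $(n-1)(\hat{\tilde{\mathbf{Q}}}_n^*)^{-1}\sim W_k(n-q+k,(\tilde{\mathbf{Q}}^*)^{-1})$ and the quadratic-form property of the Wishart law one gets $\tilde{\lambda}_n(\hat{\tilde{\mathbf{Q}}}_n^*)\stackrel{d}{=}\tilde{\lambda}_n\xi_3$ with $\xi_3\sim\chi^2_{n-q+k}$, so that conditioning on $\hat{\tilde{\mathbf{Q}}}_n^*$ reduces, for the law of $\tilde{\xi}_1$, to conditioning on $\xi_3$. The standard representation of a non-central $\chi^2_k$ variate with random non-centrality then yields $\tilde{\xi}_1\stackrel{d}{=}(\sqrt{\tilde{\lambda}_n\xi_3}+\omega_1)^2+\xi_4$ with $\omega_1\sim\mathcal{N}(0,1)$, $\xi_4\sim\chi^2_{k-1}$, mutually independent and independent of $\xi_2$; and since $\tilde{\xi}_1/\xi_2=(\mathbf{1}^\prime\hat{\mathbf{\Sigma}}_n^{+}\mathbf{1})(\hat{\tilde{\mathbf{w}}}_n^*-\tilde{\mathbf{r}}^*)^\prime(\hat{\tilde{\mathbf{Q}}}_n^*)^{-1}(\hat{\tilde{\mathbf{w}}}_n^*-\tilde{\mathbf{r}}^*)$, multiplying by $(n-q)/k$ gives $\tilde{T}_n\stackrel{d}{=}\frac{n-q}{k}\,\big[(\sqrt{\tilde{\lambda}_n\xi_3}+\omega_1)^2+\xi_4\big]/\xi_2$, which is the asserted stochastic representation.
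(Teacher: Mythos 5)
Your proposal is correct and follows essentially the same route as the paper: form $\mathbf{M}'=(\mathbf{L}',\mathbf{1})$, invoke Lemma~\ref{lem2} to conclude that $(n-1)(\mathbf{M}\hat{\mathbf{\Sigma}}_n^{+}\mathbf{M}')^{-1}$ is a non-singular Wishart with $n-q+k$ degrees of freedom, and then repeat the block/Schur-complement and non-central $\chi^2$ argument of Lemma~\ref{lem0} with the degrees of freedom $(n-q,\,n-q+k,\,k-1)$ in place of $(n-p,\,n-1,\,p-2)$. The intermediate facts you spell out (independence of $\hat{\mathbf{H}}_{22}$ from $(\hat{\tilde{\mathbf{w}}}_n^*,\hat{\tilde{\mathbf{Q}}}_n^*)$, $(n-1)(\hat{\tilde{\mathbf{Q}}}_n^*)^{-1}\sim W_k(n-q+k,(\tilde{\mathbf{Q}}^*)^{-1})$, $\xi_2\sim\chi^2_{n-q}$, and $\tilde{\lambda}_n(\hat{\tilde{\mathbf{Q}}}_n^*)\stackrel{d}{=}\tilde{\lambda}_n\xi_3$) are exactly the steps the paper leaves implicit under ``following the proof of Lemma~\ref{lem0}''.
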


\begin{proof}\textit{of Lemma \ref{lem3}:}
Let $\mathbf{M}^\prime = (\mathbf{L}^\prime, \mathbf{1})$ and define
$$\mathbf{M} \mathbf{\Sigma}^{+} \mathbf{M}^\prime = \{\mathbf{H}_{ij}\}_{i,j=1,2}, \quad \mathbf{M} \, \hat{\mathbf{\Sigma}}^{+}_n \,
\mathbf{M}^{\prime} = \{\hat{\mathbf{H}}_{ij}\}_{i,j=1,2}$$
with
$\mathbf{{H}}_{22} = \mathbf{1}^\prime \mathbf{\Sigma}^{+}\mathbf{1}$,
$\hat{\mathbf{H}}_{22} = \mathbf{1}^\prime \hat{\mathbf{\Sigma}}^{+}_n \mathbf{1}$,
$\mathbf{H}_{12}=\mathbf{L} \mathbf{\Sigma}^{+} \mathbf{1}$, $\hat{\mathbf{H}}_{12}=\mathbf{L}
\hat{\mathbf{\Sigma}}^{+}_n\mathbf{1}$, $\mathbf{H}_{11} = \mathbf{L} \mathbf{\Sigma}^{+} \mathbf{L}^\prime$,
and $\mathbf{\hat{H}}_{11} = \mathbf{L} \hat{\mathbf{\Sigma}}^{+}_n \mathbf{L}^\prime$.

Since $(n-1)\hat{\mathbf{\Sigma}} \sim W_p(n-1, \mathbf{\Sigma})$ and $rank(\mathbf{M})=k+1\le q$, the application of Lemma \ref{lem2} leads to
\[ (n-1) ( \mathbf{M} \hat{\mathbf{\Sigma}}^{+}_n \mathbf{M}^\prime )^{-1} \sim W_{k+1}(n-q+k, ( \mathbf{M} \,
\mathbf{\Sigma}^{+} \, \mathbf{M}^\prime )^{-1} ), \]
and, consequently, $(n-1)^{-1} \mathbf{M} \hat{\mathbf{\Sigma}}^{+}_n \mathbf{M}^\prime$ has a non-singular Wishart distribution given by
\[ (n-1)^{-1} \mathbf{M} \hat{\mathbf{\Sigma}}^{+}_n \mathbf{M}^\prime \sim W_{k+1}^{-1}(n-q+2k+2, \mathbf{M} \,
\mathbf{\Sigma}^{+} \, \mathbf{M}^\prime ). \]

\vspace{0.15cm}
Let
\begin{eqnarray*}
\xi_1 &=& (n-1) \left(\mathbf{1}^\prime \mathbf{\Sigma}^{+}\mathbf{1}\right)
\left(\hat{\tilde{\mathbf{w}}}^*_n-\tilde{\mathbf{r}}^*\right)^\prime (\hat{\tilde{\mathbf{Q}}}^*_n)^{-1} \left(\hat{\tilde{\mathbf{w}}}^*_n-\tilde{\mathbf{r}}^*\right),\\
\xi_2 &=& (n-1) \frac{\mathbf{1}^\prime \mathbf{\Sigma}^{+}\mathbf{1}}{\mathbf{1}^\prime \hat{\mathbf{\Sigma}}^{+}_n\mathbf{1}},
\end{eqnarray*}
where $\hat{\tilde{\mathbf{w}}}^*_n$ and $\hat{\tilde{\mathbf{Q}}}^*_n$ are defined in Section III.C.

Since $(n-1)^{-1} \mathbf{M} \hat{\mathbf{\Sigma}}^{+}_n \mathbf{M}^\prime$ has a non-singular Wishart distribution, following the proof of Lemma \ref{lem0}, we get that $\xi_1$ and $\xi_2$ are independent, $\xi_2\sim \chi^2_{n-q}$, and
\begin{eqnarray}
\xi_1|(n-1)^{-1}\hat{\tilde{\mathbf{Q}}}^*_n&\sim& \chi^2_{k,\tilde{\lambda}_n(\hat{\tilde{\mathbf{Q}}}^*_n)}\nonumber\,,
\end{eqnarray}
with
\begin{eqnarray*}
\tilde{\lambda}_n(\hat{\tilde{\mathbf{Q}}}^*_n)&=&(n-1) \left(\mathbf{1}^\prime \mathbf{\Sigma}^{+}\mathbf{1}\right)\\
&\times&\left(\tilde{\mathbf{w}}^*_{GMVP}-\tilde{\mathbf{r}}^*\right)^\prime (\hat{\tilde{\mathbf{Q}}}^*_n)^{-1} \left(\tilde{\mathbf{w}}^*_{GMVP}-\tilde{\mathbf{r}}^*\right)
\end{eqnarray*}
and $\tilde{\lambda}_n(\hat{\tilde{\mathbf{Q}}}^*_n)\stackrel{d}{=} \lambda_n \xi_3$ where $\xi_3\sim \chi^2_{n-q+k}$ and
\begin{eqnarray*}
\tilde{\lambda}_n=\left(\mathbf{1}^\prime \mathbf{\Sigma}^{+}\mathbf{1}\right)
\left(\tilde{\mathbf{w}}^*_{GMVP}-\tilde{\mathbf{r}}^*\right)^\prime (\tilde{\mathbf{Q}}^*)^{-1} \left(\tilde{\mathbf{w}}^*_{GMVP}-\tilde{\mathbf{r}}^*\right).
\end{eqnarray*}

Hence,
\begin{equation*}
\tilde{T}_n =\frac{n-q}{k} \frac{\xi_1}{\xi_2}\stackrel{d}{=} \frac{n-q}{k} \frac{(\sqrt{\tilde{\lambda}_n \xi_3}+\omega_1)^2+\xi_4}{\xi_2},
\end{equation*}
where $\omega_1\sim \mathcal{N}(0,1)$, $\xi_2\sim \chi^2_{n-q}$, $\xi_3\sim \chi^2_{n-q+k}$, and $\xi_4\sim \chi^2_{k-1}$; $\omega_1$, $\xi_2$, $\xi_3$, and $\xi_4$ are independent.
\end{proof}

\begin{proof}\textit{of Theorem \ref{th3}:}
Applying \eqref{stoch_pres_Tn_sing} of Lemma \ref{lem3}, we get
{\footnotesize
\begin{eqnarray*}
&&\frac{n-q}{\xi_2}\sqrt{k}\Bigg[\frac{\tilde{\lambda}_n \xi_3+2\sqrt{\tilde{\lambda}_n \xi_3}\omega_1+\omega_1^2+\xi_4}{k}\\
&-&\left(1+\tilde{\lambda}_n\frac{n-q+k}{k}\right)\frac{\xi_2}{n-q}\Bigg]\\
&=&\frac{n-q}{\xi_2}\Bigg[\tilde{\lambda}_n \frac{n-q+k}{k}\sqrt{k}\left(\frac{\xi_3}{n-q+k}-1\right)+\sqrt{k}\left(\frac{\xi_4}{k}-1\right)\\
&-&\left(1+\tilde{\lambda}_n\frac{n-q+k}{k}\right)\sqrt{k}\left(\frac{\xi_2}{n-q}-1\right)\\
&+&2\sqrt{\tilde{\lambda}_n}\sqrt{\frac{\xi_3}{k}}\omega_1+\frac{\omega_1^2}{\sqrt{k}}\Bigg]\,.
\end{eqnarray*}
}
Using the asymptotic properties of a $\chi^2$-distribution with infinite degrees of freedom and the independence of $\omega_1$, $\xi_2$, $\xi_3$, $\xi_4$, the application of Slutsky's lemma (see, for example, Theorem 1.5 in \citet*{dasgupta2008}) leads to
\begin{equation*}
\sqrt{k}\left(\frac{\tilde{T}_n-1-\tilde{\lambda}_n\frac{n-q+k}{k}}{\tilde{C}_n}\right) \stackrel{d}{\to} \mathcal{N}\left(0,1\right),
\end{equation*}
where
	\begin{eqnarray*}
\tilde{C}^2_n&=&2+2\frac{(1-\tilde{c}+\tilde{b})\tilde{\lambda}_n^2}{\tilde{b}}+4\frac{(1-\tilde{c}+\tilde{b})\tilde{\lambda}_n}{\tilde{b}}\\
&+&2\frac{\tilde{b}}{1-\tilde{c}}\left(1+\frac{(1-\tilde{c}+\tilde{b})\tilde{\lambda}_n}{\tilde{b}}\right)^2.
\end{eqnarray*}
\end{proof}

In order to proof Proposition \ref{alpha+} we need the following lemma, which is a special case of \citet[Theorem 1]{rubio2011}.
\begin{lemma}\label{lemmaRM}
Let a nonrandom $q\times q$-dimensional matrix $\mathbf{\Theta}_q$ possesses a uniformly bounded trace norm (sum of singular values) and let $\mathbf{S}_N\sim W(N, \bI_q)$. Then it holds that
\begin{equation*}
\left|\text{tr}\left(\mathbf{\Theta}_q(\mathbf{S}_N-z\bI_q)^{-1}\right)-(x(z)-z)^{-1}\text{tr}\left(\mathbf{\Theta}_q\right)\right|\stackrel{a.s.}{\longrightarrow}0\,
\end{equation*}
for $q/N\longrightarrow \tilde{c} \in (0, +\infty)$ as $N\rightarrow\infty$, where
\begin{equation}\label{RM2011_id_xz}
x(z)=\dfrac{1}{2}\left(1-\tilde{c}+z+\sqrt{(1-\tilde{c}+z)^2-4z}\right)\,.
\end{equation}
\end{lemma}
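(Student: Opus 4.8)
\noindent\textbf{Proof plan for Lemma \ref{lemmaRM}.}
The statement is a specialization of the general deterministic‑equivalent theorem of \citet[Theorem~1]{rubio2011} to the white Wishart ensemble, so the plan is to recall that result, verify its hypotheses in the present set‑up, and make the resulting deterministic equivalent explicit. Writing $\mathbf{S}_N\stackrel{d}{=}\tfrac{1}{N}\mathbf{Z}\mathbf{Z}^\prime$ with $\mathbf{Z}$ a $q\times N$ matrix of i.i.d.\ standard Gaussian entries --- the normalization under which $W(N,\bI_q)$ has unit mean --- this is exactly the model treated in \citet{rubio2011} with population covariance equal to $\bI_q$ and no additive term. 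Their theorem then yields, for every sequence of deterministic $q\times q$ matrices $\mathbf{\Theta}_q$ of uniformly bounded trace norm and every $z$ off the limiting support,
\begin{equation*}
\mathrm{tr}\!\left(\mathbf{\Theta}_q(\mathbf{S}_N-z\bI_q)^{-1}\right)-\mathrm{tr}\!\left(\mathbf{\Theta}_q\,\bar{\mathbf{Q}}_N(z)\right)\stackrel{a.s.}{\longrightarrow}0,
\end{equation*}
where $\bar{\mathbf{Q}}_N(z)$ is a deterministic matrix characterized by a scalar fixed‑point equation. I would then check that the hypotheses transfer: the Gaussian entries have all moments finite (hence satisfy the required Lindeberg‑type condition), the population covariance $\bI_q$ trivially has spectral norm bounded and bounded away from zero, $\mathbf{\Theta}_q$ has uniformly bounded trace norm by assumption, and $q/N\to\tilde c\in(0,\infty)$.

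Next I would make $\bar{\mathbf{Q}}_N(z)$ explicit. Since the population covariance is $\bI_q$, the fixed‑point system is rotation invariant, so $\bar{\mathbf{Q}}_N(z)=\delta_N(z)\bI_q$ for a scalar $\delta_N(z)$, and therefore $\mathrm{tr}(\mathbf{\Theta}_q\bar{\mathbf{Q}}_N(z))=\delta_N(z)\,\mathrm{tr}(\mathbf{\Theta}_q)$. In the identity case $\delta_N(z)$ is the Stieltjes transform of the Marchenko--Pastur law with ratio $\tilde c_N=q/N$, classically characterized by $\delta_N(z)=\bigl(1-\tilde c_N-z-\tilde c_N z\,\delta_N(z)\bigr)^{-1}$; equivalently, putting $x_N(z):=z+1/\delta_N(z)$ one obtains the quadratic $x_N(z)^2-(z+1-\tilde c_N)x_N(z)+z=0$, whose admissible solution is \eqref{RM2011_id_xz} with $\tilde c$ replaced by $\tilde c_N$, the branch of the square root being the one that makes $\delta_N(z)=(x_N(z)-z)^{-1}$ the Stieltjes transform of a nonnegative measure (equivalently, $z\,\delta_N(z)\to-1$ as $|z|\to\infty$ away from $\mathbb{R}_+$, and $\delta_N(0)=1/(1-\tilde c_N)$ when $\tilde c_N<1$). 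Passing to the limit $\tilde c_N\to\tilde c$ gives $\bar{\mathbf{Q}}_N(z)\to(x(z)-z)^{-1}\bI_q$ with $x(z)$ as in \eqref{RM2011_id_xz}, and combining with the first display yields the claim.

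I expect the only step requiring genuine care, as opposed to bookkeeping, to be the identification of the scalar deterministic equivalent: solving the fixed‑point equation in closed form and, above all, selecting the correct branch of the square root so that $(x(z)-z)^{-1}$ is indeed the Marchenko--Pastur Stieltjes transform on the relevant domain. By contrast, checking that the Gaussian model and the bounded‑trace‑norm assumption on $\mathbf{\Theta}_q$ meet the hypotheses of \citet[Theorem~1]{rubio2011} is routine, and no separate concentration argument is needed since the almost‑sure convergence is already supplied by that theorem.
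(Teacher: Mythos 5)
Your proposal is correct and follows essentially the same route as the paper, which offers no separate proof but simply notes that the lemma is a special case of \citet[Theorem 1]{rubio2011}; your verification of the hypotheses, the reading of $\mathbf{S}_N$ as the unit-mean (normalized) Wishart matrix, and the explicit solution of the Marchenko--Pastur fixed point $\delta_N(z)=(1-\tilde c_N-z-\tilde c_N z\,\delta_N(z))^{-1}$ leading to $x(z)$ in \eqref{RM2011_id_xz} with the correct branch all check out.
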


\begin{proof}\textit{of Proposition \ref{alpha+}:}
  The proof is similar to the proof of Theorem 2.1 by \citet*{bodnar2014estimation} with a few important modifications due to the singularity of $\bSigma_n$. Indeed, taking into account the equality \eqref{stoch_singular} we have
  \begin{eqnarray*}
    (n-1)\hat{\mathbf{\Sigma}}_n\overset{d}{=}\mathbf{Q} \mathbf{\Lambda}^{1/2}\mathbf{Z}_n\mathbf{Z}_n' \mathbf{\Lambda}^{1/2} \mathbf{Q}' \,,
  \end{eqnarray*}
  where the Wishart matrix $\mathbf{Z}_n\mathbf{Z}_n'\sim \mathcal{W}_q(n-1,\mathbf{I}_q)$ is nonsingular. Using now the properties of the Moore-Penrose inverse and \eqref{pth1_eq3} we get
  \begin{eqnarray}\label{S+}
    \hat{\mathbf{\Sigma}}^+_n&\overset{d}{=}&\mathbf{Q} \mathbf{\Lambda}^{-1/2}\left(\frac{1}{n-1}\mathbf{Z}_n\mathbf{Z}_n'\right)^{-1} \mathbf{\Lambda}^{-1/2} \mathbf{Q}'\nonumber\\
    &=&\tilde{\mathbf{\Sigma}}^{-1/2}_n\left(\frac{1}{n-1}\mathbf{Z}_n\mathbf{Z}_n'\right)^{-1}\tilde{\mathbf{\Sigma}}^{-1/2\;'}_n\,.
  \end{eqnarray}
  Moreover,  note the following identities
  \begin{equation}
    \label{eq:MP}
\bSigma_n=\tilde{\mathbf{\Sigma}}^{1/2}_n\tilde{\mathbf{\Sigma}}^{1/2\;'}_n~\text{and}~  \mathbf{\Sigma}^+_n=\tilde{\mathbf{\Sigma}}^{-1/2}_n\tilde{\mathbf{\Sigma}}^{-1/2\;'}_n.
  \end{equation}
 Recall the optimal shrinkage intensity expressed as
 \begin{eqnarray}\label{alfa_app}
  \hat{\alpha}_n^+&=& \dfrac{\mathbf{b}_n^{\prime}\bSigma_n\mathbf{b}_n-\dfrac{\bi^\prime\bS_n^{+}\bSigma_n\mathbf{b}_n}{\bi^\prime\bS_n^{+}\bi}}
 {\dfrac{\bi^\prime\bS_n^{+}\bSigma_n\bS_n^{+}\bi}{(\bi^\prime\bS_n^{+}\bi)^2}-2\dfrac{\bi^\prime\bS_n^{+}\bSigma_n\mathbf{b}_n}
 {\bi^\prime\bS_n^{+}\bi}+\mathbf{b}_n^{\prime}\bSigma_n\mathbf{b}_n}\,.
\end{eqnarray}
 Due to \eqref{S+} and \eqref{eq:MP} it holds that for all $z\in\mathbbm{C}^+=\{\tilde{z}\in\mathbbm{C}: \Im(\tilde{z})>0\}$
{\small
 \begin{eqnarray}
\bi^\prime\bS_n^{+}\bi&=&\left.\text{tr}\left[\left(\dfrac{1}{n-1}\bx_n\bx^\prime_n-z\bI\right)^{-1}
\mathbf{\Theta}_{\xi}\right]\right|_{z=0}
\label{th31_eq11}\\
\bi^\prime\bS_n^{+}\bSigma_n\mathbf{b}_n
&=&\left.\text{tr}\left[\left(\dfrac{1}{n-1}\bx_n\bx^\prime_n-z\bI\right)^{-1}
\mathbf{\Theta}_{\zeta}\right]\right|_{z=0}
\label{th31_eq21}\\
\bi^\prime\bS_n^{+}\bSigma_n\bS_n^{+}\bi
&=&\left.\dfrac{\partial}{\partial z}\text{tr}\left[\left(\frac{1}{n-1}\bx_n\bx^\prime_n-z\bI\right)^{-1} \mathbf{\Theta}_{\xi}\right]\right|_{z=0}
\label{th31_eq31}\,,
\end{eqnarray}}
with $\mathbf{\Theta}_{\xi}=\tilde{\bSigma}_n^{-\frac{1}{2}\;'}\bi\bi^\prime\tilde{\bSigma}_n^{-\frac{1}{2}}$ and $\mathbf{\Theta}_{\zeta}=\tilde{\bSigma}_n^{\frac{1}{2}\;'}\mathbf{b}_n\bi^\prime\tilde{\bSigma}_n^{-\frac{1}{2}}$. The symbol $\left.\cdot\right|_{z=0}$ stays for the limit $z\to0$.

Let
\begin{eqnarray*}
\xi_n(z)&=&\text{tr}\left[\left(\dfrac{1}{n-1}\bx_n\bx^\prime_n-z\bI\right)^{-1}\mathbf{\Theta}_{\xi}\right],\\
\zeta_n(z)&=&\text{tr}\left[\left(\dfrac{1}{n-1}\bx_n\bx^\prime_n-z\bI\right)^{-1}\mathbf{\Theta}_{\zeta}\right]\,.
\end{eqnarray*}
where both matrices $\mathbf{\Theta}_{\xi} $ and $\mathbf{\Theta}_{\zeta}$ possess a bounded trace norm since
\begin{eqnarray*}
\|\mathbf{\Theta}_{\xi}\|_{tr}
&=&\bi^\prime \bSigma^{+}_n\bi\le M_l^{-1}~~\text{and}\\
\|\mathbf{\Theta}_{\zeta}\|_{tr}
&=&\sqrt{\bi^\prime \bSigma^{+}_n\bi}\sqrt{\mathbf{b}_n^\prime \bSigma_n\mathbf{b}_n}\le \sqrt{\dfrac{M_u}{M_l}} \,.
\end{eqnarray*}
Then, for all $z\in\mathbbm{C}^+$,  we get from Lemma \ref{lemmaRM}
\begin{eqnarray}\label{th31_eq41}
  &&|\xi_n(z)-(x(z)-z)^{-1}\text{tr}\left[\mathbf{\Theta}_{\xi}\right]|\nonumber\\
  &=&|\xi_n(z)-(x(z)-z)^{-1}\bi^\prime \bSigma^{+}\bi| \stackrel{a.s.}{\longrightarrow}0\\[0.3cm]
  &&|\zeta_n(z)-(x(z)-z)^{-1}\text{tr}\left[\mathbf{\Theta}_{\zeta}\right]|\nonumber\\
  &=&\left|\zeta_n(z)-(x(z)-z)^{-1}\right|\stackrel{a.s.}{\longrightarrow}0\label{th31_eq5}\,
\end{eqnarray}
for  $q/n\rightarrow \tilde{c} \in (0, 1)$ as $n\rightarrow\infty$,
where $x(z)$ is given in (\ref{RM2011_id_xz}). Using that $\lim\limits_{z\rightarrow0^+}(x(z)-z)^{-1}=(1-\tilde{c})^{-1}$ and combining (\ref{th31_eq41}) and (\ref{th31_eq5}) with (\ref{th31_eq11}) and (\ref{th31_eq21}) leads to
\begin{equation}\label{th31_eq6}
|\bi^\prime\bS_n^{+}\bi-(1-\tilde{c})^{-1} \bi^\prime \bSigma^{+}_n\bi| \stackrel{a.s.}{\longrightarrow}0,
\end{equation}
\begin{equation}\label{th31_eq7}
\left|\bi^\prime\bS_n^{+}\bSigma_n\mathbf{b}_n-(1-\tilde{c})^{-1}\right|\stackrel{a.s.}{\longrightarrow}0
\end{equation}
for $q/n\rightarrow \tilde{c}\in (0, 1)$ as $n\rightarrow\infty$.
Finally, using the equality
\begin{equation*}
\left.\dfrac{\partial}{\partial z}\dfrac{1}{ x(z)-z}\right|_{z=0}=-\left.\dfrac{x^\prime(z)-1}{ (x(z)-z)^2}\right|_{z=0}
= \dfrac{1}{(1-\tilde{c})^3}\,,
\end{equation*}
we get
\begin{eqnarray*}
  &&  \left|\xi_n^\prime(0)-\left.\dfrac{\partial}{\partial z}(x(z)-z)^{-1}\right|_{z=0}\text{tr}\left[\mathbf{\Theta}_{\xi}\right]\right|\nonumber\\
  &=&|\xi^\prime_n(0)-(1-\tilde{c})^{-3}\bi^\prime \bSigma^{+}_n\bi| \stackrel{a.s.}{\longrightarrow}0
\end{eqnarray*}
for $q/n\rightarrow \tilde{c}\in (0, 1)$ as $n\rightarrow\infty$. As a result,
\begin{equation}\label{th31_eq9}
|\bi^\prime\bS_n^{+}\bSigma_n\bS_n^{-1}\bi-(1-\tilde{c})^{-3}\bi^\prime \bSigma^{+}_n\bi| \stackrel{a.s.}{\longrightarrow}0\,
\end{equation}
for $q/n\rightarrow \tilde{c} \in(0, 1)$ as $n\rightarrow\infty$.
At last, the application of (\ref{th31_eq6}), \eqref{th31_eq7} and (\ref{th31_eq9}) to \eqref{alfa_app} implies the result of Proposition \ref{alpha+}.
\end{proof}

\begin{proof}\textit{of Theorem \ref{th4}:}
  Using the proof of Proposition \ref{alpha+} we can immediately deduce that Lemma \ref{lem1} also holds in the case of singular covariance matrix $\bSigma_n$ with the only exception that the usual matrix inverse must be replaced by the Moore-Penrose inverse and $p$ must be replaced by $q$, i.e., $c_n$ becomes $\tilde{c}_n$. That is why the proof of Theorem \ref{th2} can be applied step by step again without any further changes.
\end{proof}

\end{document}